\newtheorem{theorem}{Theorem}[section]
\newtheorem{corollary}[theorem]{Corollary}
\newtheorem{definition}[theorem]{Definition}
\newtheorem{lemma}[theorem]{Lemma}
\newtheorem{proposition}[theorem]{Proposition}
\newtheorem{observation}[theorem]{Observation}
\newtheorem{remark}[theorem]{Remark}
\DeclareMathOperator*{\argmin}{arg\,min}
\DeclarePairedDelimiter\ceil{\lceil}{\rceil}
\newenvironment{proof}[1][Proof]{\textbf{#1.} }{\ \rule{0.5em}{0.5em}}
\begin{document}

\title{Simple and Optimal Online Contention Resolution Schemes for $k$-Uniform Matroids} 
\author{
  Atanas Dinev\\
  Massachusetts Institute of Technology, Cambridge MA, USA
  \and
  S. Matthew Weinberg\\
  Princeton University, Princeton, USA\\
}

\date{\today}
\maketitle
\begin{abstract}

We provide a simple $(1-O(\frac{1}{\sqrt{k}}))$-selectable Online Contention Resolution Scheme for $k$-uniform matroids against a fixed-order adversary. If $A_i$ and $G_i$ denote the set of selected elements and the set of realized active elements among the first $i$ (respectively), our algorithm selects with probability $1-\frac{1}{\sqrt{k}}$ any active element $i$ such that $|A_{i-1}| + 1 \leq (1-\frac{1}{\sqrt{k}})\cdot \mathbb{E}[|G_i|]+\sqrt{k}$. This implies a $(1-O(\frac{1}{\sqrt{k}}))$ prophet inequality against fixed-order adversaries for $k$-uniform matroids that is considerably simpler than previous algorithms~\cite{Alaei11, AKW14, jmz22}.

We also prove that no OCRS can be $(1-\Omega(\sqrt{\frac{\log k}{k}}))$-selectable for $k$-uniform matroids against an almighty adversary. This guarantee is matched by the (known) simple greedy algorithm that accepts every active element with probability $1-\Theta(\sqrt{\frac{\log k}{k}})$~\cite{hks07}.

\end{abstract}
\addtocounter{page}{-1}
\newpage
\section{Introduction}
\noindent\textbf{Background: OCRSs.} Online contention resolution schemes (OCRS) are a broadly applicable rounding technique for online selection problems~\cite{fsz16}. These are problems in which an algorithm makes irrevocable decisions for whether to select elements arriving online, often subject to combinatorial constraints. Offline, the algorithm knows a distribution over which elements will be ``active.'' Online, elements are revealed to be active or inactive one at a time, and the algorithm must immediately and irrevocably decide whether to accept an active element (inactive elements must be rejected). There are feasibility constraints $\mathcal{F}$ on which elements can be simultaneously accepted. An OCRS for a class of instances is $c$-selectable if it guarantees that every element is selected with probability at least $c$, conditioned on being active. See Definition~\ref{ocrs_def} for a formal definition.

Online contention resolution schemes have direct applications to prophet inequalities~\cite{fsz16}. In a prophet inequality, a gambler knows the distribution of a sequence of independent random variables $X_1,\ldots, X_n$. Online, each random variable will be sampled one at a time and revealed to the gambler, who immediately and irrevocably decides whether to accept the element, subject to feasibility constraints $\mathcal{F}$. The gambler's goal is to maximize the expected sum of weights of accepted elements, and a prophet inequality compares the ratio of the gambler's expected performance to that of a prophet (who knows all random variables before making decisions). Seminal work of Krengel, Sucheston, and Garling establishes a tight $1/2$-approximation for the single-choice prophet inequality, and seminal work of Samuel-Cahn shows that the same result can be achieved with an especially simple thresholding algorithm~\cite{KrengelS78,Samuel-Cahn84}.

In their work introducing OCRSs, Feldman, Svensson, and Zenklusen prove that a $c$-selectable OCRS implies a $c$-approximation for the corresponding prophet inequality~\cite{fsz16}. In fact, a $c$-selectable OCRS provides a $c$-approximation even to the ex ante relaxation, and Lee and Singla show that OCRSs are \emph{equivalent} to ex ante prophet inequalities~\cite{ls18}.\\

\noindent\textbf{$k$-Uniform Matroids.} $k$-uniform matroids are a canonical set of feasibility constraints: any set of up to $k$ elements can be selected. Here, a $(1-O(\sqrt{\frac{\log k}{k}}))$-approximate prophet inequality (whose analysis implicitly extends to a $(1-O(\sqrt{\frac{\log k}{k}}))$-selectable OCRS) is first developed in~\cite{hks07}.~\cite{Alaei11} later develops a $(1-O(\frac{1}{\sqrt{k}}))$-approximation (which also implies a $(1-O(\frac{1}{\sqrt{k}}))$-selectable OCRS), which is tight.~\cite{AKW14} further shows how to achieve the same $(1-O(\frac{1}{\sqrt{k}}))$-approximation using a single sample from each distribution (but their work does not imply any OCRS).~\cite{jmz22} tightens the analysis of~\cite{Alaei11} to nail down exactly the optimal achievable prophet inequality for all $k$ (and this same analysis applies to the implied OCRS). We overview in Section~\ref{our_results_prior_work} these results in more detail, and in particular clarify against what kind of adversary (who selects the order in which the elements are revealed) the guarantees hold.

While the optimal competitive ratio has been known for a decade, and recently tightened to even nail down the precise constants, these algorithms are significantly more complex than Samuel-Cahn's elegant algorithm for the single-choice prophet inequality. For example,~\cite{Alaei11,jmz22} both require to solve and analyze a mathematical program in order to accept elements with precisely the correct probability. The rehearsal algorithm of~\cite{AKW14} is perhaps simpler, but still requires several lines of pseudocode, and some care with minor details.\\ 

\noindent\textbf{Main Result: A Simple, Optimal OCRS.} Our main result is a significantly simpler OCRS/prophet inequality for $k$-uniform matroids that still achieves the optimal guarantee of $1-O(\frac{1}{\sqrt{k}})$. Of course, it is still not nearly as simple as Samuel-Cahn's single-choice prophet inequality, but a full description fits in two sentences, and the complete analysis is just a few pages.\footnote{The proof does require connecting our algorithm to a random walk, and then analyzing properties of that random walk. But, the proof requires minimal calculations.} Our OCRS simply denotes by $A_i$ the set of elements it has selected amongst the first $i$ and by $G_i$ the set of realized active elements amongst the first $i$. Then, when processing element $i$, we select $i$ with probability $1-\frac{1}{\sqrt{k}}$ if and only if $i$ is active and $|A_{i-1}| + 1 \leq (1-\frac{1}{\sqrt{k}})\cdot \mathbb{E}[|G_i|] + \sqrt{k}$ (otherwise, we discard). Intuitively, our OCRS selects an element if, so far, the number of selected elements does not exceed the expected number of active elements by too much. To turn our OCRS into a prophet inequality, simply let $T$ denote the unique value such that $\sum_i \Pr[X_i > T] =k$.\footnote{If the distributions have point-masses, smooth them out by adding a uniformly random draw from $[0,\varepsilon]$ for arbitrarily small $\varepsilon$.} Then, declare $X_i$ to be active if and only if $X_i > T$ and plug this into our OCRS. Compared to prior optimal algroithms for the same setting, our algorithm has the advantage that it is very simple to implement since it does not require solving a complicated dynamic/linear program. (See Section \ref{our_results_prior_work}) We state our algorithm precisely, and prove that it is $(1-O(\frac{1}{\sqrt{k}}))$-selectable against a fixed-order adversary in Section~\ref{simple_ocrs_section}.\footnote{A fixed-order adversary sets the order to reveal the elements offline, and based only on the distributions.}  \\

\noindent\textbf{An Impossibility for OCRS against almighty adversary.} While the fixed-order adversary is standard in the prophet inequality literature, it is also important to explore the extent to which these same guarantees can hold against an almighty adversary.\footnote{An almighty adversary sets the order to reveal online, and with full knowledge of all random variables and all past decisions of the algorithm.} For example, the prophet inequality of~\cite{AKW14} holds against an almighty adversary, but does not imply an OCRS. Our second result shows that this is for good reason: no OCRS can guarantee a selectability better than $1-\Omega(\sqrt{\frac{\log k}{k}})$ against an almighty adversary. We state and prove this in Section~\ref{opt_guarantee_against_almigthy}.

\subsection{Detailed Discussion of Related Work}\label{our_results_prior_work}

As previously referenced, prior to our work it is already known that the optimal selectability for OCRS and the optimal prophet inequality against a fixed-order adversary is $1-\Theta(\frac{1}{\sqrt{k}})$ (\cite{hks07} proves the impossibility, and~\cite{Alaei11} designs the first algorithm matching it).~\cite{AKW14} designs a prophet inequality that achieves the same $(1-O(\frac{1}{\sqrt{k}}))$-approximation against an almighty adversary, but this does not imply an OCRS. The analysis in~\cite{hks07} implies an extremely simple OCRS (accept every active element with probability $1-\Theta(\sqrt{\frac{\log k}{k}})$) that is $(1-\Theta(\sqrt{\frac{\log k}{k}}))$-selectable against an almighty adversary. We show that this is the best possible guarantee (Theorem~\ref{thm_ocrs_almigthy}). Because we view our main result as a simpler algorithm achieving (asymptotically) the same guarantees as prior work, we now overview these works in greater detail.\\

\noindent \textbf{The $\gamma$-Conservative magician in \cite{Alaei11}.} As previously mentioned, \cite{Alaei11} implies an optimal $(1-O(\frac{1}{\sqrt{k}}))$-selectable OCRS against a fixed-order adversary.~\cite[Definition 2]{Alaei11} describes a $\gamma$-Conservative Magician, which is an algorithm that adaptively computes thresholds $\theta_i$ and accepts an active element on step $i$ if and only if the number of selected elements (or broken wands in the terminology of \cite{Alaei11}) $W_i$ in steps $1, \ldots, i-1$ is less than $\theta_i$. The cumulative distribution function of $W_i$ is computed adaptively at every step through dynamic programming equation. Once the CDF of $W_i$ has been computed, $\theta_i$ is chosen so that the ex-ante probability that $W_i \leq \theta_i$ is at least $\gamma$.~\cite{Alaei11} shows that one can choose $\gamma = 1-\frac{1}{\sqrt{k+3}}$ and $\theta_i \leq k$ for all $i$, which effectively guarantees an OCRS against a fixed-order adversary that is $1-\frac{1}{\sqrt{k+3}}$-selectable~\cite[Theorem 4]{Alaei11}. In comparison to~\cite{Alaei11}, our main result achieves the same asymptotic guarantee, but is considerably simpler (in particular, the analysis requires minimal calculations, and there is no dynamic program). \\

\noindent \textbf{Characterization of the optimal OCRS and prophet inequality for $k$-uniform matroids in \cite{jmz22}.} \cite{jmz22} studies the optimal OCRS for $k$-uniform matroids (i.e. with optimal selection probability $c$). They characterize the optimal OCRS for $k$-uniform matroids as the solution to a linear program. Then, using a differential equation, they show that this optimal solution corresponds to a $\gamma^{*}_k$-Conservative magician, where $\gamma^{*}_k > 1-\frac{1}{\sqrt{k+3}}$. \cite{jmz22} extend their OCRS guarantees against an online adversary \footnote{An online adversary adaptively decides which elements to reveal next, based on which elements were active. But the online adversary does not know the status of the unrevealed elements (i.e. it has the same information as the algorithm)}.

In comparison to~\cite{jmz22}, our main result achieves the same $1-O(\frac{1}{\sqrt{k}})$ asymptotic guarantee against a fixed-order adversary, but is, again, considerably simpler and does not require solving any mathematical program. Although our analysis holds against the weaker fixed-order adversary, this assumption is sufficient for many popular applications of OCRS and prophet inequalities in online stochastic optimization. \\

\noindent \textbf{The Rehearsal algorithm in \cite{AKW14}.} As previously discussed,~\cite{AKW14} gives an optimal $(1-O(\frac{1}{\sqrt{k}}))$-approximation prophet inequality against an almighty adversary, and even when knowing only a single sample from each distribution, using the \textit{rehearsal algorithm}. Their rehearsal algorithm takes a sample from each distribution, and stores the $k-2\sqrt{k}$ highest samples at $T_1,\ldots, T_{k-2\sqrt{k}}$, then repeats $T_i:=T_{k-2\sqrt{k}}$ for all $i \in [k-2\sqrt{k},k]$. When processing the online element $X_e$, $e$ is accepted if and only if there is an unfilled slot $i$ with $X_e > T_i$. If $X_e$ is accepted, it fills the highest such slot (the slot with the highest threshold). 

Their analysis does not imply an OCRS (indeed, it is not even clear what it would mean to set the thresholds $T_i$ in an OCRS). But, their analysis does hold against an almighty adversary. In comparison, our prophet inequality is simpler, and implies an OCRS. But, our algorithm requires some knowledge of the distributions, rather than just a single sample.\footnote{Turning our OCRS into a prophet inequality requires a value $T$ such that $\sum_i \Pr[X_i > T] \approx k$, and an accurate estimate of $\sum_{j \leq i}\Pr[X_j > T]$ for all $i$. Estimates up to an additive $\sqrt{k}$ with high probability suffice. This can certainly be achieved with polynomially-many samples, but not a single sample.} 

Our analyses have similar flavors: both works connect our algorithms' performance to a random walk. These random walks are quite different (for example, the random walk in~\cite{AKW14} is correlated, and ours is not. Our random walk has non-integral step sizes, while theirs does not), and are used to analyze different algorithms. While there are some coincidental similarities (for example, our Lemma~\ref{lemma_martingale_bounds} is a generalization of their Lemma~10), the core of our proof is simply connecting our algorithm to a random walk, whereas the bulk of their proof is coping with the correlation in their random walk and any associated calculations. \\

\noindent\textbf{Other Related Work.} There is substantial additional work on both prophet inequalities and online contention resolution schemes, subject to various other constraints~\cite{ChawlaHMS10, KleinbergW12, GobelHKSV14, DuttingK15, Rubinstein16, RubinsteinS17, DuettingFKL17,AdamczykW18, GravinW19, AnariNSS19, EzraFGT20, CaramanisDFFLLPPR22, CorreaCFPW22, ssz23, mmg23}. Aside from this, there is not much technical overlap with this works (in particular, a substantial fraction of these works consider richer feasibility constraints, and therefore achieve constant-factor approximations rather than approximations approaching $1$). 

Other works have also considered a special class of static threshold policies for $k$-unit prophet inequalities which set a single threshold and accept any element that exceeds it subject to the feasibility constraint. \cite{hks07} provides a prophet inequality with a static threshold which is a $1-O(\sqrt{\frac{\log(k)}{k}})$ approximation. \cite{cl20} proposes a different static threshold prophet inequality with the same $1-O(\sqrt{\frac{\log(k)}{k}})$ asymptotic guarantee improving the approximation for small $k$. \cite{jmz23} uses a mathematical programming approach to show that the policy in \cite{cl20} is worst-case optimal within all static threshold policies. In contrast to these works, we study the design and limitations of the richer class of adaptive strategies for the more general setting of OCRS.

In the context of \emph{offline} contention resolution schemes (introduced in \cite{cvz14}),~\cite{ks21} shows a simple optimal contention resolution scheme for $k$-uniform matroids, which is 
$(1-\binom{n}{k}(1-\frac{k}{n})^{n+1-k}(\frac{k}{n})^k)$-selectable. Their method is not extendable to the online case because it requires knowing the set of active elements $A$ in advance.

\subsection{Roadmap}
Section \ref{preliminaries_section} follows with preliminaries and definitions. Section \ref{naive_algorithms} is a warmup that rules out optimal selectability via \emph{extremely} simple greedy algorithms. Section \ref{simple_ocrs_section} presents a complete proof of our new $(1-O(\frac{1}{\sqrt{k}}))$-selectable OCRS. Section \ref{opt_guarantee_against_almigthy} contains a proof of the $1-\Omega(\sqrt{\frac{\log(k)}{k}})$ upper bound on the probability of selection of any OCRS against {almighty} adversaries. Section \ref{conclusions_future_work_section} concludes. 

\section{Preliminaries}\label{preliminaries_section}
\subsection{Online contention resolution schemes}

Online contention resolution schemes were first introduced by \cite{fsz16} as a broadly applicable online rounding framework. Suppose we are given a finite ground set of elements $N = \{e_1, \ldots, e_n\}$. Consider a family of feasible sets $\mathcal{F} \subset 2^{N}$ which is downwards-closed (that is, if $I \in \mathcal{F}$ and $J \subseteq I$, then $J \in \mathcal{F}$). Let $$P_{\mathcal{F}} = \text{conv}(\mathbbm{1}_{I}|I \in \mathcal{F}) \in [0,1]^n $$ be the convex hull of all characteristic vectors in $\mathcal{F}$. We will refer to $P_{\mathcal{F}}$ as the polytope corresponding to the family $\mathcal{F}$. 
\begin{definition}[Online contention resolution scheme (OCRS)]\label{ocrs_def}
Consider the following online selection setting. A point $x \in P_{\mathcal{F}}$ is given and let $R(x)$ be a random subset of active elements, where element $e_i$ is active with probability $x_i$. The elements $e \in N$ reveal one by one whether they are active, i.e. $e \in R(x)$, and the decision of the algorithm whether to select an active element is taken irrevocably before the next element is revealed. An OCRS for $P_{\mathcal{F}}$ is an online algorithm that selects a subset $I \subseteq R(x)$ such that $\mathbbm{1}_{I} \in P_{\mathcal{F}}$. 
\end{definition}
Many of the natural OCRS considered in \cite{fsz16} are also greedy. 
\begin{definition}[Greedy OCRS]\label{greedy_ocrs_def}
A greedy OCRS $\pi$ for $P_{\mathcal{F}}$ is an OCRS that for any $x \in P_{\mathcal{F}}$, defines a downwards-closed subfamily of feasible sets $\mathcal{F}_{x} \subseteq \mathcal{F}$ and an element $e$ is selected when it arrives if, together with the already selected elements, the obtained set is in $\mathcal{F}_{x}$
\end{definition}

\indent Our next goal is to define the notion of $c$-selectability, which is a notion of performance of the OCRS. Intuitively, an OCRS is $c$-selectable if for any element $e$ the probability that the OCRS selects $e$ given that it is active is at least $c$, where we desire $c$ to be as large as possible. In order to talk about $c$-selectability in a rigorous way, we need to specify the power of the adversary that chooses the order of the elements revealed to the OCRS in an online fashion (Definition \ref{ocrs_def}). There are three main types of adversaries considered in prior work, which we define below. 

\begin{definition}\label{adv_stength_ocrs} (Strength of adversary) In the setting of Definition \ref{ocrs_def}, there is an underlying adversary which can choose the order in which the elements are revealed to the OCRS. We define three different types of adversaries:\\
\indent (i) \textbf{Offline/Fixed-Order} adversary, which chooses the order of the elements upfront before any are revealed. Such an adversary knows $x$ and the distribution of $R(x)$, but not the realized active elements. \\
\indent (ii) \textbf{Online} adversary, which adaptively chooses next element to reveal using the same information available to the algorithm ($x$, the distribution of $R(x)$, and which elements have been revealed, which were active, and which were selected). \\
\indent (iii) \textbf{Almighty} adversary, which knows upfront the outcomes of all random events, which includes the realization of $R(x)$ and the outcome of the random bits that the OCRS might query. 
\end{definition}
We are now ready to define the notion of $c$-selectability.
\begin{definition}\label{c-selectability_def} ($c$-selectability) Let $c \in [0,1]$. An OCRS for $P$ is $c$-selectable against an adversary $\mathcal{A}$ if for any $x \in P$ and $e \in N$, we have 
$$\Pr[e \text{ is selected by the OCRS against $\mathcal{A}$}| e \text{ is active}] \geq c$$
\end{definition}
It is often true that a larger probability of selection $c$ can be achieved when $x$ is supposed to be in a down-scaled version of $P$. 
\begin{definition}($(b,c)$-selectability). Let $b,c \in [0,1]$. An OCRS for $P$ is $(b,c)$-selectable against an adversary $\mathcal{A}$ if for any $x \in b \cdot P$ and $e \in N$, we have 
$$\Pr[e \text{ is selected by the OCRS against $\mathcal{A}$}| e \text{ is active}] \geq c$$
\end{definition}

An important observation is that a $(b,c)$-selectable OCRS for $b \cdot P$ against $\mathcal{A}$ implies a $bc$-selectable OCRS for $P$ against $\mathcal{A}$.
\begin{observation}\label{(b,c)-sel_implies_bc_sel} (\cite{fsz16})
A $(b,c)$-selectable OCRS for $P$ implies a $bc$-selectable OCRS for $P$.
\end{observation}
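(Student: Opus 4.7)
The plan is to give a reduction that runs the hypothesized $(b,c)$-selectable scheme as a black box on a subsampled version of the active set. Concretely, fix $x \in P$ and any adversary $\mathcal{A}$; I want to construct a $bc$-selectable OCRS $\pi'$ for $x$. Since $bx \in b \cdot P$, there is a $(b,c)$-selectable OCRS $\pi$ that operates correctly when each element $e_i$ is independently active with probability $bx_i$. The idea is to feed $\pi$ an input matching this marginal.

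First I would define $\pi'$: when $e_i$ is revealed by the adversary and is active (under $x$), flip an independent coin $c_i$ that is $1$ with probability $b$; if $c_i = 1$, tell $\pi$ that $e_i$ is ``active,'' otherwise tell $\pi$ that $e_i$ is ``inactive.'' Whenever $e_i$ is inactive under $x$, simply tell $\pi$ that $e_i$ is inactive. Then $\pi'$ selects $e_i$ if and only if $\pi$ selects $e_i$. The set $\pi'$ outputs is a subset of what $\pi$ outputs, hence feasible, so $\pi'$ is a valid OCRS.

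Next I would verify the input to $\pi$ is distributionally correct: each $e_i$ is presented to $\pi$ as ``active'' with probability $x_i \cdot b = (bx)_i$, independently across $i$, which matches $\pi$'s assumption since $bx \in b\cdot P$. Consequently, $\pi$'s guarantee gives $\Pr[\pi \text{ selects } e_i \mid e_i \text{ active for } \pi] \ge c$ against $\mathcal{A}$. Combining,
\[
\Pr[e_i \text{ selected by } \pi' \mid e_i \text{ active}] \;=\; \Pr[c_i = 1] \cdot \Pr[\pi \text{ selects } e_i \mid e_i \text{ active for } \pi] \;\ge\; b \cdot c,
\]
which establishes the desired $bc$-selectability.

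The only step that needs a small amount of care is handling the power of $\mathcal{A}$ uniformly across the three adversary classes. For a fixed-order or online adversary, the coins $c_i$ are internal to $\pi'$ and the adversary interacts with $\pi'$ on the same information it would have with $\pi$, so the reduction is transparent. For an almighty adversary, the coins $c_i$ are part of the algorithm's random bits, which the almighty adversary already sees by definition; but $\pi$'s $c$-selectability is assumed to hold against that same almighty adversary even with full knowledge of its random bits, so applying $\pi$'s guarantee remains valid. This is the main (mild) obstacle; once it is observed, the chain of inequalities above closes the argument.
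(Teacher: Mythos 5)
Your argument is correct and takes exactly the paper's approach: run the given $(b,c)$-selectable scheme on $bx \in b\cdot P$ by independently subsampling each active element with probability $b$, and then factor the selection probability as $b$ times the inner scheme's conditional guarantee. One small imprecision: for an online adversary it is not literally true that ``the adversary interacts with $\pi'$ on the same information it would have with $\pi$'' --- the $\pi'$-adversary observes $x$-activity of revealed elements whereas a $\pi$-adversary observes $bx$-activity, and these information sets are incomparable (an element active under $x$ but with $c_i=0$ looks active to the former and inactive to the latter). The reduction is still valid because a (randomized) $\pi$-online adversary can internally resample the suppressed coins $c_i$ for elements it sees as inactive (with the appropriate conditional bias) and thereby reproduce the $\pi'$-adversary's view exactly, and $c$-selectability against deterministic adversaries extends to randomized ones by linearity; but this step deserves that extra sentence rather than the word ``transparent.''
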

The reduction in Observation \ref{(b,c)-sel_implies_bc_sel} is as follows: The $bc$-selectable OCRS essentially runs the given $(b,c)$-selectable OCRS while scaling down by $b$ each of the probabilities $x_i$ online (i.e. consider selecting an active element independently with probability $b$). For more details see \cite{fsz16}.
\begin{remark}\label{rmk_c_vs_(b,c)}
    It is important to emphasize that a $c$-selectable ORCS for $P$ gives selection guarantees on all $x \in P$, while a $(b,c)$-selectable OCRS for $P$ gives guarantees only when $x \in b \cdot P$ (i.e. a scaled-down version of $P$). 
\end{remark}

\subsection{OCRS and Prophet inequalities}
As previously discussed, one of the many applications of OCRS is to the prophet inequality problem. Here we define the general setting of the prophet inequality problem. We begin with a setup for the environment.\\

\noindent\textbf{General setting and prophet: } We are given a group set $N = \{e_1, \ldots, e_n\}$ and a downwards-closed family $\mathcal{F} \subseteq 2^N$ of feasible subsets. Each of the elements $e_i$ is associated with a value $v_i$. A prophet is an offline algorithm, which sees the vector $(v_1, \ldots, v_n)$ and outputs the feasible set $\text{MAX}(v) = \text{argmax}_{I \in F} \sum_{i \in I} v_i$. We denote by $\text{OPT}(v) = \sum_{i \in \text{MAX}(v)} v_i$ the weight of the maximum set.

\begin{definition}\label{prophet_ineq_def} (Prophet inequality) Suppose we are given a downwards-closed feasibility constraint $\mathcal{F} \subseteq 2^{N}$. Suppose each element $e_i \in N$ takes value $v_i \in \mathbb{R}_{\geq 0}$ independently from some known distribution $\mathcal{D}_i$. These values are presented one-by-one to an online algorithm $\pi$ in an adversarial order (again specified as offline, online, or almighty). On seeing a value the algorithm needs to immediately and irrevocably decide whether to select the next element $e_i$, while always maintaining that the set of selected elements so far is in $\mathcal{F}$. Let's denote the set of selected elements by $\pi$ as $A^{*}(v)$. We say that $\pi$ induces a prophet inequality with competitive ratio $c$ for $\mathcal{F}$ if 
$$\mathbb{E}_{v \sim \mathcal{D}}[ \sum_{i \in A^{*}(v)}  v_i] \geq c \cdot \mathbb{E}_{v \sim \mathcal{D}}[ \text{OPT}(v)]$$
where $\mathcal{D} = \mathcal{D}_1 \times \ldots \times \mathcal{D}_n$ is the product of the independent distributions $\mathcal{D}_i$
\end{definition}

As with OCRS, in order to talk about $\alpha$-approximation, we need to specify the power of the adversary which specifies the order of the elements revealed. Completely analogously to Definition \ref{adv_stength_ocrs}, we could have \textit{offline}, \textit{online}, and \textit{almighty} adversaries. \\
\indent \cite{fsz16} showed that a $c$-selectable OCRS against a particular adversary $\mathcal{A}$ implies a $c$-approximation prophet inequality against an adversary of the same strength. 
\begin{theorem}\label{ocrs_implies_prophet}(\cite{fsz16})
A $c$-selectable OCRS against (offline/online/almighty) adversary implies the existence of a $c$-approximation prophet inequality algorithm. 
\end{theorem}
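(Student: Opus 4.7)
The plan is to use the ex-ante relaxation of the prophet inequality as the point $x \in P_{\mathcal{F}}$ fed into the OCRS, where activity is determined by crossing a suitable threshold. More precisely, let $v = (v_1,\ldots,v_n)$ be the realized values and let $\mathrm{MAX}(v)$ be the prophet's optimal set. Define
\[
x_i^* := \Pr_{v \sim \mathcal{D}}[e_i \in \mathrm{MAX}(v)].
\]
Because $\mathrm{MAX}(v) \in \mathcal{F}$ almost surely, $x^*$ is a convex combination of characteristic vectors of sets in $\mathcal{F}$, hence $x^* \in P_{\mathcal{F}}$. Next, for each $i$ choose a threshold $T_i$ such that $\Pr[v_i \ge T_i] = x_i^*$ (possibly with randomized tie-breaking, or after an $\varepsilon$-perturbation in case of point masses). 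Declare element $i$ to be \emph{active} precisely when $v_i \ge T_i$, so the induced vector of activity probabilities is exactly $x^*$, and feed this instance to the given $c$-selectable OCRS, running it using only the active/inactive bits.

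The first step of the analysis is to lower-bound $\sum_i x_i^* \cdot \mathbb{E}[v_i \mid v_i \ge T_i]$ by $\mathbb{E}[\mathrm{OPT}(v)]$. This follows from a standard monotonicity/coupling observation: for any event $E_i$ with $\Pr[E_i] = x_i^*$, we have $\mathbb{E}[v_i \mid v_i \ge T_i] \ge \mathbb{E}[v_i \mid E_i]$, because conditioning on $v_i \ge T_i$ picks out precisely the top $x_i^*$-mass of $v_i$'s distribution. Applying this with $E_i = \{e_i \in \mathrm{MAX}(v)\}$ and multiplying by $x_i^*$ gives
\[
x_i^* \cdot \mathbb{E}[v_i \mid v_i \ge T_i] \;\ge\; \mathbb{E}[v_i \cdot \mathbbm{1}[e_i \in \mathrm{MAX}(v)]],
\]
and summing over $i$ yields the inequality $\sum_i x_i^* \cdot \mathbb{E}[v_i \mid v_i \ge T_i] \ge \mathbb{E}[\mathrm{OPT}(v)]$.

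The second step uses the $c$-selectability of the OCRS. Let $I$ be the set returned by the OCRS; by construction $\mathbbm{1}_I \in P_{\mathcal{F}}$, so $I \in \mathcal{F}$, hence the algorithm is feasible. The crucial point is that the OCRS's decision on element $i$ depends only on the active/inactive bits of elements (and the OCRS's internal randomness), all of which are independent of $v_i$ conditioned on the event $\{v_i \ge T_i\}$. Therefore
\[
\mathbb{E}[v_i \cdot \mathbbm{1}[e_i \in I]] = \Pr[v_i \ge T_i] \cdot \mathbb{E}[v_i \mid v_i \ge T_i] \cdot \Pr[e_i \in I \mid v_i \ge T_i] \;\ge\; c \cdot x_i^* \cdot \mathbb{E}[v_i \mid v_i \ge T_i],
\]
where the inequality is $c$-selectability. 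Summing over $i$ and chaining with the previous step gives $\mathbb{E}[\sum_{i \in I} v_i] \ge c \cdot \mathbb{E}[\mathrm{OPT}(v)]$, which is the claimed competitive ratio. The reduction preserves the adversary type, because the order chosen by a fixed-order/online/almighty adversary in the prophet setting can be simulated in the OCRS setting with the same information, so a $c$-selectable OCRS against adversary $\mathcal{A}$ yields a $c$-approximate prophet algorithm against the same $\mathcal{A}$.

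The main obstacle is the subtle but routine independence argument in the final step: one must be careful that the OCRS sees only the activity indicators (and not the realized values $v_i$), so that conditional on $v_i \ge T_i$ the event $\{e_i \in I\}$ is independent of $v_i$ itself. A secondary bookkeeping issue is handling distributions with point masses at $T_i$, which is resolved either by an arbitrarily small perturbation or by standard randomized tie-breaking to make $\Pr[v_i \ge T_i]$ exactly $x_i^*$.
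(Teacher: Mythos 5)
Your proof targets a theorem the paper cites from~\cite{fsz16} and does not re-prove, so there is no in-paper argument to compare against; I assess the argument on its own. The reduction is the standard one and the first half is correct: $x^* \in P_{\mathcal{F}}$ because $x^*$ is a convex combination of indicators of feasible sets, the threshold choice makes the induced activity probabilities equal $x^*$, and the ``top $x_i^*$-mass'' bound $x_i^*\,\mathbb{E}[v_i \mid v_i \ge T_i] \ge \mathbb{E}[v_i \cdot \mathbbm{1}[e_i \in \mathrm{MAX}(v)]]$, summed over $i$, gives the ex-ante $\ge$ OPT inequality.

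The step that is stated too strongly is the independence claim at the end. You assert that, conditional on $\{v_i \ge T_i\}$, the event $\{e_i \in I\}$ is independent of $v_i$ because ``the OCRS's decision depends only on the active/inactive bits.'' That is literally true only for a fixed-order adversary. An online or almighty adversary in the \emph{prophet} setting observes the realized values themselves (all of them for almighty, those of already-revealed elements for online), which is strictly more than the activity bits that the corresponding adversary in Definition~\ref{adv_stength_ocrs} sees. Consequently the arrival order, and hence $\mathbbm{1}[e_i \in I]$, can depend on $v_i$ even after conditioning on $v_i \ge T_i$, and the factorization you write is not immediate. The conclusion still holds, but the right way to close it is to condition out the extra information: write each $v_j$ as the pair $(b_j, E_j)$ where $b_j = \mathbbm{1}[v_j \ge T_j]$ and $E_j$ is the residual value of $v_j$ given $b_j$. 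Then $(E_1,\ldots,E_n)$ is independent of $(b_1,\ldots,b_n,\text{coins})$, and for each fixed realization of $E$ the prophet adversary's strategy is a function of only the activity bits and coins, hence a legal adversary of the same type in the OCRS setting. Applying $c$-selectability pointwise in $E$ gives $\Pr[e_i \in I \mid b_i = 1, E] \ge c$, and since $v_i$ given $b_i = 1$ is a deterministic function of $E_i$, taking expectations over $E$ recovers $\mathbb{E}[v_i\, \mathbbm{1}[e_i \in I]] \ge c\, x_i^*\, \mathbb{E}[v_i \mid v_i \ge T_i]$. As written, your independence step only covers the offline case; it should be replaced by this conditioning argument to handle the online and almighty cases, which is precisely what the statement needs to preserve adversary type.
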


\indent In the classical prophet inequality formulation~\cite{KrengelS78}, the value of the online algorithm $\pi$ is compared directly to the offline optimum.~\cite{ls18} consider an ex-ante prophet inequality, where the value of $\pi$ is compared to the optimal value of a convex relaxation, which upper bounds the offline optimum.~\cite{ls18} show that this stronger notion of an ex-ante prophet inequality is equivalent to an OCRS.

\subsection{$k$-uniform matroids}\label{OCRS_prophet_ineq_section}

In this section, we give a definition for $k$-uniform matroids, which is the feasibility constraint that we will use throughout the paper. 
Given a ground set $N = \{e_1, \ldots, e_n\}$, the $k$-uniform matroid is the matroid consisting of all subsets of $N$ of size at most $k$.
\begin{definition} ($k$-uniform matroid)
The $k$-uniform matroid for $N$ is $M_k = (N,\mathcal{F}_k)$ , where $$\mathcal{F}_k = \{S \subseteq N| |S| \leq k\}$$
and the corresponding polytope of $\mathcal{F}_k$ is given by
$$P_{k} = \{x \in \mathbb{R}_{\geq 0}^n| \sum_{i=1}^n x_i \leq k\}$$
\end{definition}

We remind the reader of prior work on OCRSs and prophet inequalities for $k$-uniform matroids below.

\begin{theorem}[\cite{hks07,Alaei11,AKW14,jmz22}] The following is known, prior to our work, on OCRSs and prophet inequalities for $k$-uniform matroids:
\begin{itemize}
\item Against a fixed-order/online adversary, the best prophet inequalities and OCRSs for $\mathcal{F}_k$ achieve a guarantee of $1-\Theta(\frac{1}{\sqrt{k}})$ (lower bound:~\cite{hks07}, algorithm:~\cite{Alaei11,jmz22}).
\item Against an almighty adversary, the best prophet inequalities for $\mathcal{F}_k$ achieve a guarantee of $1-\Theta(\frac{1}{\sqrt{k}})$ (lower bound:~\cite{hks07}, algorithm:~\cite{AKW14}).
\item Against an almighty adversary, the best-known OCRS achieves a guarantee of $1-\Theta(\sqrt{\frac{\log k}{k}})$ (implicit in ~\cite{hks07}). 
\end{itemize}
\end{theorem}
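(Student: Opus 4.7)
The plan is to verify each bullet by invoking or briefly sketching the corresponding prior work, since this theorem is a compilation of known results rather than a new claim.

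For all three lower bounds, I would start from the i.i.d.\ Bernoulli family of~\cite{hks07}: take $n$ elements each active independently with probability $p = k/n$, so $x \in P_k$ with $\sum_i x_i = k$. The number of active elements is $\mathrm{Bin}(n,p)$ with mean $k$ and standard deviation $\Theta(\sqrt{k})$. For a fixed-order adversary, consider any element placed near the end: conditional on being active, the probability of selection is at most the probability that fewer than $k$ slots have been filled so far, and an averaging argument combined with the gap between $\mathbb{E}[|R(x)|]=k$ and the hard capacity $k$ shows no OCRS can exceed $1-\Omega(1/\sqrt{k})$. The same construction (with unit-value prizes) gives the matching prophet-inequality lower bound against an almighty adversary in bullet 2. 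The strengthening to $1-\Omega(\sqrt{\log k/k})$ for OCRS against an almighty adversary in bullet 3 is exactly what is proven later in Section~\ref{opt_guarantee_against_almigthy}, which we may invoke directly.

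For the upper bounds, three different algorithms handle the three bullets. Bullet 1 uses the $\gamma$-conservative magician of~\cite{Alaei11}: compute the CDF of the number of broken wands by a forward dynamic program, pick thresholds $\theta_i \leq k$ whose ex-ante under-capacity probability is $\geq \gamma = 1-1/\sqrt{k+3}$, and accept iff under threshold; the sharper constant comes from the LP and differential-equation tightening in~\cite{jmz22}. Bullet 2 uses the rehearsal algorithm of~\cite{AKW14}: sample once from each $\mathcal{D}_i$, keep the top $k-2\sqrt{k}$ values as thresholds $T_1,\ldots,T_{k-2\sqrt{k}}$, repeat $T_{k-2\sqrt{k}}$ in the remaining slots, and greedily fill the highest unfilled slot whose threshold is beaten; the comparison-based structure is what makes it robust against an almighty adversary. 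Bullet 3 uses the static greedy implicit in~\cite{hks07}: accept every active element independently with probability $q = 1-\Theta(\sqrt{\log k/k})$. A Chernoff bound shows that at any fixed step the selected prefix count exceeds $k$ with probability $1/\mathrm{poly}(k)$, and a union bound over the $n$ relevant steps remains affordable precisely because the $\sqrt{\log k}$ factor is absorbed into $q$. This union bound over all elements is what makes the guarantee hold against an almighty adversary, who is otherwise free to target any element.

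The main obstacle to a fully self-contained proof is the $\gamma$-conservative-magician analysis of~\cite{Alaei11,jmz22}, which requires tracking an evolving CDF by dynamic programming and controlling a fixed-point / differential equation to nail down the constant. Since the theorem explicitly attributes these results to those papers, however, no new technical work is required here beyond assembling the cited material, together with the almighty OCRS lower bound that this paper itself proves in Section~\ref{opt_guarantee_against_almigthy}.
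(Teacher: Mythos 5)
This theorem is a background compilation of prior results; the paper provides no proof and simply attributes each bullet to the cited works. Your overall approach---sketching the relevant constructions from~\cite{hks07,Alaei11,AKW14,jmz22}---is therefore the only sensible one, and the first two bullets are handled adequately.

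Two points about bullet 3 deserve correction. First, your analysis of the static greedy OCRS invokes ``a union bound over the $n$ relevant steps,'' but no union bound is needed (and one over $n$ steps would actually be problematic when $n \gg k$). The argument the paper itself reminds the reader of in Theorem~\ref{sqrt_lgk_ocrs} is a single Chernoff bound: if fewer than $k$ elements of $R(x)\setminus\{e_i\}$ are active, the naive greedy OCRS necessarily has a free slot for $e_i$, regardless of the order in which the almighty adversary reveals the elements, since the greedy decision depends only on the cardinality of the active prefix and not on its ordering. The $\sqrt{\log k/k}$ scaling is chosen so that $\Pr[|R(x)\setminus\{e_i\}|\geq k]\leq 1/k$ under the Chernoff bound, not to absorb a union bound. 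Second, bullet 3 as stated attributes only the algorithm (the $\Theta$ upper-direction analysis of that specific OCRS) to~\cite{hks07}; the matching impossibility result against an almighty adversary is precisely this paper's new contribution in Section~\ref{opt_guarantee_against_almigthy}, so you should not invoke it to justify a theorem whose preamble reads ``prior to our work.''
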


\section{Warmup: Naive approaches towards an OCRS}\label{naive_algorithms}

The goal of this section is to explore a few exceptionally simple algorithms that one might try to use to construct an optimal OCRS for $k$-uniform matroids. We will present results about whether optimal factors are possible against adversaries of variable strengths. In particular, we will show that one cannot achieve a $(1-O(\frac{1}{\sqrt{k}}))$-selectable OCRS by using a very simple greedy algorithm or a tweaked variant of it utilizing a partition matroid. 

We first consider a naive greedy OCRS, which greedily selects active elements until it has selected $k$ elements. Formally in the language of Definition \ref{greedy_ocrs_def}, for this OCRS we have $\mathcal{F} = \{S \subseteq N | |S| = k \}$ and $\mathcal{F}_x = \mathcal{F}$ for all $x$. We quickly establish that the naive greedy OCRS is not $(b,c)$-selectable even against an offline/fixed-order adversary for any $b,c$ satisfying $bc = 1-O(\frac{1}{\sqrt{k}})$. This would rule out constructing exceptionally simple OCRS via Observation \ref{(b,c)-sel_implies_bc_sel}. 

\begin{theorem}\label{no_bc}
There are no $b,c$, satisfying $bc = 1-O(\frac{1}{\sqrt{k}})$, such that the naive greedy OCRS is $(b,c)$-selectable against the offline adversary. 
\end{theorem}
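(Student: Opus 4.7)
The plan is to exhibit, for any fixed $b \in [0,1]$, an instance $x \in b \cdot P_k$ on which some element's conditional selection probability under the naive greedy OCRS is essentially $\Pr[\mathrm{Poisson}(bk) < k]$, and then to argue that this bound is incompatible with $bc = 1 - O(1/\sqrt{k})$. Concretely, fix a large integer $n$ and set $x_i = bk/n$ for all $i \in [n]$, so $\sum_i x_i = bk$ and $x \in b \cdot P_k$. The fixed-order adversary may reveal the $n$ elements in any order; by symmetry the order does not matter, so I focus on the element $e_n$ revealed last. The naive greedy OCRS accepts every active element it sees as long as strictly fewer than $k$ elements have been selected, so the probability that $e_n$ is selected given it is active equals the probability that fewer than $k$ of $e_1, \dots, e_{n-1}$ are active, namely $\Pr[\mathrm{Bin}(n-1, bk/n) < k]$. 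Letting $n \to \infty$ converts this to $\Pr[\mathrm{Poisson}(bk) < k]$, so any $(b,c)$-selectability guarantee must satisfy $c \le \Pr[\mathrm{Poisson}(bk) < k]$.

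Next I would rule out $bc = 1 - O(1/\sqrt{k})$ by a short case analysis. Fix any constant $C > 0$ and suppose for contradiction that $bc \ge 1 - C/\sqrt{k}$. If $b \le 1 - 2C/\sqrt{k}$, then trivially $bc \le b \le 1 - 2C/\sqrt{k} < 1 - C/\sqrt{k}$, contradiction. Otherwise $b > 1 - 2C/\sqrt{k}$, so $bk \ge k - 2C\sqrt{k}$. A standard Gaussian approximation to the Poisson with mean $bk \to \infty$ (e.g.\ Berry--Esseen applied to a sum of i.i.d.\ mean-$1$ Poissons) gives $\Pr[\mathrm{Poisson}(bk) < k] \le \Phi\!\left((k-bk)/\sqrt{bk}\right) + o(1)$, where $\Phi$ is the standard normal CDF. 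Since $(k-bk)/\sqrt{bk} \le 2C/\sqrt{1 - 2C/\sqrt{k}} \to 2C$, we obtain $c \le \Phi(2C) + o(1)$, a constant strictly less than $1$. Therefore for all sufficiently large $k$ we have $bc \le c < 1 - C/\sqrt{k}$, again contradicting $bc \ge 1 - C/\sqrt{k}$.

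The only slightly delicate step is the Gaussian tail estimate for $\mathrm{Poisson}(bk)$; modulo that standard fact, the entire argument reduces to a short direct calculation on the natural uniform instance, and I do not anticipate any other obstacles. In particular, using the fixed-order (weakest) adversary suffices because the instance is fully symmetric, so the adversary needs no information about realized activations to force the bound.
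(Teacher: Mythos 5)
Your proposal is correct, and at the structural level it follows the same route as the paper's proof: take a symmetric instance in $b \cdot P_k$, focus on the last-revealed element, reduce its conditional selection probability to a lower tail probability of a concentrated random variable, and then show that $b$ times that tail probability cannot approach $1$ at rate $1 - O(1/\sqrt{k})$. The difference is in the choice of instance and the tail estimate. The paper fixes $n = 2k$ with $x_i = b/2$, so the relevant quantity is a $\mathrm{Bin}(2k-1, b/2)$ lower tail, and it establishes the needed anti-concentration from scratch (Lemma~\ref{lemma_binomial}, via a Stirling-type expansion of the central binomial coefficients) to conclude $\Pr[\mathrm{Bin}(2k-1, \tfrac{1}{2} - \tfrac{C}{2\sqrt{k}}) \ge k] \ge A(C) > 0$. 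You instead let $n \to \infty$ with $x_i = bk/n$, pass to a $\mathrm{Poisson}(bk)$ limit, and invoke the normal approximation to that Poisson. Both are sound; yours is somewhat shorter at the cost of appealing to a black-box CLT/Berry--Esseen estimate rather than proving the anti-concentration directly, and it avoids the bookkeeping around ``the function $\Pr(\mathrm{Bin}(2k-1,x) = a)$ is increasing in $x$ for $a \ge k$, $x < \tfrac{1}{2}$'' that the paper uses to reduce from $b/2$ to $\tfrac{1}{2} - \tfrac{C}{2\sqrt{k}}$. Your case split on $b$ (bounded away from or close to $1$) also mirrors the paper's inequality $\min(b, \Pr[\cdots]) \ge bc$, just written out more explicitly. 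The one point worth tightening if you were to write it up formally is that $b$ and $c$ may depend on $k$, so the ``$o(1)$'' error in the Poisson normal approximation should be justified uniformly over the relevant range of $b$ (which your Case~2 restriction $b > 1 - 2C/\sqrt{k}$ does provide, since then $bk \to \infty$).
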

\begin{proof}
See Appendix \ref{appendix_naive_algorithms} for a proof. 
\end{proof}

Our second result is that even if we complicate the naive greedy OCRS slightly it does not imply an optimal factor. Suppose instead of using the $k$-uniform matroid, we use a partition matroid and the algorithm is to greedily select active elements as long as the set of selected elements lies in the partition matroid. Formally in the language of Definition \ref{greedy_ocrs_def}, the feasibility family $\mathcal{F}_{x}$ is given by a partition matroid (which could depend on $x$). For a given $x$, the partition matroid is of the form $\{(n_i,k_i,S_i)\}_{i=1}^s$, where $\sum_{i=1}^s k_i = k$, $\sum_{i=1}^s n_i = n$ and the sets $S_i$ are pairwise disjoint and satisfy $\cup_{i=1}^{s} S_i = N$ and $|S_i| = n_i$. Here we can select at most $k_i$
elements from $S_i$. We next prove that such scheme is not $(b,c)$-selectable even against the \textit{offline} adversary for any $b,c$ satisfying $bc = 1-O(\frac{1}{\sqrt{k}})$. 
\begin{theorem}\label{partition_thoerem}
There are no $b,c$, satisfying $bc = 1-O(\frac{1}{\sqrt{k}})$, such that the naive greedy OCRS with a partition matroid is $(b,c)$-selectable against the offline adversary.
\end{theorem}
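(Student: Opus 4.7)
My plan is to fix one especially symmetric instance and show it defeats every partition at once. Suppose for contradiction the scheme is $(b,c)$-selectable with $bc = 1 - O(1/\sqrt{k})$; since $b, c \le 1$, both $b$ and $c$ must be $1 - O(1/\sqrt{k})$. Take $n$ much larger than $k$ and set $x_i = bk/n$ for all $i$, so that $x \in b\cdot P_k$. Suppose the algorithm commits to a partition matroid $\{(n_j,k_j,S_j)\}_{j=1}^s$ for this $x$. Let the offline adversary order the ground set so that, within each block $S_j$, some fixed element $e_j \in S_j$ appears last among $S_j$'s elements; this is possible because the adversary sets the order knowing the partition. Because the OCRS is greedy on this partition matroid, $e_j$ is selected if and only if fewer than $k_j$ other elements of $S_j$ were active, giving
\[
\Pr[e_j \text{ selected} \mid e_j \text{ active}] \;=\; \Pr\!\bigl[\mathrm{Bin}(n_j - 1,\; bk/n) < k_j\bigr].
\]
The $(b,c)$-selectability requirement forces this probability to be at least $c = 1 - O(1/\sqrt{k})$ for every $j$.

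The core step is a Chernoff dichotomy on each block. Writing $\mu_j := (n_j-1)\cdot bk/n$, I need $\Pr[\mathrm{Bin}(n_j-1,\, bk/n) \ge k_j] = O(1/\sqrt{k})$. A standard upper-tail Chernoff inequality then forces one of two alternatives for each $j$: either (i) $\mu_j \le k_j/2$, in which case I call block $j$ \emph{loose}, or (ii) $k_j - \mu_j \ge \Omega(\sqrt{k_j \ln k})$, in which case I call it \emph{tight}. The intuition is that a binomial needs about $\sqrt{\ln k}$ standard deviations of safety below $k_j$ to make its upper tail smaller than $1/\sqrt{k}$.

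I then accumulate the total slack $\sum_j (k_j - \mu_j) = k - \sum_j \mu_j$. For $n$ large enough, $\sum_j \mu_j = (n-s)\cdot bk/n = bk - o(1)$, so $\sum_j (k_j - \mu_j) = (1-b)k + o(1)$. The rest is a two-case argument on where the capacity sits. If the tight blocks hold at least half the total capacity, $\sum_{j \text{ tight}} k_j \ge k/2$, the super-additivity $\sum \sqrt{k_j} \ge \sqrt{\sum k_j}$ (true for nonnegative reals by squaring) gives
\[
(1-b)k \;\ge\; \sum_{j \text{ tight}} \Omega\!\bigl(\sqrt{k_j \ln k}\bigr) \;\ge\; \Omega(\sqrt{\ln k})\cdot \sqrt{k/2} \;=\; \Omega(\sqrt{k \ln k}),
\]
so $1-b \ge \Omega(\sqrt{\ln k / k})$. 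Otherwise the loose blocks hold at least half the capacity, and each contributes slack at least $k_j/2$, giving $(1-b)k \ge k/4$, i.e., $1-b \ge 1/4$. Either estimate contradicts $1-b = O(1/\sqrt{k})$ for $k$ sufficiently large, so no partition works and the scheme cannot be $(b,c)$-selectable with $bc = 1 - O(1/\sqrt{k})$.

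The step I expect to be the main obstacle is the Chernoff dichotomy: making the ``loose versus tight'' split with a clean, explicit constant in the $\sqrt{k_j \ln k}$ slack, and handling the degenerate corner cases cleanly (a block with $k_j = 0$ selects nothing, forcing $S_j$ to be effectively empty; blocks with very small $n_j$ are trivially loose). Once the dichotomy is pinned down, the final counting of $(1-b)k$ across at most $k$ blocks is routine.
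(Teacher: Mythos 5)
Your strategy is genuinely different from the paper's, and more ambitious. The paper fixes the instance $n = 2k$, $x_i = b/2$, applies a pigeonhole argument to find a single block with $n_i \geq 2k_i$, and then derives the contradiction from a single binomial anti-concentration estimate (Lemma~\ref{lemma_binomial}, via stochastic dominance) applied to the last element of that one block. You instead take $n \gg k$, do a global accounting of the total slack $\sum_j (k_j - \mu_j) \approx (1-b)k$ across \emph{all} blocks, and split them into tight and loose. If completed, this would give the quantitatively sharper conclusion $1-b = \Omega(\sqrt{\log k / k})$, matching the achievability in Theorem~\ref{sqrt_lgk_ocrs}, rather than merely ruling out $1 - O(1/\sqrt{k})$.

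However, there is a genuine gap at the ``Chernoff dichotomy'' step, which you correctly flag as the expected obstacle. You invoke the upper-tail Chernoff inequality, but Chernoff provides only the implication ``large slack $k_j - \mu_j \Rightarrow$ small tail $\Pr[\mathrm{Bin} \ge k_j]$.'' What you actually need is the converse: ``small tail and $\mu_j > k_j/2 \Rightarrow$ slack is $\Omega(\sqrt{k_j \log k})$.'' That is a \emph{lower} bound on the binomial upper tail --- an anti-concentration statement --- and no upper-tail concentration inequality can deliver it. This is exactly the nontrivial technical ingredient the paper develops as Lemma~\ref{lemma_binomial} (with Proposition~\ref{anti_conc_binomial} as the underlying estimate). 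Moreover, because you chose $p = bk/n$ with $n \gg k$, you are in a small-$p$ (near-Poisson) regime with $n_j$ and $k_j$ varying across blocks, so you would need such an anti-concentration bound stated uniformly in that regime, which is a different parameter range than the near-$1/2$ setting the paper's Proposition~\ref{anti_conc_binomial} addresses. Until that lemma is supplied, the dichotomy does not close and the slack-counting argument is incomplete. The surrounding scaffolding --- placing a designated last element in each block, the super-additivity $\sum_j \sqrt{k_j} \ge \sqrt{\sum_j k_j}$, the two-case split on where the capacity sits, and the bookkeeping of $k_j=0$ blocks --- is otherwise sound.
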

\begin{proof}
See Appendix \ref{appendix_naive_algorithms} for a proof. 
\end{proof}
\begin{remark}
    Theorems \ref{no_bc} and \ref{partition_thoerem} say that one cannot use the transformation in Observation \ref{(b,c)-sel_implies_bc_sel} on a naive greedy OCRS to obtain a $(1-O(\frac{1}{\sqrt{k}}))$-selectable OCRS. Thus ruling out some exceptionally simple ways to construct a $(1-O(\frac{1}{\sqrt{k}}))$-selectable OCRS for $k$-uniform matroids.
\end{remark}
\begin{remark}
    Intuitively, the above variations of the naive greedy schemes fail to be optimal because they tend to select too many elements early in the process in comparison to the expected number of active elements so far. Thus, they are likely to run out of space when they reach the last element. Our main algorithm in Section \ref{simple_ocrs_section} attempts to counter this. 
\end{remark}

\vspace{1mm}

We conclude by reminding the reader that the naive greedy OCRS is $(b,c)$-selectable against the almighty adversary for some $b,c$ satisfying $bc= 1-O(\sqrt{\frac{\log(k)}{k}})$. The proof is implicit in the analysis of the $(1-O(\sqrt{\frac{\log(k)}{k}}))$-approximate prophet inequality of~\cite{hks07}.
\begin{theorem}[Implicit in \cite{hks07}]\label{sqrt_lgk_ocrs}
The naive greedy OCRS is $(1-\sqrt{\frac{2 \log(k)}{k}},1-\frac{1}{k})$-selectable against the almighty adversary. By the tranformation in Observation \ref{(b,c)-sel_implies_bc_sel} this implies a $(1-O(\sqrt{\frac{\log(k)}{k}}))$-selectable OCRS for $k$-uniform matroids.
\end{theorem}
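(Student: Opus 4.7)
The plan is to bound, for a fixed element $e$, the probability that $k$ elements have already been selected by the time $e$ is processed, conditioned on $e$ being active. Since the naive greedy OCRS is deterministic given the arrival order, the only power of the almighty adversary is to choose the revealing order as a function of the full realization of active/inactive statuses. To maximize the chance of rejecting $e$, the adversary places $e$ last and reveals all other elements first; then $e$ is rejected precisely when at least $k$ of the other elements are active. By independence of the active statuses, this gives
\[
\Pr[e \text{ rejected} \mid e \text{ active}] \;\leq\; \Pr\Big[\textstyle\sum_{i \neq e} Y_i \geq k\Big],
\]
where the $Y_i$ are independent Bernoullis with mean $x_i$.

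Next I would apply the standard multiplicative Chernoff bound. Since $x \in b \cdot P_k$ for $b = 1 - \sqrt{2\log(k)/k}$, the mean $\mu := \sum_{i \neq e} x_i$ satisfies $\mu \leq bk = k - \sqrt{2k\log k}$. Writing $k = (1+\delta)\mu$, the Chernoff bound yields
\[
\Pr\Big[\textstyle\sum_{i \neq e} Y_i \geq k\Big] \;\leq\; \left(\frac{e^\delta}{(1+\delta)^{1+\delta}}\right)^{\mu} \;=\; \exp\big(k\ln(1-s) + ks\big),
\]
where $s := \sqrt{2\log(k)/k}$ so that $\mu = k(1-s)$ in the worst case. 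Expanding $\ln(1-s) = -s - s^2/2 - \cdots$, the exponent is at most $-ks^2/2 = -\log k$, so this probability is at most $1/k$. Monotonicity in $\mu$ handles the case $\mu < k(1-s)$. This establishes the claimed $(1-\sqrt{2\log(k)/k},\,1-1/k)$-selectability against the almighty adversary.

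The final implication then follows immediately from Observation \ref{(b,c)-sel_implies_bc_sel}, which converts this $(b,c)$-selectable scheme into a $bc$-selectable OCRS for $P_k$, with
\[
bc = \Big(1 - \sqrt{2\log(k)/k}\Big)\Big(1 - \tfrac{1}{k}\Big) = 1 - O\Big(\sqrt{\log(k)/k}\Big).
\]
The main obstacle is simply verifying the Chernoff calculation to check that the constants in $b$ align exactly so that the failure probability is at most $1/k$; everything else is a clean reduction. The fact that we are against an almighty rather than online or offline adversary is handled uniformly by the worst-case observation that $e$ can always be placed last.
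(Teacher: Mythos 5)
Your proof is correct and takes essentially the same approach as the paper's: both reduce to the event that at least $k$ of the other elements are active (the paper phrases this as a lower bound on the probability of selection without explicitly mentioning the "place $e$ last" adversary, but it is the same reduction), and both then apply a multiplicative Chernoff bound with $b = 1-\sqrt{2\log(k)/k}$. The only cosmetic difference is the Chernoff form used: you invoke the sharper moment-generating-function bound $\big(e^\delta/(1+\delta)^{1+\delta}\big)^\mu$ and expand $\ln(1-s)$, whereas the paper uses the weaker form $e^{-\delta^2\mu/(2+\delta)}$ and simplifies the exponent directly; both give the required $1/k$ bound with the same choice of $b$.
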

\begin{proof} 
We remind the reader of the simple proof in Appendix \ref{appendix_naive_algorithms}. 
\end{proof}

\section{A simple optimal OCRS for $k$-uniform matroids}\label{simple_ocrs_section}

The goal of this section is to give a new $(1-O(\frac{1}{\sqrt{k}}))$-selectable OCRS against offline/fixed-order adversaries. Because the adversary must commit to an ordering using just knowledge of $x$, and the distribution of $R(x)$, we let $e_1,e_2 \ldots, e_n$ refer to the elements that are revealed, in order. Note that the events $e_i \in R(x)$ are independent. We will show that the following algorithm is a $(1-O(\frac{1}{\sqrt{k}}))$-selectable OCRS for $k$-uniform matroids.\\

\noindent \textbf{OCRS($x$)}
\begin{enumerate}
\item[1.] Initialize the set of selected elements $A_0 = \emptyset$.
\item[2.] For $i = 1, \ldots, n$ do: 
\begin{enumerate}
\item[2.a] If $e_i$ is active and $|A_{i-1}| +1 \leq (1-\frac{1}{\sqrt{k}})(\sum_{j \leq i} x_j) + \sqrt{k}$, then select $e_i$ with probability $(1-\frac{1}{\sqrt{k}})$ (i.e. $A_i = A_{i-1} \cup e_i$) and otherwise discard it (i.e. $A_i = A_{i-1}$).  
\item[2.b] If $|A_{i-1}| +1 > (1-\frac{1}{\sqrt{k}})(\sum_{j \leq i} x_j) + \sqrt{k}$ or $e_i$ is inactive, then discard $e_i$ (i.e. $A_i = A_{i-1}$).
\end{enumerate}
\end{enumerate}

Observe that \textbf{OCRS} is derived from an even simpler $(1-\frac{1}{\sqrt{k}},1-O(\frac{1}{\sqrt{k}}))$-selectable OCRS, by using the reduction of Observation~\ref{(b,c)-sel_implies_bc_sel}. We clearly state this simpler algorithm below (we also state it parameterized by $d$, as our entire analysis follows for general $d$, and then is optimized for $d:=\sqrt{k}$ at the very end).\\

\noindent\textbf{Algorithm($d,x$)}
\begin{enumerate}
\item[1.] Initialize the set of selected elements $B_0 = \emptyset$.
\item[2.] For $i = 1, \ldots, n$ do: 
\begin{enumerate}
\item[2.a] If $e_i$ is active and $|B_{i-1}| +1 \leq \sum_{j \leq i} x_j +d$, then select $e_i$, and otherwise discard it.  
\item[2.b] If $|B_{i-1}| +1 > \sum_{j \leq i} x_j + d$ or $e_i$ is inactive, then discard $e_i$.
\end{enumerate}
\end{enumerate}

\begin{observation}\label{obs:main} If \textbf{Algorithm}($\sqrt{k},x$) is a $(1-\frac{1}{\sqrt{k}},1-O(\frac{1}{\sqrt{k}}))$-selectable OCRS for $P_k$, then \textbf{OCRS}($x$) is a $(1-O(\frac{1}{\sqrt{k}}))$-selectable OCRS for $P_k$.
\end{observation}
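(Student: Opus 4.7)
The plan is to derive this observation directly from the $(b,c)\Rightarrow bc$ reduction of Observation~\ref{(b,c)-sel_implies_bc_sel}, by exhibiting \textbf{OCRS}($x$) as the standard transformation of \textbf{Algorithm}($\sqrt{k},\cdot$) with scaling parameter $b=1-\frac{1}{\sqrt{k}}$.

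First, I would rewrite \textbf{OCRS}($x$) in an equivalent form that decouples the activity coin from the selection coin. For each $i$, introduce an independent Bernoulli $Y_i$ with $\Pr[Y_i=1]=1-\frac{1}{\sqrt{k}}$, and declare $e_i$ to be \emph{available} iff $e_i$ is active and $Y_i=1$. Then \textbf{OCRS}($x$) accepts $e_i$ iff $e_i$ is available and the capacity condition $|A_{i-1}|+1\le (1-\frac{1}{\sqrt{k}})\sum_{j\le i}x_j+\sqrt{k}$ holds. Since the events $\{e_i\text{ available}\}$ are mutually independent and $\Pr[e_i\text{ available}]=x_i':=(1-\frac{1}{\sqrt{k}})x_i$, this procedure has exactly the same joint distribution over selected sets as \textbf{Algorithm}($\sqrt{k},x'$), because the threshold $\sum_{j\le i}x_j'+\sqrt{k}$ coincides precisely with $(1-\frac{1}{\sqrt{k}})\sum_{j\le i}x_j+\sqrt{k}$.

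Second, I would use the hypothesis to bound the conditional acceptance probability. For any $x\in P_k$, the scaled vector $x'\in (1-\frac{1}{\sqrt{k}})\cdot P_k$, so the $(1-\frac{1}{\sqrt{k}},\,1-O(\frac{1}{\sqrt{k}}))$-selectability of \textbf{Algorithm}($\sqrt{k},\cdot$) yields
\[
\Pr[e_i\text{ accepted}\mid e_i\text{ available}]\ \ge\ 1-O\!\left(\tfrac{1}{\sqrt{k}}\right).
\]
Since $Y_i$ is independent of the activity of $e_i$ (and of all the other randomness), conditioning on $e_i$ being active does not disturb $Y_i$; hence
\[
\Pr[e_i\text{ accepted}\mid e_i\text{ active}]\ =\ \Pr[Y_i=1]\cdot\Pr[e_i\text{ accepted}\mid e_i\text{ available}]\ \ge\ \left(1-\tfrac{1}{\sqrt{k}}\right)\!\left(1-O\!\left(\tfrac{1}{\sqrt{k}}\right)\right)\ =\ 1-O\!\left(\tfrac{1}{\sqrt{k}}\right),
\]
which establishes the $(1-O(\frac{1}{\sqrt{k}}))$-selectability of \textbf{OCRS}($x$).

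The only step requiring any care, and what I expect to be the ``main obstacle'' (though it is really bookkeeping rather than a genuine difficulty), is the first one: verifying that introducing the auxiliary coins $Y_i$ yields a process identical in distribution to \textbf{Algorithm}($\sqrt{k},x'$). Once this reformulation is in place, the remainder of the proof is a one-line application of the hypothesis combined with the independence of $Y_i$ from the activity indicator. In fact, this observation is essentially a restatement of the general reduction of~\cite{fsz16} recorded in Observation~\ref{(b,c)-sel_implies_bc_sel}, specialized to our particular algorithm.
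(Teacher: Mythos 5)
Your proposal is correct and takes essentially the same route as the paper: the paper's proof is a one-line invocation of the $(b,c)\Rightarrow bc$ reduction of Observation~\ref{(b,c)-sel_implies_bc_sel} with $b=1-\frac{1}{\sqrt{k}}$, which is exactly what you do. The only difference is that you spell out the bookkeeping (the auxiliary coins $Y_i$ and the identification of \textbf{OCRS}($x$) with \textbf{Algorithm}($\sqrt{k},x'$) on the scaled vector $x'=(1-\frac{1}{\sqrt{k}})x$) that the paper leaves implicit by appealing directly to the~\cite{fsz16} reduction.
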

\begin{proof}
\textbf{OCRS}($x$) is exactly the result of applying the~\cite{fsz16} reduction of Observation~\ref{(b,c)-sel_implies_bc_sel} to \textbf{Algorithm}($\sqrt{k},x$), with $b = 1-\frac{1}{\sqrt{k}}$ and $c = $. Therefore, by Observation~\ref{(b,c)-sel_implies_bc_sel}, \textbf{OCRS}($x$) is $(1-\frac{1}{\sqrt{k}})\cdot(1-O(\frac{1}{\sqrt{k}}))$-selectable (i.e.~$(1-O(\frac{1}{\sqrt{k}}))$-selectable) whenever \textbf{Algorithm}($\sqrt{k},x$) is $(1-\frac{1}{\sqrt{k}},1-O(\frac{1}{\sqrt{k}}))$-selectable.\end{proof}

In line with Remark \ref{rmk_c_vs_(b,c)}, we emphasize that \textbf{Algorithm}($\sqrt{k},x$) operates in a universe where the probabilities $x_i$ are scaled down by $b = 1-\frac{1}{\sqrt{k}}$, while \textbf{OCRS}($x$) operates with the original probabilities $x_i$.

Our key proposition is that \textbf{Algorithm}($d$,$x$) is indeed sufficiently selectable.

\begin{proposition}\label{prop:main} \textbf{Algorithm}($d$,$x$) is $(1-\frac{d}{k},1-\frac{2}{d-1})$-selectable over $P_k$. That is, for all $x \in (1-\frac{d}{k})\cdot P_k$, \textbf{Algorithm}($d$,$x$) is $(1-\frac{2}{d-1})$-selectable.
\end{proposition}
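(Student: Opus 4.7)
The plan is to reduce the selectability claim to a tail bound on a natural random walk. Define $Y_i := |B_i| - \sum_{j \leq i} x_j$, the running excess of selected elements over their expected number. A short induction shows $Y_i \leq d$ for all $i$: when $e_i$ is active and selected, $Y_i = Y_{i-1} + 1 - x_i \leq (d-1+x_i) + 1 - x_i = d$ (using the selection condition $Y_{i-1} \leq d-1+x_i$); in every other kind of step $Y_i = Y_{i-1} - x_i \leq d$ by the inductive hypothesis. The selection condition $|B_{i-1}|+1 \leq \sum_{j\leq i} x_j + d$ itself rewrites as $Y_{i-1} \leq d-1+x_i$, and since the activation status of $e_i$ is independent of $Y_{i-1}$,
\[
\Pr[e_i \text{ is selected} \mid e_i \text{ is active}] \;=\; \Pr[Y_{i-1} \leq d-1+x_i] \;\geq\; \Pr[Y_{i-1} \leq d-1].
\]
So the proposition reduces to proving $\Pr[Y_{i-1} > d-1] \leq 2/(d-1)$ for every $i$.

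For the tail bound I would decompose $Y_i$ as the difference of a martingale and a non-decreasing counting process. Set $\xi_j := \mathbbm{1}[e_j \text{ active}] - x_j$ and $\eta_j := \mathbbm{1}[e_j \text{ active and } Y_{j-1} > d-1+x_j]$; a case analysis on the four combinations of active/inactive and capped/uncapped gives $Y_i = \tilde{Y}_i - N_i$, where $\tilde{Y}_i := \sum_{j \leq i} \xi_j$ is a zero-mean martingale with conditional variance $x_j(1-x_j)$ per step, and $N_i := \sum_{j \leq i} \eta_j$ is a non-decreasing integer process counting cap events. The constraint $x \in (1-\tfrac{d}{k}) P_k$ gives $\sum_j x_j(1-x_j) \leq \sum_j x_j \leq k-d$, so $\tilde Y$ has bounded total quadratic variation. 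Together with the ceiling $Y_i \leq d$ from the first step, these are precisely the hypotheses needed to invoke Lemma~\ref{lemma_martingale_bounds}, whose conclusion $\Pr[Y_{i-1} > d-1] \leq 2/(d-1)$ completes the argument.

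I expect the main obstacle to be the tail bound itself. Naive approaches fall short: a direct second-moment computation yields $\mathbb{E}[Y_n^2] + (2d-1)\,\mathbb{E}[N_n] \leq k-d$, and feeding this into Chebyshev or Markov on $Y_n^+$ bounds $\Pr[Y_i > d-1]$ only by $O(\sqrt{k-d}/(d-1))$, which is a non-vanishing constant in the regime $d = \Theta(\sqrt k)$ of interest. The improvement to $2/(d-1)$ must exploit the one-sided reflection: whenever $Y$ enters the window $(d-1, d]$, the cap forces the deterministic decrease $Y_i = Y_{i-1} - x_i$, producing a negative drift that prevents $Y$ from accumulating mass near the boundary. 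Designing a potential function that captures this ``push-down'' effect, so that its expected change can be controlled against the total quadratic variation $k-d$, is the technical content of Lemma~\ref{lemma_martingale_bounds}; once it is in hand, the proposition follows from the reduction above.
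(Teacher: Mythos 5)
Your reduction in the first paragraph is correct and is essentially the paper's starting point: selectability of $e_i$ (conditional on activity) equals $\Pr[Y_{i-1}\le d-1+x_i]$, which drops the $+x_i$ to reduce to a tail bound on the capped process $Y_{i-1} = S_{i-1}$. Your decomposition $Y_i = \tilde Y_i - N_i$ in the second paragraph also matches the paper exactly ($\tilde Y$ is the paper's $W$, $N_i$ is the number of active-but-discarded elements among the first $i$; this is Lemma~\ref{prop_diff}). So far this is the same route.

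The genuine gap is the tail bound itself, and your third paragraph's plan for closing it does not work. You state that the ceiling $Y_i\le d$ together with a bound on the total quadratic variation of $\tilde Y$ are ``precisely the hypotheses needed to invoke Lemma~\ref{lemma_martingale_bounds},'' and you describe that lemma's content as a potential-function argument controlled by quadratic variation. Neither is the case. Lemma~\ref{lemma_martingale_bounds} is a gambler's-ruin consequence of Optional Stopping: for a martingale $Q$ started at $0$ with increments bounded by $K$, it bounds $\Pr(\max_{i\le n}Q_i < a \text{ and } Q_n\le b)$ by $\frac{a+K}{-b}$. Quadratic variation plays no role, and the lemma makes no statement about a capped (non-martingale) process such as $Y$. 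Feeding it ``$Y_i\le d$'' directly gives nothing, since $Y$ is not a martingale. The step you are missing is how to convert the event ``$Y_{i-1}>d-1$ (or $e_m$ active and discarded)'' into an event of the form that Lemma~\ref{lemma_martingale_bounds} addresses. The paper does this with two ideas you don't have: (1) the characterization (Lemma~\ref{integral_height}) that $e_m$ is active-and-discarded \emph{iff} $\lceil W_m\rceil>d$ and $\lceil W_m\rceil$ strictly exceeds every earlier $\lceil W_j\rceil$, i.e.\ the uncapped martingale reaches a new integral maximum; and (2) the time-reversal trick, defining $Q_i := W_{m-i-1}-W_{m-1}$, which is again a martingale (crucially, because the $X_j$ are independent against a fixed-order adversary) and whose maximum and endpoint conditions $M_{m-1}<1$ and $Q_{m-1}\le -(d-1)$ encode the ``new integral maximum at time $m$'' event. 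Only after these two steps is Lemma~\ref{lemma_martingale_bounds} applicable (with $a=1$, $b=-(d-1)$, $K=1$). Without both the characterization and the reversal, your bound $\Pr[Y_{i-1}>d-1]\le\frac{2}{d-1}$ remains an unproven assertion, and the proposal as written is incomplete.
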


The proof of Proposition~\ref{prop:main} proceeds in two steps. The first (shorter) step is to guarantee that \textbf{Algorithm}($d,x$) always selects at most $k$ elements. The second is to show that every element is accepted with sufficient probability.

\begin{observation}\label{obs:stepone}
For all $x \in (1-\frac{d}{k})\cdot P_k$, \textbf{Algorithm}($d,x$) accepts at most $k$ elements.
\end{observation}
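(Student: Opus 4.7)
The plan is essentially a one-line invariant argument combined with the budget constraint on $x$. I would prove by induction on $i$ that the invariant
\[
|B_i| \leq \sum_{j \leq i} x_j + d
\]
holds throughout the execution of \textbf{Algorithm}($d,x$), and then instantiate it at $i = n$.

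For the base case, $|B_0| = 0 \leq x_1 + d$ (since $x_j \geq 0$ and $d \geq 0$). For the inductive step, there are two cases. If the algorithm discards $e_i$ in step 2.b, then $|B_i| = |B_{i-1}|$, and the inductive hypothesis gives $|B_{i-1}| \leq \sum_{j \leq i-1} x_j + d \leq \sum_{j \leq i} x_j + d$, so the invariant is preserved. If the algorithm selects $e_i$ in step 2.a, then the selection condition is precisely $|B_{i-1}| + 1 \leq \sum_{j \leq i} x_j + d$; since $|B_i| = |B_{i-1}| + 1$, the invariant holds at $i$ as well.

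Finally, I would apply the invariant at $i = n$ together with the assumption $x \in (1 - \tfrac{d}{k}) \cdot P_k$, which gives $\sum_{j \leq n} x_j \leq (1 - \tfrac{d}{k}) \cdot k = k - d$. Combining,
\[
|B_n| \leq \sum_{j \leq n} x_j + d \leq (k - d) + d = k,
\]
which is exactly the claim. There is no real obstacle here: the observation is just saying that the acceptance rule was designed to enforce the invariant, and the down-scaling of $x$ by a factor of $1 - \tfrac{d}{k}$ provides exactly the slack needed to absorb the additive buffer $d$ in the rule. The substantive work of the proposition will lie in the second step (showing each element is accepted with probability at least $1 - \tfrac{2}{d-1}$), not in this cardinality bound.
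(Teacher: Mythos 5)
Your proof is correct and matches the paper's argument exactly: the paper simply asserts the invariant $|B_i| \leq \sum_{j \leq i} x_j + d$ as an immediate consequence of the selection rule, while you spell out the straightforward induction. (Minor nit: the base case invariant at $i=0$ is $|B_0| \leq \sum_{j \leq 0} x_j + d = d$, not $x_1 + d$, but this does not affect the argument.)
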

\begin{proof}
Observe that, at all times, $|B_i| \leq \sum_{j \leq i} x_j +d$. As $\sum_j x_j \leq k \cdot (1-\frac{d}{k}) = k-d$, this implies that $|B_n| \leq k$, and the algorithm accepts at most $k$ total elements.
\end{proof}\\

We now proceed to prove that the algorithm is sufficiently selectable. For this part of the analysis, it will be convenient to consider the process $S_i := |B_i| - \sum_{j \leq i} x_j$. Observe that $S_i$ has the following dynamics. First, $S_0 = 0$. Further if $S_{i-1} + 1-x_i \leq d$ (i.e. have "space" to accept $e_i$), then 
$$S_i :=\begin{cases}
        S_{i-1}+ 1-x_i & \text{if $e_i$ is active (with probability $x_i$)}\\
        S_{i-1}-x_i & \text{if $e_i$ is inactive (with probability $1-x_i$)}\\
        \end{cases}
        $$
and if $S_{i-1} + 1-x_i > d$ (i.e. don't have "space" to accept $e_i$), then 
$S_{i} = S_{i-1} -x_i$ regardless of whether $e_i$ is active (with probability 1). Additionally, consider the process $W_i = \sum_{j=1}^{i}X_j$, where 
$$
X_i =\begin{cases}
         1-x_i & \text{if $e_i$ is active (with probability $x_i$)}\\
        -x_i & \text{if $e_i$ is inactive (with probability $1-x_i$)}\\
        \end{cases}
        $$
Intuitively, $W_i$ tracks the number of \textit{active} elements above expectation (among the first $i$), and $S_i$ tracks the number of \emph{selected} elements above the expected number of active elements (among the first $i$). 
\begin{lemma}\label{prop_diff}
$W_i-S_i$ is exactly equal to the number of active elements that are discarded amongst the first $i$. 
\end{lemma}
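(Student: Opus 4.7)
The plan is to observe that this lemma is essentially an algebraic identity once we unpack the definitions of $W_i$ and $S_i$. The key point is that both quantities share a common $-\sum_{j \leq i} x_j$ term which will cancel in the difference, leaving a simple counting statement about active versus selected elements.

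First I would write $W_i$ in the form $|G_i| - \sum_{j \leq i} x_j$, where $G_i$ denotes the set of active elements amongst the first $i$. This follows immediately from the definition of $X_j$: we have $X_j = \mathbf{1}[e_j \in R(x)] - x_j$, so summing over $j \leq i$ gives $W_i = |G_i| - \sum_{j \leq i} x_j$. By definition $S_i = |B_i| - \sum_{j \leq i} x_j$. Subtracting, the $\sum_{j \leq i} x_j$ terms cancel and we get $W_i - S_i = |G_i| - |B_i|$.

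Next I would argue that $B_i \subseteq G_i$, i.e. the algorithm only ever selects active elements. This is immediate from inspection of \textbf{Algorithm}($d,x$): step 2.a only triggers an acceptance when $e_i$ is active, and step 2.b explicitly discards inactive elements. Therefore $|G_i| - |B_i|$ is a nonnegative integer equal to the number of elements in $G_i \setminus B_i$, which is precisely the number of active elements among $e_1, \ldots, e_i$ that were discarded by the algorithm. Combining with the previous paragraph yields the claim.

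There is no real obstacle here; the statement is a direct consequence of the definitions plus the trivial fact that the algorithm never selects inactive elements. The main reason to record it as a lemma is presumably to enable the subsequent analysis, where properties of the random walk $W_i$ (which has independent increments and is amenable to martingale-style bounds) can be transferred to statements about $S_i$ via a bound on the number of active-but-discarded elements.
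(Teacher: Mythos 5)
Your proof is correct and follows the same route as the paper's: write $W_i = |G_i| - \sum_{j\le i} x_j$, subtract $S_i = |B_i| - \sum_{j\le i} x_j$ to obtain $W_i - S_i = |G_i| - |B_i|$, and conclude using the fact that selected elements are active. You spell out the containment $B_i \subseteq G_i$ a bit more explicitly than the paper does, but the argument is the same.
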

\begin{proof}
Let $G_i$ be the set of active elements in the first $i$ revealed elements. Then, by definition $W_i = |G_i| - \sum_{j \leq i} x_j$. Thus, $W_i-S_i = |G_i| - |B_i|$ and the conclusion follows because $|B_i|$ is the number of active and selected elements in the first $i$ revealed ones. 
\end{proof}

 By Lemma \ref{prop_diff} it follows that the difference $W_i-S_i$ increases by one if and only if $e_i$ is active and discarded, and stays the same otherwise. The rest of the proof requires just two more natural steps. First, we characterize which elements are active and discarded, just as a property of the random process $W$. Second, we bound the probability of this occurring.
 
\begin{lemma}\label{integral_height} Element $e_i$ is active and discarded by \textbf{Algorithm}($d,x$) if and only if $\ceil{W_i} > d$ and $\ceil{W_i} > \ceil{W_j}$ for all $j < i$. That is, $e_i$ is active and discarded by \textbf{Algorithm}($d,x$) if and only if the random process $W$ reaches a new integral height for the first time.
\end{lemma}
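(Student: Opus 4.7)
The plan is to leverage Lemma~\ref{prop_diff}, which says $D_i := W_i - S_i$ equals the number of active-and-discarded elements among the first $i$, and to obtain an explicit formula for $D_i$ in terms of the running maximum of $W$. First I would establish the invariant $S_i \leq d$ by induction on $i$: the base case $S_0 = 0$ is immediate; whenever the algorithm accepts an active $e_i$, the rule forces $S_i = S_{i-1} + 1 - x_i \leq d$; whenever $e_i$ is inactive or discarded, $S_i = S_{i-1} - x_i \leq S_{i-1}$, preserving the bound. Combined with $D_i = W_i - S_i$, monotonicity of $D$, and integrality, this yields the lower bound
$$D_i \;\geq\; \max\!\bigl(0,\; \lceil \max\nolimits_{j\leq i}(W_j - d)\rceil \bigr).$$

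Next I would prove the matching equality by induction on $i$. The key ingredients are (a) each step of $W$ lies in $[-1, 1]$, so the RHS rises by at most $1$ per step; and (b) $W$ can strictly increase only when $e_i$ is active. A short case analysis then closes the induction. When the RHS does not jump at step $i$, we have $W_i - d \leq D_{i-1}$, so the acceptance condition $W_i - D_{i-1} \leq d$ holds and the algorithm either accepts or the element is inactive, giving $D_i = D_{i-1}$. When the RHS jumps by $1$, we have $W_i - d > D_{i-1}$, which forces $e_i$ to be active (since otherwise $W$ could not have risen to cross the new threshold) and simultaneously triggers a discard by the algorithm's rule $S_{i-1} + 1 - x_i = W_i - D_{i-1} > d$, giving $D_i = D_{i-1} + 1$.

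The lemma then follows by reading off when $D$ jumps from this closed-form expression: $D_i = D_{i-1} + 1$ --- equivalently, $e_i$ is active and discarded --- precisely when $\lceil W_i - d\rceil$ strictly exceeds $\lceil W_j - d\rceil$ for every $j < i$ and is at least $1$. This is exactly the statement that $W$ first reaches a new integral height strictly greater than $d$ at step $i$, as in the lemma.

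The main obstacle I anticipate is the matching upper bound in the second step. The lower bound on $D_i$ is essentially free from the invariant $S_i \leq d$, but pinning $D_i$ down to the ceiling expression requires combining the step-size constraint $|W_i - W_{i-1}| \leq 1$ with the fact that only active elements can push $W$ into a new unit-length interval above $d$; this is what ensures $D_i$ and the RHS jump on exactly the same steps rather than having $D$ fall behind by more than one. Edge cases where $W_i - d$ lands exactly on an integer can be handled by keeping the inequalities in the algorithm's rule consistent with the ceilings (or by a generic-position perturbation of the $x_i$).
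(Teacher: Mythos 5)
Your proof is correct and follows essentially the same route as the paper: it rests on the invariant $S_i \leq d$, the observation that the algorithm's discard rule forces $S_{i-1}+1-x_i = W_i - D_{i-1} > d$ at every discard, and the integrality and monotonicity of $D_i = W_i - S_i$. The only difference is cosmetic --- you package the argument as a closed-form running-max formula $D_i = \max\bigl(0, \lceil \max_{j\leq i}(W_j-d)\rceil\bigr)$ and read off the jump times, whereas the paper directly proves a bijection between the $n$th discard index and the first index with $\lceil W_i \rceil = d+n$.
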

 \begin{proof}
By Lemma \ref{prop_diff} the difference $W_{i}-S_{i}$ increases by 1 when $e_i$ is active and discarded and stays the same otherwise. Therefore, the $n$-th active and discarded elements is $e_{i}$, where $i$ is the smallest index such that $W_{i}-S_{i} = n$. Let's now fix an arbitrary $n \geq 1$. To prove the lemma, it is enough to show that $i$ is the smallest index such that $W_i-S_i = n$ if and only if $i$ is the smallest index such that $\ceil{W_i} = d+n$.

We begin with the ``only if'' direction. Suppose $i_n$ is the smallest index such that $W_{i_n}-S_{i_n} = n$. By Lemma \ref{prop_diff}, element $e_{i_n}$ is active and discarded. Therefore, by the selection rule of \textbf{Algorithm}($d,x$) we get that $S_{i_{n}-1} + 1-x_{i_n} >d$ and therefore $S_{i_n} = S_{i_n-1}-x_{i_n}$. This implies that $S_{i_n} > d-1$. We also know that $S_{i_n}\leq d$ (because $S_i$ stores the difference between $|B_i|$ and $\sum_{j \leq i} x_i$, which is hard-coded to be at most $d$). Using these two inequalities along with our hypothesis that $W_{i_n}-S_{i_n} = n$ we obtain:
$$W_{i_n} = S_{i_n}+n \in (d+n-1, d+n],\text{ and therefore: }\ceil{W_{i_n}} = d+n.$$
 Further, by definition of $i_n$ as the \emph{first} index such that $W_i - S_i = n$, we know that for all $j < i_n$: $W_{j} \leq S_j + n-1$. Moreover, we also know that $S_j \leq d$ for all $j$ (again, because $S_j$ stores the difference between $|B_j|$ and $\sum_{\ell \leq j} x_\ell$, which is at most $d$).
Therefore, $W_j \leq d+n-1$, and we conclude that $\ceil{W_j} \leq d+n-1 < \ceil{W_{i_n}}$ for all $j < i_n$. Therefore, $i_n$ is the smallest index such that $\ceil{W_{i_n}} = d+n$. This establishes that if $i_n$ is the smallest index such that $W_{i_n}-S_{i_n} = n$, then $i_n$ is the smallest index such that $\ceil{W_{i_n}} = d+n$. 

 Now we show the ``if'' direction. Suppose that $i_n$ is the smallest index such that $\ceil{W_{i_n}} = d+n$. Since $W_{i_n}-S_{i_n}$ is an integer, and because $S_{i_n} \leq d$, it must be the case that $W_{i_n}-S_{i_n} \geq n$ . Let $i^{*} \leq i_n$ be the smallest index such that $W_{i^{*}}-S_{i^{*}} = n$ (such $i^{*}$ exists because $W_i-S_i$ always increases by 1 or stays the same, and because we have just shown that $W_{i_n}-S_{i_n} = n$). By the ``only if'' proof above, $i^{*}$ is the smallest index such that $\ceil{W_{i^{*}}} = d+n$, implying that in fact $i^* = i_n$, as desired. This establishes that if $i_n$ is the smallest index such that $\ceil{W_{i_n}} = d+n$, then $i_n$ is also the smallest index such that $W_{i_n}-S_{i_n}=n$.

 This completes the proof: the $n^{th}$ active element discarded by the algorithm is the smallest index such that $W_i - S_i = n$. By the work above, this is exactly the smallest index such that $\ceil{W_i} = n+d$. Therefore, discarded elements are exactly those that reach a new integral height for the first time.
\end{proof} 

Our remaining task is simply to upper bound the probability that $W_i$ reaches a new integral height, for all $i$.

\begin{lemma}\label{prob_upper_bd} For each $m \in [1,n]$
$$\Pr(\ceil{W_m} > d \text{ and } \ceil{W_{m}} > \ceil{W_i} \text{ for $i < m$} ) \leq \frac{2x_m}{d-1}$$
\end{lemma}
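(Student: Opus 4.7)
The plan is to decompose the event according to the integer level that $\ceil{W_m}$ reaches, extract a factor of $x_m$ by observing that $e_m$ must be active, and then invoke the martingale estimate from Lemma~\ref{lemma_martingale_bounds} to close out. I would begin by introducing the stopping times $\tau_n := \min\{i : \ceil{W_i} \geq n\}$ for each integer $n$, and observe that because every jump satisfies $|X_i| \leq 1 - x_i < 1$, the walk $W$ cannot skip an integer level in a single step; hence $\ceil{W_{\tau_n}} = n$ exactly on $\{\tau_n < \infty\}$. It follows that the event ``$\ceil{W_m} > d$ and $\ceil{W_m} > \ceil{W_i}$ for all $i < m$'' is precisely the disjoint union $\bigcup_{n \geq d+1}\{\tau_n = m\}$.

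Second, I would argue that $\tau_n = m$ forces $e_m$ to be active: if $e_m$ were inactive then $X_m = -x_m \leq 0$, so $\ceil{W_m} \leq \ceil{W_{m-1}}$, contradicting the new strict maximum. Conditioning on $e_m$ active (probability $x_m$), we have $W_m = W_{m-1} + 1 - x_m$, so requiring $W_m \in (n-1, n]$ together with $\tau_n > m-1$ (which forces $W_{m-1} \leq n-1$) pins $W_{m-1}$ in the thin interval $(n-2+x_m,\ n-1]$ of width $1-x_m$. These intervals are mutually disjoint across $n$, so by independence of the activity of $e_m$ from $\mathcal{F}_{m-1}$,
\[
\Pr\bigl[\ceil{W_m} > d,\ \ceil{W_m} > \ceil{W_i}\ \forall i < m\bigr] \;=\; x_m\sum_{n\geq d+1}\Pr\bigl[W_{m-1}\in(n-2+x_m,\ n-1],\ \tau_n > m-1\bigr].
\]

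The remaining step is to bound the sum on the right-hand side by $\frac{2}{d-1}$, which I would do by invoking Lemma~\ref{lemma_martingale_bounds}, the generalization of Lemma~10 of~\cite{AKW14} referenced earlier in the paper. That lemma is tailor-made for precisely this quantity: the total probability that a zero-mean martingale with increments of magnitude less than $1$ sits in a thin window just below an integer level at the first time the walk reaches it, summed over all integer levels above a height threshold.

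The main obstacle is this final inequality. Naive one-variable tail bounds on $W_{m-1}$ are insufficient: Markov applied to $W_{m-1}^+$ gives only $\Pr[W_{m-1} > d-1] \leq O(\sqrt{k}/(d-1))$, and Chebyshev gives $O(k/(d-1)^2)$; both are useless when $d = \sqrt{k}$. The sharp bound must exploit (i) the mutual disjointness of the thin intervals $(n-2+x_m,\,n-1]$ across $n$ and (ii) the first-crossing conditions $\tau_n > m-1$, which constrain the entire path up to time $m-1$ rather than just the endpoint $W_{m-1}$. The cleanest way to obtain the $\tfrac{2}{d-1}$ factor is via an optional-stopping-type argument on $W$ stopped at $\tau_n \wedge (m-1)$, trading the mean-zero property against the height $\approx n-1$ that the walk must attain; this is exactly the content that Lemma~\ref{lemma_martingale_bounds} delivers.
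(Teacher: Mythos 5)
Your decomposition into the disjoint events $\{\tau_n=m\}$ for integer $n>d$, the observation that $e_m$ must be active on that event, and the pinning of $W_{m-1}$ into the thin window $(n-2+x_m,\,n-1]$ are all correct and mirror the combinatorics of the paper's argument. The gap is the final step. Lemma~\ref{lemma_martingale_bounds} bounds the probability of a \emph{single} two-barrier event $\{M_n<a,\ Q_n\le b\}$ for a martingale started at $0$; it is not, as stated, a bound on a sum over integer levels, and you have not exhibited the martingale or the barriers to which it would apply. The route you sketch---optional stopping on $W$ at $\tau_n\wedge(m-1)$, one level $n$ at a time---yields per-level estimates of order $\mathbb{E}[W_{m-1}^-]/(n-1)\approx\sqrt{k}/(n-1)$, and the resulting (telescoped) sum over $n>d$ is only $O(\sqrt{k}/d)=O(1)$ when $d=\sqrt{k}$, which is useless.

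The missing idea is a \emph{time reversal} of the walk, and it is not cosmetic but the crux of the proof. First, your disjoint union over $n$ collapses to a single event: if $W_{m-1}\in(n-2+x_m,\,n-1]$ and $W_i\le n-1$ for all $i\le m-1$, then $\max_{i\le m-1}W_i - W_{m-1}<1-x_m<1$ and $W_{m-1}>d-1+x_m>d-1$, so the union is contained in $\{\max_{i\le m-1}W_i - W_{m-1}<1,\ W_{m-1}>d-1\}$. Now set $Q_j:=W_{m-1-j}-W_{m-1}$ for $j=0,\dots,m-1$; against a fixed-order adversary the $X_i$ are independent and mean zero, so $Q$ is a martingale with $Q_0=0$, increments bounded by $1$, $\max_{1\le j\le m-1}Q_j\le \max_{i\le m-1}W_i-W_{m-1}$, and $Q_{m-1}=-W_{m-1}$. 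The containing event becomes $\{\max_j Q_j<1,\ Q_{m-1}\le-(d-1)\}$, and only now does Lemma~\ref{lemma_martingale_bounds} apply, with $a=1$, $K=1$, $b=-(d-1)$, yielding $\tfrac{2}{d-1}$. The forward walk $W$ has its max/endpoint conditions in the wrong orientation relative to the lemma's hypotheses and is unbounded below, which is why a direct forward optional-stopping argument does not fit and the reversal is needed.
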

\begin{proof}
Fix $m \in [1,n]$. Consider the process $Q_i := W_{m-i-1}-W_{m-1}$ for $i = 0,1, \ldots, m-1$. Intuitively, $\{Q_i\}_{i=0}^{m-1}$ is the ``reversed'' process $W$ starting at time-step $m-1$. Note that $Q_0 = 0$ and $Q_{m-1} = -W_{m-1}$. Observe also that
$$Q_{i}-Q_{i-1} = W_{m-i-1}-W_{m-1}-(W_{m-i}-W_{m-1}) = W_{m-i-1}-W_{m-i} = -X_{m-i}$$
Note that since the adversary does not see which elements are active and has to commit to their order a priori, we know that $X_1, \ldots, X_n$ are independent. We also note that 
$$\mathbb{E}[X_i] = x_i(1-x_i) -x_i(1-x_i) = 0$$
By combining the previous two facts, we obtain that $\{Q_{i}\}_{i=0}^{m-1}$ is a discrete martingale. Let's denote its maximum by $M_{m-1} = \max_{1 \leq i \leq m-1} Q_i$. We will next show that if $\ceil{W_m} > d$ and $\ceil{W_{m}} > \ceil{W_i}$ for $i < m$, then the following two events have to hold:
\begin{itemize}
    \item $e_m$ is active.
    \item $M_{m-1} < 1$ and $Q_{m-1} \leq -(d-1)$. That is, the martingale $Q$ can never reach a height of $1$, \emph{and} it must finish below $-(d-1)$. 
\end{itemize}
Indeed, since $\ceil{W_{m}} > \ceil{W_{m-1}}$, then $e_m$ is active. Also since $\ceil{W_m} > \ceil{W_i}$ for $i < m$, we know that $W_m > W_{m-i-1}$ for all $i$. Combining this with the inequality $W_m \leq W_{m-1}+1$, we obtain 
$$Q_i = W_{m-i-1}-W_{m-1} \leq W_{m-i-1}-W_m+1 < 1$$
Thus, $Q_i < 1$ for all $i \in [1,m-1]$ or equivalently $M_{m-1} < 1$. The condition $\ceil{W_m} > d$ implies that $W_m > d$. Using the last inequality we obtain $$Q_{m-1} = -W_{m-1} \leq -W_m+1 \leq -(d-1)$$
Therefore, we obtained that $e_m$ is active, $M_{m-1} < 1$ and $Q_{m-1} \leq -(d-1)$ as desired. 

Using the above property combined with the fact that the event whether $e_m$ is active is independent of the events $M_{m-1} < 1$ and $Q_{m-1} \leq -(d-1)$ (because $M_{m-1}$ and $Q_{m-1}$ are determined entirely by the previous $m-1$ elements) we obtain that:
\begin{align*}
&\Pr(\ceil{W_m} > d \text{ and } \ceil{W_{m}} > \ceil{W_i} \text{ for $i < m$} ) \\
&\leq \Pr(\{e_m \text{ is active}\} \cap \{M_{m-1} < 1 \text{ and }Q_{m-1} \leq -(d-1)\})\\
&=\Pr(e_m \text{ is active}) \cdot \Pr(M_{m-1} < 1 \text{ and }Q_{m-1} \leq -(d-1))\\
&= x_m \cdot \Pr(M_{m-1} < 1 \text{ and }Q_{m-1} \leq -(d-1))
\end{align*}

So, the final step is to upper bound $\Pr(M_{m-1} < 1 \text{ and }Q_{m-1} \leq -(d-1))$, which is just a claim about martingales that change by at most $1$ in each step. Lemma~\ref{lemma_martingale_bounds}, which is a short application of the Optional Stopping Theorem, applied for $a = 1, b = -(d-1) < 0$, and $K = 1$ implies that: 
$$\Pr(M_{m-1} < 1 \text{ and }Q_{m-1} \leq -(d-1)) \leq \frac{2}{d-1}.$$
Therefore, 
$$\Pr(\ceil{W_{m}} > \ceil{W_i} \text{ for $i < m$ and } \ceil{W_m} > d) \leq \frac{2x_m}{d-1},$$
which concludes the proof of the lemma. 
\end{proof}

This suffices to wrap up the proof of Proposition~\ref{prop:main}.

\begin{proof}[Proof of Proposition~\ref{prop:main}]
We have that 
\begin{align*}
    \Pr(e_m \text{ is selected} | e_m \text{ is active}) &= 1-\Pr(e_m \text{ is discarded} | e_m \text{ is active}) \\
    &= 1-\frac{\Pr(e_m \text{ is active and discarded })}{\Pr(e_m \text{ is active})} \\
    &\geq 1-\frac{\frac{2x_m}{d-1}}{x_m} = 1-\frac{2}{d-1} \text{ (by Lemma \ref{prob_upper_bd}).}
\end{align*}\end{proof}

Setting $d = \sqrt{k}$ in Proposition \ref{prop:main} we get
\begin{corollary}\label{cor:main} \textbf{Algorithm}($\sqrt{k},x$) is $(1-\frac{1}{\sqrt{k}},1-\frac{2}{\sqrt{k}-1})$-selectable, and therefore \textbf{OCRS}($x$) is $(1-O(\frac{1}{\sqrt{k}}))$-selectable.
\end{corollary}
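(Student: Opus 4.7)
The plan is simply to instantiate Proposition~\ref{prop:main} at the specific parameter choice $d = \sqrt{k}$ and then pipe the resulting $(b,c)$-selectability guarantee through the reduction recorded in Observation~\ref{obs:main}. All the technical content has already been established in the earlier lemmas (the characterization of discarded elements via the $W$-process in Lemma~\ref{integral_height} and the martingale bound in Lemma~\ref{prob_upper_bd}); the corollary is bookkeeping.

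First I would substitute $d := \sqrt{k}$ directly into the statement of Proposition~\ref{prop:main}. The scaling factor becomes $1 - d/k = 1 - \sqrt{k}/k = 1 - 1/\sqrt{k}$, and the selection probability becomes $1 - 2/(d-1) = 1 - 2/(\sqrt{k}-1)$. This immediately yields the first half of the claim: \textbf{Algorithm}($\sqrt{k},x$) is $(1-\tfrac{1}{\sqrt{k}},\, 1-\tfrac{2}{\sqrt{k}-1})$-selectable over $P_k$. Note that $\tfrac{2}{\sqrt{k}-1} = O(\tfrac{1}{\sqrt{k}})$ for large $k$, so this is equivalently a $(1-\tfrac{1}{\sqrt{k}},\, 1-O(\tfrac{1}{\sqrt{k}}))$-selectable OCRS for $P_k$.

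For the second half, I would simply invoke Observation~\ref{obs:main}, whose hypothesis is exactly the statement just established. Observation~\ref{obs:main} in turn appeals to the \cite{fsz16} reduction of Observation~\ref{(b,c)-sel_implies_bc_sel}, which converts a $(b,c)$-selectable OCRS into a $bc$-selectable one. Here $bc = (1-\tfrac{1}{\sqrt{k}})(1-O(\tfrac{1}{\sqrt{k}})) = 1 - O(\tfrac{1}{\sqrt{k}})$, so \textbf{OCRS}($x$) is $(1-O(\tfrac{1}{\sqrt{k}}))$-selectable for $P_k$, as claimed.

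There is no genuine obstacle in this final step: the heavy lifting was the reduction of selectability to a property of the reversed random walk $Q$ and the Optional Stopping argument invoked in Lemma~\ref{prob_upper_bd}. The only care required here is to verify that the asymptotics $1 - 2/(\sqrt{k}-1) = 1 - O(1/\sqrt{k})$ absorb cleanly into the Observation~\ref{obs:main} black box, which they do. Therefore the corollary's proof is a two-line combination of Proposition~\ref{prop:main} and Observation~\ref{obs:main}.
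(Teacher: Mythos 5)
Your proof is correct and matches the paper's own argument exactly: the paper also obtains the corollary by substituting $d=\sqrt{k}$ into Proposition~\ref{prop:main} and then invoking Observation~\ref{obs:main}. No further comment is needed.
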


\begin{remark}
An intuitive comparison with the extremely simple OCRS implied in \cite{hks07} is the following. The OCRS from~\cite{hks07} needs to "scale" probabilities down with $(1-\Theta(\sqrt{\frac{\log(k)}{k}}))$ in order to have a $(1-\Theta(\sqrt{\frac{\log(k)}{k}}))$ chance of not running out of space. Our result shows that if we adaptively select active elements whenever the number of selected ones does not exceed the expected number by $d$, then we only need to "scale" down probabilities by $(1-\Theta(\frac{1}{\sqrt{k}}))$ to have a $(1-\Theta(\frac{1}{\sqrt{k}}))$ chance of not running out of space.
\end{remark}

\begin{remark}
    Intuitively, an online adversary can potentially manipulate the algorithm by revealing an element $e_j$ with a high with large (resp. small) $x_j$ when the algorithm has selected many (resp. few) elements above the expectation. In this scenario, the proof of Lemma \ref{prob_upper_bd} will not go through since the "reversed" process $Q_i$ could have correlated steps and will (in general) fail to be a martingale. 
\end{remark}

\section{Upper bound against almighty adversaries}\label{opt_guarantee_against_almigthy}
Recall that \cite{hks07} implies that the naive greedy OCRS can be used to obtain a $(1-O(\sqrt{\frac{\log(k)}{k}}))$-selectable OCRS against the almighty adversary. (Theorem \ref{sqrt_lgk_ocrs}) We recall that the almighty adversary knows which elements are active a priori and also knows everything about the OCRS (Definition \ref{adv_stength_ocrs}). In this section we show that, against the almighty adversary, the probability of selection of any OCRS cannot be greater than $(1-\Omega(\sqrt{\frac{\log(k)}{k}}))$. This implies that, against almighty adversaries the factor of $(1-O(\sqrt{\frac{\log(k)}{k}}))$, achieved by the naive greedy OCRS, is asymptotically optimal. In particular, our main goal in this section is to prove the following theorem.

\begin{theorem}\label{thm_ocrs_almigthy}
Suppose that a $c$-selectable OCRS for the $k$-uniform matroid against the almighty adversary exists. Then $c \leq 1- \Omega(\sqrt{\frac{\log(k)}{k}})$
\end{theorem}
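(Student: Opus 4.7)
The plan is to exhibit a specific hard instance on which the almighty adversary's adaptive power beats the $1-O(1/\sqrt{k})$ bound that already suffices against a fixed-order adversary. A natural choice is the symmetric instance with $n = 2k$ elements each having $x_i = 1/2$, so $x \in P_k$ and the active set $A$ satisfies $|A| \sim \mathrm{Bin}(2k,1/2)$, concentrating around $k$ with $\Theta(\sqrt{k})$ fluctuations. Since an almighty adversary observes every random coin of the OCRS, a standard Yao-style reduction lets us assume without loss of generality that the OCRS $\pi$ is deterministic: any randomized OCRS is a distribution over deterministic ones and the adversary defeats a worst pure strategy in the support.

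Fix a deterministic $\pi$ and a target element $e^\star$. The adversary I would use reveals $e^\star$ in the last position and, on the first $n-1$ steps, adaptively reveals the other elements in an order $\sigma_{e^\star}(A)$ depending on the realized $A$. Because $e^\star$ cannot be selected when the cap is already full,
\[
\Pr[e^\star \text{ not selected} \mid e^\star \in A] \ \ge\ \Pr[|S_{n-1}| = k \mid e^\star \in A],
\]
so it suffices to show that some adaptive choice of $\sigma_{e^\star}(A)$ drives this fullness probability to $\Omega(\sqrt{\log k/k})$. Crucially, a non-adaptive (fixed-order) adversary only achieves $\Omega(1/\sqrt{k})$ via the averaging bound $\mathbb{E}[|S|] \le \mathbb{E}[\min(|A|,k)] = k - \Theta(\sqrt{k})$, so genuinely exploiting the adversary's dependence on $A$ is essential.

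To prove the $\Omega(\sqrt{\log k/k})$ bound, I would use a dichotomy calibrated to the HKS upper-bound trade-off. Write $c = 1 - \varepsilon$: summing per-element selectability gives $\mathbb{E}[|S|] \ge ck$, hence $\mathbb{E}[k - |S|] \le \varepsilon k$. If $\varepsilon \ge \Omega(\sqrt{\log k/k})$, we are already done. Otherwise, the slack is tiny in expectation, forcing $|S|$ to be tightly concentrated near $k$, and the adversary exploits the binomial anti-concentration $\Pr[|A \setminus \{e^\star\}| \ge k + t\sqrt{k}] = \Theta(e^{-t^2/2})$ with $t = \Theta(\sqrt{\log k})$: by inductively revealing active non-$e^\star$ elements that the algorithm will accept, it realizes the event $|S_{n-1}| = k$ on a $\Theta(\sqrt{\log k/k})$-fraction of realizations, matching the tail calibration $t \sim \sqrt{\log k}$ that HKS uses on the algorithm side.

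The main obstacle is to formalize the "adversary greedily fills the cap" step uniformly over all deterministic policies, not just HKS-like symmetric ones. Concretely, one must rule out clever deterministic rules that simultaneously keep $\mathbb{E}[|S_{n-1}|]$ very close to $k$ (as selectability forces) and keep $\Pr[|S_{n-1}| = k]$ below $o(\sqrt{\log k/k})$ under the adaptively-greedy adversarial ordering. I would tackle this via an exchange argument on the algorithm's decision tree: whenever $\pi$ rejects an active element at a non-full state, one can replace that decision with an acceptance without decreasing $\mathbb{E}[|S|]$, while tracking how this shifts the distribution of $|S_{n-1}|$ toward its upper boundary $k$. Balancing the binomial tail parameter $t \sim \sqrt{\log k}$ against the per-element rejection budget then pins down the exponent $\sqrt{\log k/k}$ and yields $c \le 1 - \Omega(\sqrt{\log k/k})$ as claimed.
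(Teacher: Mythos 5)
Your overall intuition---use the symmetric instance $n=2k$, $x_i=\tfrac12$, invoke binomial anti-concentration, and force the target into a position where the cap tends to be full---is aligned with the paper's strategy, and you correctly calibrate the tail parameter $t\sim\sqrt{\log k}$. But there are two genuine gaps that would prevent the argument from going through as written.

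First, the ``standard Yao-style reduction to a deterministic OCRS'' does not work here, and this is not a technicality. A $c$-selectable OCRS must satisfy $\Pr[e\text{ selected}\mid e\text{ active}]\ge c$ \emph{simultaneously for every element $e$}. If $\pi$ is a mixture over deterministic schemes $\pi_\omega$, you would need $\min_e \mathbb{E}_\omega[\text{sel}_e(\pi_\omega)]$ to be small, but what a per-$\omega$ argument can give you is only $\mathbb{E}_\omega[\min_e \text{sel}_e(\pi_\omega)]$ small, and $\min_e \mathbb{E}_\omega \ge \mathbb{E}_\omega \min_e$. In fact, it is easy to construct two-element examples where each pure strategy fails badly on one (coin-dependent) element while the mixture is perfectly selectable for both. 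The paper instead uses a symmetrization device: it samples a uniformly random permutation $\sigma$ of the ground set and works with $\pi^s = \pi_\sigma$, showing (i) $\pi^s$ inherits $c$-selectability, and (ii) under $\pi^s$ the probability of selecting the $i$-th \emph{revealed} active element depends only on the position $i$, not on the element's identity. That is the right way to make ``all elements look alike''; it is essential and there is no shortcut via derandomization.

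Second, the exchange/decision-tree argument is where the real content would need to be, and it is not carried out. Your dichotomy ($\varepsilon \ge \Omega(\sqrt{\log k/k})$ or else $\mathbb{E}[|S|]$ is close to $k$) is too weak on its own: even $\mathbb{E}[|S_{n-1}|]=k-1$ is consistent with $\Pr[|S_{n-1}|=k]=0$, so concentration of $|S|$ near $k$ in expectation does not by itself force the cap to be full often. Moreover, an exchange that makes the algorithm more aggressive (accept whenever there is room) may \emph{increase} $\mathbb{E}[|S|]$ while \emph{destroying} $c$-selectability for elements that arrive later, so the invariant you would want to preserve is unclear. The paper resolves this by a cleaner accounting: after symmetrization, the positional selectabilities $g(1),\ldots,g(2k)$ satisfy $\sum_i g(i)\le k$ (since at most $k$ elements are ever selected), the adversary in the class $\mathcal{H}$ (commit to an order, reveal active elements first) places the target at $\arg\min_{i\le m+1} g(i)$ when there are $m$ other active elements, and the resulting selection probability is
\[
\sum_{i=1}^{2k} h(i)\,\frac{\binom{2k-1}{i-1}}{2^{2k-1}},\qquad h(i)=\min_{j\le i} g(j),
\]
which is upper-bounded by an explicit LP whose optimum one can characterize and bound by $1-\Omega(\sqrt{\log k/k})$ via anti-concentration. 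This LP step is the piece your sketch most needs: it is what turns the soft statement ``the cap should be full often'' into a quantitative bound that holds for \emph{every} OCRS, not just HKS-like ones. If you want to rescue your outline, I would recommend replacing both the Yao step and the exchange argument with the symmetrization-plus-LP machinery, keeping your anti-concentration calculation for the final estimate.
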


In our proof of Theorem \ref{thm_ocrs_almigthy} we will only consider the instance $n = 2k$ and vector of probabilities $x_i = \frac{1}{2}$ for all $i \in [1,2k]$ (i.e. each element is active with probability $\frac{1}{2}$). To execute the proof, we will use the following strategy. 
\begin{itemize}
    \item We will consider the following subclass of almighty adversaries. The adversary knows which elements are active and everything about the OCRS. However, it needs to: 
   \begin{itemize}
       \item commit to the order in which the elements will be revealed a priori 
       \item reveal all active elements before all inactive ones
   \end{itemize}
   This type of restriction will be convenient for the analysis. We will refer to the class of such adversaries as $\mathcal{H}$. 
   \item Assuming the existence of a $c$-selectable OCRS $\pi$ against adversaries of class $\mathcal{H}$, we show that there exists $c$-selectable OCRS $\pi^{s}$ against adversaries of class $\mathcal{H}$, which selects the $i$-th revealed active element with probability independent of the identities of the first $i$ revealed elements. In other words, the probability that $\pi^{s}$ selects the $i$-th revealed active element is a function $g(i)$ (which depends on $\pi$ but not on the identities of the first $i$ revealed elements). (Section \ref{defining_symmetric_ocrs_section})
   \item By using the probability values $\{g(i)\}_{i=1}^{2k}$, we
   construct an adversary in $\mathcal{H}$. Based on this adversary, we upper bound the probability of selection $c$ by using the solution to a linear programming relaxation. (Section \ref{adversary_section})
\end{itemize}

Before we proceed with the proof we will fully describe a general model for how an OCRS works. An arbitrary OCRS $\pi$ operates in the following way:\\
\indent \textbf{1.} Before any elements are revealed $\pi$ can flip some random coins $coins_1$. \\
\indent \textbf{2.} Once the first element $a_1$ is revealed (and whether it is active or not), $\pi$ can flip more random coins $coins'_1$ and it makes a decision to select / discard $a_1$ with some probability based on $coins_1$, $coins'_1$ and the identity of the element $a_1$. Let $b_1$ be the indicator random variable of the event that $a_1$ is selected.\\
\indent \textbf{3.} Based on the history so far, i.e. $coins_1,a_1,coins'_1, b_1$, it flips more random coins $coins_2$ before the second element is revealed. \\
\indent \textbf{4.} After the identity of the second element $a_2$ (and its activity) is revealed $\pi$ flips more coins $coins'_2$ (based on the history so far), and makes a decision to select $a_2$ or not, which is recorded in an indicator variable $b_2$.\\
\indent \textbf{5.} Let $B_i = (coins_{i}, a_i, coins'_{i}, b_i)$ where $coins_i$ is the random coins $\pi$ flips right before seeing the $i$-the revealed element $a_i$, $coins'_i$ is the random coins $\pi$ flips after seeing $a_i$, and $b_i$ is the indicator of the event that $\pi$ selects $a_i$. \\
\indent \textbf{6.} In general, as a function of the history $ \overline{B}_{i} = \{B_1, \ldots, B_i\}$, $\pi$ flips random coins $coins_{i+1}$. Then the $(i+1)$-th element $a_{i+1}$ is revealed. Based on the history $\overline{B}_i \cup coins_{i+1} \cup a_{i+1}$, $\pi$ flips more random coins $coins'_{i+1}$. Finally, based on the history $\overline{B}_{i} \cup coins_{i+1} \cup a_{i+1} \cup coins'_{i+1}$, $\pi$ decides to select / discard $a_{i+1}$, which generates the indicator $b_{i+1}$. \\
\indent \textbf{7. } The procedure described in \textbf{6.} is repeated until the $n$-th element is selected / discard.

\subsection{Defining a symmetric OCRS}\label{defining_symmetric_ocrs_section}
Suppose we have a $c$-selectable OCRS $\pi$. Our goal in this section will first be to define what it means to ``apply a permutation'' to $\pi$. Then we will define the ``symmetric'' OCRS $\pi^{s}$ in the following way:\\
\indent \textbf{1} Sample uniformly random permutation $\sigma$ of $N$.\\
\indent \textbf{2} Apply $\sigma$ to $\pi$.\\
We will then show that the probability that $\pi^{s}$ selects the $i$-th active revealed element is independent on the identities of the first $i$ revealed elements. \\
\indent We will first introduce some definitions. Given an OCRS $\pi$, and a permutation $\sigma$ of the ground set $N$, we define $\pi_{\sigma}$ as the OCRS which ``treats'' each element $a_i$ exactly like $\pi$ would ``treat'' $\sigma^{-1}(a_i)$. Formally we use the following definition.
\begin{definition}\label{permutation_ocrs_def} Let $\pi$ be an OCRS and $\sigma$ a permutation of $N$. Define $\pi_{\sigma}$ as an OCRS which uses $\pi$ as a black-box in the following way: \\
\indent \textbf{1.} Before any elements are revealed $\pi_{\sigma}$  queries $\pi$ to flip some random coins $coins_1$. \\
\indent \textbf{2.} Once the first element $a_1$ is revealed (and whether it is active or not), $\pi_{\sigma}$ queries $\pi$ on $(coins_1, \sigma^{-1}(a_1))$ to flip more random coins $coins'_1$. Based on $(coins_1, \sigma^{-1}(a_1), coins'_1)$, $\pi$ will make some decision to select / discard $\sigma^{-1}(a_1)$ with some probability. Then $\pi_{\sigma}$ selects $a_1$ if and only if $\pi$ selects $\sigma^{-1}(a_1)$. Let $b_1$ be the indicator random variable of the event that $a_1$ is selected by $\pi_{\sigma}$.\\
\indent \textbf{3.} $\pi_{\sigma}$ queries $\pi$ on history
$coins_1,\sigma^{-1}(a_1),coins'_1, b_1$ to generate random coins $coins_2$ before the second element is revealed.\\
\indent \textbf{4.} After the identity of the second element $a_2$ (and its activity) is revealed $\pi_{\sigma}$ queries $\pi$ on
$coins_1,$ $\sigma^{-1}$, $(a_1),coins'_1, b_1, coins_2, \sigma^{-1}(a_2)$ to flip more random coins $coins'_2$. Then it queries $\pi$ on $coins_1,$ $\sigma^{-1}(a_1),$ $coins'_1, b_1, coins_2, \sigma^{-1}(a_2), coins'_2$ whether to select / discard $\sigma^{-1}(a_2)$, and $\pi_{\sigma}$ selects $a_2$ if and only if $\pi$ selects $\sigma^{-1}(a_2)$. \\
\indent \textbf{5.} Let $B_i = (coins_{i}, a_i, coins'_{i}, b_i)$ where $coins_i$ is the random coins $\pi_{\sigma}$ flips by querying $\pi$ right before seeing the $i$-the revealed element $a_i$, $coins'_i$ is the random coins $\pi_{\sigma}$ flips by querying $\pi$ after seeing $a_i$, and $b_i$ is the indicator of the event that $\pi_{\sigma}$ selects $a_i$. \\
\indent \textbf{6.} In general, as a function of the history $ \overline{B}_{i} = \{B_1, \ldots, B_i\}$, $\pi_{\sigma}$ will query 
$\pi$ on $ \overline{B'}_{i} = \{B'_1, \ldots, B'_i\}$, where 
$B'_i = (coins_i, \sigma^{-1}(a_i), coins'_{i+1}, b_i)$ to flips random coins $coins_{i+1}$ before the $(i+1)$-th active element $a_{i+1}$ is revealed. Then it queries $\pi$ on $\overline{B'}_i \cup coins_{i+1} \cup \sigma^{-1}(a_{i+1})$, to generate more random coins $coins'_{i+1}$, and then it queries it again on $\overline{B'}_i \cup coins_{i+1} \cup \sigma^{-1}(a_{i+1}) \cup coins'_{i+1}$ whether to select $a_{i+1}$ if and only if $\pi$ selects $\sigma^{-1}(a_{i+1})$. This generates an indicator $b_{i+1}$. \\
\indent \textbf{7. } The procedure described in \textbf{6.} is repeated until the $n$-th element is selected / discarded. 
\end{definition}
Similarly to applying a permutation to an OCRS $\pi$, we can apply a permutation to an adversary $\mathcal{A}$, resulting in an adversary $\mathcal{A}_{\sigma}$. Intuitively, $\mathcal{A}_{\sigma}$ ``treats'' element $a$ like $\mathcal{A}$ would treat $\sigma^{-1}(a)$. We give the following formal definition. Recall that $\mathcal{H}$ is the class of adversaries defined in the beginning of this section. 
\begin{definition}\label{permutation_adv_def} Let $\mathcal{A} \in \mathcal{H}$ be an adversary and $\sigma$ a permutation of the ground set $N$. We define the adversary $\mathcal{A}_{\sigma}$ as operating against an OCRS $\pi$ in the following way:\\
\indent \textbf{1. } Given a set of active elements $A$, $\mathcal{A}_{\sigma}$ queries $\mathcal{A}$ on a set of active elements $\sigma^{-1}(A)$ against the OCRS $\pi_{\sigma^{-1}}$. Upon this query $\mathcal{A}$ returns an order in which to reveal the elements from $N$.\\
\indent \textbf{2. } Given this order, if $\mathcal{A}$ chose to reveal element $a$ in the $i$-th position, $\mathcal{A}_{\sigma}$ reveals $\sigma(a)$ in the $i$-th position. 
\end{definition}
\begin{remark}
It is easy to see that if $\mathcal{A} \in \mathcal{H}$, then $\mathcal{A}_{\sigma} \in \mathcal{H}$. 
\end{remark}
We next show that the operation of applying a permutation to an OCRS or adversary is invertible. 
\begin{proposition}\label{prop_bijection}
For a given permutation $\sigma$, the maps $\pi \to \pi_{\sigma}$ and $\mathcal{A} \to \mathcal{A}_{\sigma}$ are bijections, with inverses $\pi \to \pi_{\sigma^{-1}}$ and $\mathcal{A} \to \mathcal{A}_{\sigma^{-1}}$ respectively. 
\end{proposition}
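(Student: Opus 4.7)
The plan is to verify that the constructions are mutual inverses by unwinding the definitions. For the OCRS map, the core observation is that by Definition~\ref{permutation_ocrs_def}, $\pi_\sigma$ behaves on element $a$ exactly as $\pi$ behaves on $\sigma^{-1}(a)$, using the same random coins. To prove $(\pi_\sigma)_{\sigma^{-1}} = \pi$, I would apply the definition twice: $(\pi_\sigma)_{\sigma^{-1}}$ behaves on element $a$ as $\pi_\sigma$ behaves on $(\sigma^{-1})^{-1}(a) = \sigma(a)$, which in turn is how $\pi$ behaves on $\sigma^{-1}(\sigma(a)) = a$. A short induction on the step index $i$ then shows that the joint distributions of $(B_1,\ldots,B_i)$ under $(\pi_\sigma)_{\sigma^{-1}}$ and under $\pi$ agree: the black-box queries to $\pi$ used by the twice-permuted algorithm involve the elements $\sigma^{-1}(\sigma(a_j)) = a_j$, and the selection indicator $b_j$ is propagated identically. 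The identity $(\pi_{\sigma^{-1}})_\sigma = \pi$ is the same argument with $\sigma$ and $\sigma^{-1}$ swapped.

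For the adversary map, the key point is that, by Definition~\ref{permutation_adv_def}, $\mathcal{A}_\sigma$ operating against an OCRS $\pi$ is equivalent to $\mathcal{A}$ operating against $\pi_{\sigma^{-1}}$ on the active set $\sigma^{-1}(A)$, with its produced order then mapped by $\sigma$. To show $(\mathcal{A}_\sigma)_{\sigma^{-1}} = \mathcal{A}$, I would plug in: $(\mathcal{A}_\sigma)_{\sigma^{-1}}$ against $\pi$ on active set $A$ queries $\mathcal{A}_\sigma$ against $\pi_\sigma$ on active set $\sigma(A)$, which queries $\mathcal{A}$ against $(\pi_\sigma)_{\sigma^{-1}}$ on active set $\sigma^{-1}(\sigma(A)) = A$. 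By the OCRS part just proved, $(\pi_\sigma)_{\sigma^{-1}} = \pi$, so $\mathcal{A}$ is being queried on exactly $(A,\pi)$. Finally, chasing the position labels, if $\mathcal{A}$ places element $a$ in position $i$, then $\mathcal{A}_\sigma$ places $\sigma(a)$ in position $i$, and $(\mathcal{A}_\sigma)_{\sigma^{-1}}$ places $\sigma^{-1}(\sigma(a)) = a$ in position $i$, matching $\mathcal{A}$. The symmetric identity $(\mathcal{A}_{\sigma^{-1}})_\sigma = \mathcal{A}$ is identical after interchanging $\sigma$ and $\sigma^{-1}$.

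Together, these four identities establish that $\pi \mapsto \pi_\sigma$ and $\mathcal{A} \mapsto \mathcal{A}_\sigma$ have two-sided inverses $\pi \mapsto \pi_{\sigma^{-1}}$ and $\mathcal{A} \mapsto \mathcal{A}_{\sigma^{-1}}$, hence are bijections. The only real obstacle is purely notational: keeping careful track of which history variables in Definition~\ref{permutation_ocrs_def} (the $coins_i$, $a_i$, $coins'_i$, $b_i$) get relabeled under a double application, and noting that randomness and activity of elements are preserved under the construction. Since the permutation acts only on element labels and the coins/active-set structure is transported unchanged, this bookkeeping is mechanical, and no probabilistic argument is needed beyond observing that the induced distributions on histories coincide.
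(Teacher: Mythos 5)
Your proposal is correct and takes essentially the same approach as the paper's own proof: both unwind Definitions~\ref{permutation_ocrs_def} and~\ref{permutation_adv_def} under a double application to show $(\pi_\sigma)_{\sigma^{-1}} = \pi$ and $(\mathcal{A}_\sigma)_{\sigma^{-1}} = \mathcal{A}$, with the adversary part relying on the OCRS identity. The extra mention of induction over the step index $i$ is just a more explicit rendering of the paper's reasoning about a generic history $\overline{B}$.
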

\begin{proof}
See Appendix \ref{appendix_opt_upper_bound} for a proof. 
\end{proof}

We will now need the following helpful lemma. 
\begin{lemma}\label{parallel_permutation}
Let $a \in N$ be an element and $A \subseteq N$ a subset of elements. Let $\sigma$ be a permutation of $N$, $\pi$ an OCRS, and $\mathcal{A} \in \mathcal{H}$ an adversary. Then
$$\Pr(\pi \text{ selects $a$ against $\mathcal{A}$}|A \text{ are active}) = \Pr(\pi_{\sigma} \text{ selects $\sigma(a)$ against $\mathcal{A}_{\sigma}$}|\sigma(A) \text{ are active})$$
\end{lemma}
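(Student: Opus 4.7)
The plan is to unfold the definitions of $\pi_{\sigma}$ (Definition~\ref{permutation_ocrs_def}) and $\mathcal{A}_{\sigma}$ (Definition~\ref{permutation_adv_def}) and construct an explicit coupling between two processes: (i) running $\pi$ against $\mathcal{A}$ with active set $A$, and (ii) running $\pi_{\sigma}$ against $\mathcal{A}_{\sigma}$ with active set $\sigma(A)$. The target is to show that under this coupling, $\pi_{\sigma}$ selects $\sigma(a)$ in process (ii) if and only if $\pi$ selects $a$ in process (i), which then immediately yields equality of the two conditional probabilities.

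First I would handle the adversary side. By Definition~\ref{permutation_adv_def}, when $\mathcal{A}_{\sigma}$ is given active set $\sigma(A)$ against the OCRS $\pi_{\sigma}$, it internally queries $\mathcal{A}$ on active set $\sigma^{-1}(\sigma(A)) = A$ against the OCRS $(\pi_{\sigma})_{\sigma^{-1}}$, which by Proposition~\ref{prop_bijection} equals $\pi$. Thus the ordering used in process (ii) is exactly the $\sigma$-image of the ordering used in process (i): if $\mathcal{A}$ reveals elements in the order $a_1, a_2, \ldots, a_n$, then $\mathcal{A}_{\sigma}$ reveals $\sigma(a_1), \sigma(a_2), \ldots, \sigma(a_n)$. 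Moreover $\sigma(a_i) \in \sigma(A)$ iff $a_i \in A$, so the active/inactive label of the $i$-th revealed element is the same in both processes, and $\mathcal{A}_{\sigma}$ still lies in $\mathcal{H}$ because $\mathcal{A}$ does.

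Second, I would couple the internal randomness of $\pi_{\sigma}$ with that of $\pi$. Definition~\ref{permutation_ocrs_def} stipulates that $\pi_{\sigma}$ simply uses $\pi$ as a black box at every step, handing $\pi$ the relabeled identity $\sigma^{-1}(\sigma(a_i)) = a_i$ whenever $\sigma(a_i)$ is revealed. I would feed the same random bits to the internal copy of $\pi$ inside $\pi_{\sigma}$ as to $\pi$ itself in process (i). Inductively on the step $i$, this guarantees that (a) the history the external $\pi_{\sigma}$ submits to its internal $\pi$ is identical, field by field, to the history $\overline{B}_i$ produced by $\pi$ in process (i), and hence (b) the select/discard decision made by $\pi_{\sigma}$ on $\sigma(a_i)$ agrees with the decision made by $\pi$ on $a_i$. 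Applied to the step at which $a$ (respectively $\sigma(a)$) is revealed, this gives $\{\pi_{\sigma}\text{ selects }\sigma(a)\} = \{\pi\text{ selects }a\}$ as events in the coupled space, and averaging over the shared randomness yields the lemma.

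The main obstacle is purely bookkeeping: one must carefully verify at each step that the history fed to the black-box $\pi$ inside $\pi_{\sigma}$ is exactly the history of $\pi$ in process (i), by repeatedly applying $\sigma^{-1} \circ \sigma = \mathrm{id}$ to every occurrence of an element identity, and that activity labels are preserved under $\sigma$. Once this induction is carried out, the joint distribution of the decision indicators $(b_i)_{i=1}^n$ is identical across the two processes, and the lemma follows immediately.
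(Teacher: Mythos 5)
Your proposal is correct and follows essentially the same route as the paper: both unfold Definitions \ref{permutation_ocrs_def} and \ref{permutation_adv_def}, invoke Proposition \ref{prop_bijection} to get $(\pi_{\sigma})_{\sigma^{-1}} = \pi$ so that the two adversaries produce $\sigma$-corresponding orderings, and then couple the internal randomness of $\pi_{\sigma}$ (which uses $\pi$ as a black box on the relabeled history) with that of $\pi$ to show the select/discard decisions agree step by step. Your write-up is a bit more explicit about the induction on histories, but the underlying argument is identical.
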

\begin{proof}
Consider the interaction between $\pi_{\sigma}$ and $\mathcal{A}_{\sigma}$ on set of active elements $\sigma(A)$. By Definition \ref{permutation_adv_def} before the process begins $\mathcal{A}_{\sigma}$ queries $\mathcal{A}$ on $\sigma^{-1}(\sigma(A)) = A$ against OCRS $(\pi_{\sigma})_{\sigma^{-1}} = \pi$ (Proposition \ref{prop_bijection}). Based on this if $\mathcal{A}$ chooses to first reveal the active elements $A$ to $\pi$ in some order, $\mathcal{A}$ will reveal $\sigma(A)$ to $\pi_{\sigma}$ in the same order (Definition \ref{permutation_adv_def}). Thus, if we now pair $\pi$ and $\pi_{\sigma}$ as in Definition \ref{permutation_ocrs_def}, we know that in the interaction when $\pi$ is given history $\overline{B}$ and $\pi_{\sigma}$ will be given history $\overline{B'}$, obtained from $\overline{B}$ by replacing each element $b$ by $\sigma(b)$. Thus, by Definition \ref{permutation_ocrs_def} it follows that $\pi_{\sigma}$ will select element $\sigma(a)$ if and only if $\pi$ selects element $a$. This finishes the proof. 
\end{proof}

We are now ready to show if $\pi$ is $c$-selectable against adversaries in $\mathcal{H}$, then $\pi_{\sigma}$ is also $c$-selectable against adversaries in $\mathcal{H}$.
\begin{lemma}\label{permutation_ocrs_cselectable}
If $\pi$ is $c$-selectable against adversaries in $\mathcal{H}$, then $\pi_{\sigma}$ is also $c$-selectable against adversaries in $\mathcal{H}$. 
\end{lemma}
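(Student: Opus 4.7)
The plan is to fix an arbitrary adversary $\mathcal{A}' \in \mathcal{H}$ and an arbitrary element $a' \in N$, and to show that
$$\Pr[\pi_{\sigma} \text{ selects } a' \text{ against } \mathcal{A}' \mid a' \text{ is active}] \geq c.$$
The main tools are Lemma~\ref{parallel_permutation}, which transfers selection probabilities between $(\pi_{\sigma}, \mathcal{A}_{\sigma})$ and $(\pi, \mathcal{A})$; Proposition~\ref{prop_bijection}, which lets me invert the map $\mathcal{A} \mapsto \mathcal{A}_{\sigma}$; and the fact that throughout this section the probability vector is $x_i = \tfrac{1}{2}$ for all $i$, which makes the active-set distribution permutation-invariant.

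First, I would set $\mathcal{A} := (\mathcal{A}')_{\sigma^{-1}}$; by Proposition~\ref{prop_bijection} we have $\mathcal{A}_{\sigma} = \mathcal{A}'$, and by the remark following Definition~\ref{permutation_adv_def} we also have $\mathcal{A} \in \mathcal{H}$. For any fixed $A' \ni a'$, applying Lemma~\ref{parallel_permutation} with element $\sigma^{-1}(a')$, active set $\sigma^{-1}(A')$, and adversary $\mathcal{A}$ yields
$$\Pr[\pi_{\sigma} \text{ selects } a' \text{ against } \mathcal{A}' \mid A' \text{ active}] = \Pr[\pi \text{ selects } \sigma^{-1}(a') \text{ against } \mathcal{A} \mid \sigma^{-1}(A') \text{ active}].$$

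Second, I would marginalize this identity over $A' \ni a'$. Here the hypothesis $x_i = \tfrac{1}{2}$ becomes essential: it makes the product distribution of active elements invariant under $\sigma^{-1}$, so conditional on $\sigma^{-1}(a') \in \sigma^{-1}(A')$, the set $\sigma^{-1}(A')$ has the same distribution as a freshly drawn active set conditional on $\sigma^{-1}(a')$ being active. After a change of summation variable $A := \sigma^{-1}(A')$, the marginalization therefore gives
$$\Pr[\pi_{\sigma} \text{ selects } a' \text{ against } \mathcal{A}' \mid a' \text{ active}] = \Pr[\pi \text{ selects } \sigma^{-1}(a') \text{ against } \mathcal{A} \mid \sigma^{-1}(a') \text{ active}],$$
and the right-hand side is at least $c$ by the $c$-selectability of $\pi$ against $\mathcal{A} \in \mathcal{H}$, completing the proof.

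The step requiring the most care will be the marginalization/symmetry argument: one has to verify that relabeling by $\sigma^{-1}$ really does preserve the conditional distribution on active sets, which is precisely where the uniform assumption $x_i = \tfrac{1}{2}$ is used. Everything else is a mechanical unwinding of Definitions~\ref{permutation_ocrs_def}--\ref{permutation_adv_def} via the two prior results.
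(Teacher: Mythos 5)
Your proposal is correct and takes essentially the same approach as the paper's proof: both apply Lemma~\ref{parallel_permutation} with adversary identified via the bijection of Proposition~\ref{prop_bijection}, and both exploit $x_i = \tfrac{1}{2}$ to make the active-set distribution permutation-invariant when marginalizing. The only cosmetic difference is that you pre-relabel the adversary (setting $\mathcal{A} := (\mathcal{A}')_{\sigma^{-1}}$) and work with probabilities conditioned on the element being active throughout, whereas the paper starts from the given adversary, computes the unconditional selection probability, and divides by $\Pr(a \text{ is active}) = \tfrac{1}{2}$ at the end.
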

\begin{proof}
Let $\mathcal{A} \in \mathcal{H}$ be an adversary and $a \in N$ an arbitrary element. We have that 
\begin{align*}
    &\Pr(\pi_{\sigma} \text{ selects $a$ against $\mathcal{A}$})= \\
    &= \sum_{A \subseteq N} \Pr(\pi_{\sigma} \text{ selects $a$ against $\mathcal{A}$}|\text{$A$ are active}) \Pr(\text{$A$ are active}) \\
    &=\sum_{A \subseteq N} \Pr(\pi_{\sigma} \text{ selects $a$ against $\mathcal{A}$}|\text{$A$ are active}) \frac{1}{2^{2k}} \\
    &= \sum_{A \subseteq N} \Pr(\pi\text{ selects $\sigma^{-1}(a)$ against $\mathcal{A}_{\sigma^{-1}}$}|\text{$\sigma^{-1}(A)$ are active}) \frac{1}{2^{2k}} \text{ (Lemma \ref{parallel_permutation} and Prop. \ref{prop_bijection})} \\
    &= \sum_{A \subseteq N} \Pr(\pi\text{ selects $\sigma^{-1}(a)$ against $\mathcal{A}_{\sigma^{-1}}$}|\text{$\sigma^{-1}(A)$ are active}) \Pr(\text{$\sigma^{-1}(A)$ are active})\text{ ($x_i = \frac{1}{2}, \forall i$)}\\
    &=  \Pr(\pi\text{ selects $\sigma^{-1}(a)$ against $\mathcal{A}_{\sigma^{-1}}$}) \geq \frac{c}{2}\text{ (by $c$-selectability of $\pi$} )
\end{align*}

Thus,  $$\Pr(\pi_{\sigma} \text{ selects $a$ against $\mathcal{A}$}| \text{ $a$ is active}) = \frac{\Pr(\pi_{\sigma} \text{ selects $a$ against $\mathcal{A}$})}{\Pr(\text{$a$ is active})} \geq c$$ since $\Pr(\text{$a$ is active})= \frac{1}{2}$. 
\end{proof}
\begin{remark}
Notice that in the proof of Lemma \ref{permutation_ocrs_cselectable} it was crucial that $x_i = \frac{1}{2}$ for all $i$, which we used to claim $\Pr(\text{$A$ are active}) = \Pr(\text{$\sigma^{-1}(A)$ are active}) = \frac{1}{2^{2k}}$.
\end{remark}
Suppose we have a $c$-selectable OCRS $\pi$ against adversaries in $\mathcal{H}$. We define the ``symmetric'' OCRS $\pi^{s}$ in the following way:\\
\indent \textbf{1.} Sample a uniformly random permutation $\sigma$ of the ground set $N$.\\
\indent \textbf{2.} Operate like $\pi_{\sigma}$\\
We will next show that $\pi^{s}$ is $c$-selectable against adversaries in $\mathcal{H}$ and that it does not differentiate between identities of different elements. 
\begin{lemma}\label{lemma_symmetric_c_slectable}
The OCRS $\pi^{s}$ is $c$-selectable against adversaries in $\mathcal{H}$.
\end{lemma}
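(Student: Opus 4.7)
The plan is to reduce the claim to Lemma~\ref{permutation_ocrs_cselectable} by simply conditioning on the internal random permutation $\sigma$ that $\pi^{s}$ samples, and then averaging. Fix an arbitrary adversary $\mathcal{A} \in \mathcal{H}$ and an arbitrary element $a \in N$; the goal is to show that
$$\Pr(\pi^{s}\text{ selects }a\text{ against }\mathcal{A}\mid a\text{ is active}) \geq c.$$

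The first step is to invoke the definition of $\pi^{s}$: before any elements are revealed, $\pi^{s}$ draws a uniformly random permutation $\sigma$ of $N$, and then for the remainder of the process it operates exactly as $\pi_{\sigma}$. I would therefore condition on the value of $\sigma$, obtaining
$$\Pr(\pi^{s}\text{ selects }a\text{ against }\mathcal{A}\mid a\text{ is active}) = \frac{1}{|N|!}\sum_{\sigma}\Pr(\pi_{\sigma}\text{ selects }a\text{ against }\mathcal{A}^{\sigma}\mid a\text{ is active}),$$
where $\mathcal{A}^{\sigma}$ denotes the effective adversary induced by $\mathcal{A}$ once we fix the value of $\sigma$ (if the adversary in $\mathcal{H}$ uses $\sigma$ as part of the information it conditions on, then $\mathcal{A}^{\sigma}$ is the deterministic-in-$\sigma$ strategy; if not, then $\mathcal{A}^{\sigma} = \mathcal{A}$ for every $\sigma$). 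In either case $\mathcal{A}^{\sigma}\in\mathcal{H}$, because the defining properties of $\mathcal{H}$ (committing to an order upfront and revealing active elements before inactive ones) are preserved when we fix auxiliary randomness.

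The second step is to apply Lemma~\ref{permutation_ocrs_cselectable} to each term in the sum. Since $\pi$ is assumed $c$-selectable against every adversary in $\mathcal{H}$, that lemma gives that $\pi_{\sigma}$ is also $c$-selectable against every adversary in $\mathcal{H}$ for each fixed $\sigma$; in particular against $\mathcal{A}^{\sigma}$. Thus each summand is at least $c$, and averaging yields $\Pr(\pi^{s}\text{ selects }a\text{ against }\mathcal{A}\mid a\text{ is active})\geq c$, which is precisely the definition of $c$-selectability.

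I do not expect a serious obstacle here: the only thing to be careful about is the bookkeeping in the conditional decomposition, namely verifying that for each fixed $\sigma$ the ``slice'' of $\mathcal{A}$ is legitimately an element of $\mathcal{H}$ so that Lemma~\ref{permutation_ocrs_cselectable} applies. Once that is spelled out, the conclusion follows by linearity of expectation over $\sigma$, with no further calculation needed.
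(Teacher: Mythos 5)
Your proof is correct and follows essentially the same route as the paper: condition on the internal permutation $\sigma$, apply Lemma~\ref{permutation_ocrs_cselectable} to each slice, and average. The one place the paper is cleaner is that it observes upfront that an adversary in $\mathcal{H}$ must commit to its order \emph{before} $\sigma$ is drawn, so $\mathcal{A}^{\sigma}=\mathcal{A}$ automatically and no case distinction is needed — your hedging about whether $\mathcal{A}$ could condition on $\sigma$ is unnecessary (and, indeed, would conflict with how $\mathcal{H}$ is used in Lemma~\ref{lemma_prob_is_independent_of_order}), though it does not harm the argument.
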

\begin{proof}
Let $\mathcal{A} \in \mathcal{H}$ be an adversary. Since $\mathcal{A}$ needs to decide on the order in which to reveal the elements before seeing what permutation $\sigma$ is drawn in step \textbf{1.}, we know that the order of elements revealed does not depend on $\sigma$. Thus, for an arbitrary element $a$ we have that 
\begin{align*}
    \Pr(\pi^{s} \text{ selects $a$ against $\mathcal{A}$}) &= \sum_{\sigma} \Pr(\pi_{\sigma} \text{ selects $a$ against $\mathcal{A}$})  \frac{1}{n!} \geq \text{(by Lemma \ref{permutation_ocrs_cselectable})}\\
    &\geq \sum_{\sigma} \frac{c}{2} \frac{1}{n!} = \frac{c}{2}
\end{align*}
as desired. \end{proof}

We will now state the key lemma for this Section. Namely, that $\pi^{s}$ selects the $i$-th revealed active element with probability independent of the identities of the first $i$ revealed elements. 

\begin{lemma}\label{lemma_prob_is_independent_of_order}
Against adversaries in $\mathcal{H}$, the probability that $\pi^{s}$ selects the $i$-th revealed element, conditioned on it being active, is given by a function $g(i)$ that is independent of the identities of the first $i$ revealed elements and their order. 
\end{lemma}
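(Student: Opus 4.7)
The plan is to combine two facts. First, $\pi^s$ is an online algorithm, so its decision at position $i$ is determined by its own internal randomness and the observed history of the first $i$ reveals; in particular, it does not depend on the adversary's future choices or on the unrevealed portion of the active set. Second, by its very construction, $\pi^s$ applies a uniformly random relabeling $\sigma$ before invoking $\pi$, which will wash out the dependence on the specific labels $y_1,\ldots,y_i$ appearing in the first $i$ positions.

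I would carry out the argument in three steps. In step one, I fix $\mathcal{A}\in\mathcal{H}$ and distinct $y_1,\ldots,y_i\in N$ and argue that
$$\Pr\big(\pi^s \text{ selects in pos } i \,\big|\, X_1=y_1,\ldots,X_i=y_i,\ y_i\ \text{active against }\mathcal{A}\big)$$
equals the probability that $\pi^s$, having observed the history ``$(y_1,\ldots,y_i)$ revealed in order, all active,'' selects the $i$-th element. This uses that $\mathcal{A}\in\mathcal{H}$ reveals all actives first, so $y_i$ active forces $y_1,\ldots,y_{i-1}$ active; and it uses that $\pi^s$'s internal coins are independent of the remaining active set and of the adversary's subsequent choices. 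In step two, I unpack $\pi^s=\pi_\sigma$ with $\sigma$ uniform: by Definition~\ref{permutation_ocrs_def}, on observed history $(y_1,\ldots,y_i)$ all active, $\pi_\sigma$ selects $y_i$ with exactly the probability that $\pi$, fed the internally translated history $(\sigma^{-1}(y_1),\ldots,\sigma^{-1}(y_i))$ all active, selects $\sigma^{-1}(y_i)$; averaging over $\sigma$ expresses the quantity as an expectation over a uniformly relabeled history. In step three, I observe that because $y_1,\ldots,y_i$ are distinct and $\sigma$ is uniform on permutations of $N$, the tuple $(\sigma^{-1}(y_1),\ldots,\sigma^{-1}(y_i))$ is uniformly distributed over injections $[i]\hookrightarrow N$ with a distribution independent of the fixed $y_1,\ldots,y_i$. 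The averaged quantity therefore depends only on $i$, and I define $g(i)$ to be this common value.

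The step I expect to be the main obstacle is step one: justifying that the ``global'' conditioning on $\{X_1=y_1,\ldots,X_i=y_i,\,y_i\ \text{active}\}$ can be replaced by the ``local'' conditioning on $\pi^s$'s observed history. This is intuitively standard for online algorithms, but a careful write-up must invoke both the measurability of the position-$i$ decision with respect to ($\pi^s$'s randomness, observed history) and the conditional independence of $\pi^s$'s internal coins from the adversary's randomness and the unrevealed active set. The other two steps are essentially bookkeeping built on Definition~\ref{permutation_ocrs_def} and the uniformity of $\sigma$, and need no further calculation.
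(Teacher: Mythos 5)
Your proposal is correct and takes essentially the same approach as the paper: both average over the uniformly random relabeling $\sigma$ to wash out the dependence on the specific identities $y_1,\ldots,y_i$. Your step three, observing that $(\sigma^{-1}(y_1),\ldots,\sigma^{-1}(y_i))$ is uniformly distributed over ordered injections $[i]\hookrightarrow N$ regardless of the fixed distinct $y_j$, is a cleaner way to phrase the paper's explicit counting argument (which decomposes by the previously selected subset $B$ and counts the number of $(\sigma,B)$ pairs mapping to each configuration $(A',B',a')$); your step one's concern about replacing global conditioning by local observed-history conditioning is a subtlety the paper handles implicitly but is indeed justified by $\pi^s$'s decision at step $i$ being measurable with respect to its internal coins and the observed prefix, together with the order in $\mathcal{H}$ being committed before $\sigma$ is drawn.
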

\begin{proof}
Consider any adversary $\mathcal{A} \in \mathcal{H}$. By definition $\mathcal{A}$ decides on the order in which to reveals the elements apriori, and reveals all active elements before all inactive ones. Suppose that the first $i$ elements that $\mathcal{A}$ reveals are $a_1, \ldots, a_i$ in that order. To prove the lemma it is enough to show that $$\Pr(\pi^{s} \text{ selects $a_i$}|a_1, \ldots, a_i \text{ are revealed}) = g(i)$$
where $g$ is allowed to depend only on $\pi$. Note that the permutation $\sigma$ drawn is independent of the order of the elements $a_1, \ldots, a_i$ by definition of $\mathcal{H}$. Thus, we have 
\begin{align*}
   &\Pr(\pi^{s} \text{ selects $a_i$}|a_1, \ldots, a_i \text{ are revealed}) =\\
   &= \sum_{\sigma}  \Pr(\pi_{\sigma} \text{ selects $a_i$}|a_1, \ldots, a_i \text{ are revealed})\Pr(\sigma \text{ is drawn}|a_1, \ldots, a_i \text{ are revealed})\\
&= \sum_{\sigma}\Pr(\pi_{\sigma} \text{ selects $a_i$}|a_1, \ldots, a_i \text{ are revealed}) \frac{1}{n!} \\
    &= \frac{1}{n!} \sum_{\sigma}\Pr(\pi_{\sigma} \text{ selects $a_i$}|a_1, \ldots, a_i \text{ are revealed}) \\
    &= \frac{1}{n!} \sum_{\sigma}\sum_{B \subseteq \{a_1, \ldots, a_{i-1}\}}\Pr(\pi_{\sigma} \text{ selects $a_i$} \text{ and } \pi_{\sigma} \text{ selected $B$}|a_1, \ldots, a_i \text{ are revealed}) \text{ (Definition \ref{permutation_ocrs_def})}\\
    &= \frac{1}{n!} \sum_{\sigma}\sum_{B \subseteq \{a_1, \ldots, a_{i-1}\}}\Pr(\pi\text{ selects $\sigma^{-1}(a_i)$}\text { and }\pi \text{ selected $\sigma^{-1}(B)$} | \sigma^{-1}(\{a_1, \ldots, a_i\}) \text{ revealed}) \text{    ($\Delta$)}
\end{align*}
In the second inequality we used the fact that the event that $\sigma$ is drawn is independent of the decision of the adversary for which $a_1, \ldots, a_i$ to reveal. The key observation is that expression $(\Delta)$ does not depend on the elements $a_1,\ldots, a_i$ but only on $i$.  Notice that for fixed $A',B',a'$, such that $|A'| = i$, $B' \subset A'$ and $a' \in A' \setminus B'$, there are exactly 
$$\binom{i-1}{|B'|}(|B'|)!(i-1-|B'|)!(n-i)!$$ terms in the sum ($\Delta$) of the form 
$$\Pr(\pi\text{ selects $a'$}\text { and }\pi \text{ selected $B'$}| A' \text{ revealed})$$
To see this consider the number of $(\sigma, B)$ which are solutions to $$\sigma^{-1}(a_i) = a', \sigma^{-1}(B) = B', \sigma^{-1}(A) = A'$$ where $A = \{a_1, \ldots, a_i\}$. There are $\binom{i-1}{|B'|}$ ways to choose $B$. Given $B$ there are $(|B'|)!(i-1-|B'|)!(n-i)!$ ways to choose $\sigma$ in order to send $a'$ to $a_i$, $B'$ to $B$, and $A'$ to $A$. Therefore, $\Delta$ is equal to 
$$\sum_{|A'| = i,  B' \subset A' , a' \in A' \setminus B'} \Pr(\pi\text{ selects $a'$}\text { and }\pi \text{ selected $B'$}| A' \text{ revealed}) \binom{i-1}{|B'|}(|B'|)!(i-1-|B'|)!(n-i)!$$
which only depends on $i$. Therefore, the probability that $\pi^{s}$ selects the $i$-th revealed element given that it is active only depends on $i$ we will denote it by $g(i)$. \end{proof}

\subsection{Upper bound on the probability of selection}\label{adversary_section}
In this section we will present an adversary in the class $\mathcal{H}$ and show how this adversary implies the upper bound of $1-\Omega(\sqrt{\frac{\log(k)}{k}})$ on $c$. By Lemma \ref{lemma_prob_is_independent_of_order} it follows that, against adversaries in $\mathcal{H}$, the probability that $\pi^{s}$ selects the $i$-th revealed active element is given by a function $g(i)$. \\
\indent We now describe the adversary. We will only specify what the adversary does when $e_1$ is active. \\

\textbf{Adversary $\mathcal{A}^{*}$: }\\
\indent \textbf{1. } If there are $m$ active elements except for $e_1$, the adversary computes $g(1), \ldots, g(m+1)$. \\
\indent \textbf{2. } Before the process starts, the adversary finds $j = \argmin_{i \in [1,m+1]} g(i)$ and reveals element $e_1$ at position $j$ and all other $m$ active elements on positions $j' \in [1,j-1] \cup[j+1,m+1]$ in arbitrary order. \\

It is not hard to see that the adversary $\mathcal{A}^{*}$ is in the class $\mathcal{H}$ because it commits to the order apriori and reveals all active elements before all inactive. We will next show the following lemma for the probability that $\pi^{s}$ selects $e_1$ against the above adversary.
\begin{lemma}\label{prob_sele1_adversary}
Let $h(m+1) = \min_{i \in [1,m+1]}g(i)$ for $m \in [0,2k-1]$. Then
$$\Pr(\pi^{s} \text{ selects $e_1$ against $\mathcal{A}^{*}$}| e_1 \text{ is active}) = \sum_{i=1}^{2k}h(i) \frac{\binom{2k-1}{i-1}}{2^{2k-1}}$$
\end{lemma}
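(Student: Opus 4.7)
The plan is a direct unwinding via the law of total probability, conditioning on the number of other active elements. The only ingredients needed are (i) Lemma \ref{lemma_prob_is_independent_of_order}, which says the selection probability of the $i$-th revealed active element depends only on $i$; (ii) the fact that under the instance considered in this section ($x_i = \tfrac{1}{2}$ for all $i$), the number of active elements other than $e_1$ is $\mathrm{Binomial}(2k-1,\tfrac{1}{2})$ and is independent of $e_1$ being active; and (iii) the definition of the adversary $\mathcal{A}^{*}$.

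First I would condition on the event that, aside from $e_1$, exactly $m$ of the other $2k-1$ elements are active. Since the activities of the elements are independent and each is $\tfrac{1}{2}$, this event has probability $\binom{2k-1}{m}/2^{2k-1}$, independent of whether $e_1$ is active. In this case $\mathcal{A}^{*}$ reveals a total of $m+1$ active elements in its first $m+1$ slots (followed later by the inactive ones, consistent with $\mathcal{A}^* \in \mathcal{H}$), and by construction places $e_1$ at position $j = \argmin_{i \in [1,m+1]} g(i)$.

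Next I would apply Lemma \ref{lemma_prob_is_independent_of_order}: against any adversary in $\mathcal{H}$, the probability that $\pi^{s}$ selects the element revealed in position $j$, conditional on it being active, is exactly $g(j)$ and depends only on $j$ (not on the identities or order of the first $j$ elements, and in particular not on which element $\mathcal{A}^{*}$ chose to put at position $j$). Since $\mathcal{A}^*$ chose $j = \argmin_{i \in [1,m+1]} g(i)$, this probability equals $\min_{i\in[1,m+1]} g(i) = h(m+1)$.

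Finally I would combine the two steps by the law of total probability:
\[
\Pr(\pi^{s}\text{ selects }e_1\text{ against }\mathcal{A}^{*}\mid e_1\text{ active}) \;=\; \sum_{m=0}^{2k-1} h(m+1)\,\frac{\binom{2k-1}{m}}{2^{2k-1}},
\]
and then reindex by $i := m+1$ to obtain the claimed identity. The whole argument is essentially bookkeeping; the only subtle point, which I would state carefully, is the independence used in step (i): because $\mathcal{A}^{*}\in\mathcal{H}$ commits to an ordering before seeing the realization, the draw of $\sigma$ inside $\pi^{s}$ is independent of both the activity pattern and the chosen revealed order, which is precisely what Lemma \ref{lemma_prob_is_independent_of_order} needs to apply to each fixed history of the first $j$ active elements.
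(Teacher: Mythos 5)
Your proposal is correct and follows essentially the same route as the paper: condition on the number $m$ of active elements other than $e_1$ (giving a $\mathrm{Binomial}(2k-1,\tfrac12)$ weight), observe that $\mathcal{A}^{*}$ puts $e_1$ at the $\arg\min$ position so Lemma \ref{lemma_prob_is_independent_of_order} yields selection probability $h(m+1)$, then apply the law of total probability and reindex. The only difference is that you spell out the independence of the activity pattern from the draw of $\sigma$ more explicitly than the paper, which leaves this implicit.
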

\begin{proof}
Suppose $e_1$ is active. Notice that the number of active elements is equal to $N+1$, where $N \sim Bin(2k-1, \frac{1}{2})$. By definition of $\mathcal{A}^{*}$, when there are $m$ active elements (except for $e_1$), the probability that $\pi^{s}$ selects $e_1$ is equal to $h(m+1)$. Therefore, by the law of total probability, $\pi^{s}$ selects $e_1$ with probability 
\begin{align*}
\Pr(\pi^{s} \text{ selects $e_1$ against $\mathcal{A}^{*}$}| e_1 \text{ is active}) &= \sum_{i=1}^{2k} h(i) \Pr[N+1 = i]\\
& =\sum_{i=1}^{2k} h(i) \Pr[Bin(2k-1,\frac{1}{2}) = i-1] \\
&= \sum_{i=1}^{2k} h(i) \frac{\binom{2k-1}{i-1}}{2^{2k-1}}
\end{align*}
as desired.
\end{proof}

We will next show a property on the values $\{g(i)\}_{i=1}^{2k}$ that will be useful later. 

\begin{lemma}\label{sum_of_gi}
$$\sum_{i=1}^{2k} g(i) \leq k$$
\end{lemma}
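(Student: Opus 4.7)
The plan is to exploit the $k$-uniform matroid feasibility constraint directly: $\pi^s$ can select at most $k$ elements in any realization, so in particular its expected number of selections, under any distribution over inputs, is at most $k$. To turn this into the desired bound on $\sum_i g(i)$, I would choose an input realization under which the expected number of selections is exactly $\sum_{i=1}^{2k} g(i)$.

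Concretely, first I would fix some adversary $\mathcal{A} \in \mathcal{H}$ (for instance, the one that commits upfront to revealing elements in the fixed order $e_1, e_2, \ldots, e_{2k}$), and condition on the event $E$ that all $2k$ elements happen to be active, which has positive probability $1/2^{2k}$. On $E$ the adversary's commitment is consistent with $\mathcal{H}$ (there are no inactive elements to push to the end), and the $i$-th revealed element is certainly active. By Lemma~\ref{lemma_prob_is_independent_of_order}, the conditional probability that $\pi^s$ selects this $i$-th revealed element equals $g(i)$.

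Second, I would apply linearity of expectation to get
\[
\mathbb{E}\bigl[\#\text{selected by }\pi^s \,\big|\, E\bigr] = \sum_{i=1}^{2k} \Pr[\,\pi^s\text{ selects the }i\text{-th revealed element}\,\mid\, E\,] = \sum_{i=1}^{2k} g(i).
\]
Since $\pi^s$ is an OCRS for the $k$-uniform matroid, the set it outputs has size at most $k$ with probability one, so the left-hand side is at most $k$, yielding the claim.

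The only subtlety I would want to double-check is the equality $\Pr[\pi^s\text{ selects the }i\text{-th revealed element}\mid E] = g(i)$. A priori, $g(i)$ as given by Lemma~\ref{lemma_prob_is_independent_of_order} is the conditional probability given only the identities and order of the first $i$ revealed elements, whereas here we are further conditioning on what happens later. But $\pi^s$ is online: its decision on the $i$-th element depends only on the history of the first $i$ steps (including its own internal coins and the random permutation $\sigma$ sampled up front, both of which are independent of future activity statuses), so conditioning on the activity of later elements does not change the decision on the $i$-th element. This is the only non-routine check in the argument; everything else is just linearity of expectation plus the cardinality-$k$ feasibility constraint.
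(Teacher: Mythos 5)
Your proof is correct and follows essentially the same approach as the paper: condition on the event that all $2k$ elements are active, apply Lemma~\ref{lemma_prob_is_independent_of_order} to get that the $i$-th revealed element is selected with probability $g(i)$, and combine linearity of expectation with the cardinality-$k$ feasibility constraint. The paper's own proof is terser and does not spell out the subtlety you flag (that $g(i)$ is unchanged by further conditioning on later activity statuses, because $\pi^s$ is online and $\sigma$ and its coins are independent of future activities), but this is a gap-filling clarification rather than a different argument.
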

\begin{proof}
Suppose that all $2k$ elements in $N$ are active. We know that any adversary in $\mathcal{H}$ will choose an order for the elements to be revealed before $\pi^{s}$ draws $\sigma$. In that case we know by Lemma \ref{lemma_prob_is_independent_of_order} that $\pi^{s}$ will select the $i$-th revealed element with probability $g(i)$. Let $I_{i}$ be the indicator random variable that $\pi^{s}$ selects that $i$-th revealed element. Note that $E[I_{i}] = g(i)$. Since $\pi^{s}$ never selects more than $k$ elements (Definition \ref{ocrs_def}) we have that 
$$k \geq  \mathbb{E}[\sum_{i=1}^{2k} I_i] = \sum_{i=1}^{2k} \mathbb{E}[I_i] = \sum_{i=1}^{2k} g(i)$$
as desired. 
\end{proof}\\

As a last step towards Theorem \ref{thm_ocrs_almigthy}, we consider a linear program relaxation whose optimal objective upper bounds the probability of selection $c$ (Lemma \ref{linear_relax_upper}), and characterize an optimal solution to the program (Lemma \ref{lp_solution_property}). Finally, we show that the optimal objective of the linear program is at most $1-\Omega(\sqrt{\frac{\log(k)}{k}})$ (Lemma \ref{anti_concentration_ineq}).
\begin{lemma}\label{linear_relax_upper}
Let $c^{*}$ denote the optimal value of the following linear program 
\begin{equation}\label{linear_relaxation}
\begin{array}{ll@{}ll}
\text{maximize}  & \displaystyle\sum\limits_{i=1}^{2k} f(i) \frac{\binom{2k-1}{i-1}}{2^{2k-1}} &\\
\text{subject to}& \displaystyle\sum_{i=1}^{2k}f(i) \leq k \\
&f(i) \geq f(i+1)\text{ for } & i=1 ,\dots, 2k-1\\
&f(2k)\geq 0\\
\end{array}
\end{equation}
Then if $\pi^{s}$ is $c$-selectable it holds that
$$c \leq c^{*}$$
\end{lemma}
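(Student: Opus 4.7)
The plan is to exhibit an explicit feasible solution to the linear program whose objective value is at least $c$. The natural candidate is the function $h$ defined in Lemma~\ref{prob_sele1_adversary}, namely $h(i) = \min_{j \in [1,i]} g(j)$, where $g$ is the per-position selection probability of the symmetric scheme $\pi^s$ given by Lemma~\ref{lemma_prob_is_independent_of_order}.

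First I would verify feasibility of $h$ for the three constraint families. Monotonicity $h(i) \geq h(i+1)$ is immediate because $h$ is defined as a minimum over a growing set. Nonnegativity $h(2k) \geq 0$ is immediate because each $g(j)$ is a probability. For the budget constraint, since $h(i) \leq g(i)$ pointwise, Lemma~\ref{sum_of_gi} gives
\begin{equation*}
\sum_{i=1}^{2k} h(i) \;\leq\; \sum_{i=1}^{2k} g(i) \;\leq\; k.
\end{equation*}

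Next I would bound the objective value from below. By Lemma~\ref{lemma_symmetric_c_slectable}, $\pi^s$ is $c$-selectable against adversaries in $\mathcal{H}$, and in particular against the specific adversary $\mathcal{A}^{*}$ constructed in Section~\ref{adversary_section}. Hence the probability that $\pi^s$ selects $e_1$, conditioned on $e_1$ being active, is at least $c$. Invoking Lemma~\ref{prob_sele1_adversary}, this probability equals exactly
\begin{equation*}
\sum_{i=1}^{2k} h(i)\,\frac{\binom{2k-1}{i-1}}{2^{2k-1}},
\end{equation*}
which is precisely the LP objective evaluated at the feasible point $h$. Therefore $c^{*}$, being the maximum over all feasible solutions, is at least this value, giving $c^{*} \geq c$.

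There is no real obstacle here: the entire content of the lemma is that the quantity produced by Lemma~\ref{prob_sele1_adversary} is of the form ``LP objective at a feasible point,'' so the argument is just the conjunction of the three feasibility checks with the lower bound on the selection probability. The only place where one must be careful is in remembering that the $c$-selectability applies to $\pi^s$ (Lemma~\ref{lemma_symmetric_c_slectable}) rather than to the original $\pi$, and that $\mathcal{A}^{*} \in \mathcal{H}$ so this selectability is indeed in force against $\mathcal{A}^{*}$.
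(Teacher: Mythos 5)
Your proof is correct and mirrors the paper's argument exactly: exhibit $h$ as a feasible point of the LP (feasibility via $h(i)\leq g(i)$ and Lemma~\ref{sum_of_gi}), then lower bound the objective at $h$ by $c$ using Lemma~\ref{prob_sele1_adversary} together with the $c$-selectability of $\pi^s$ from Lemma~\ref{lemma_symmetric_c_slectable}. No differences worth noting.
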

\begin{proof}
We first claim that $\{h(i)\}_{i=1}^{2k}$ is a feasible assignment to (\ref{linear_relaxation}). Notice that by definition we have that $h(i) \leq g(i)$ for $i \in [1,2k]$. Using this combined with Lemma \ref{sum_of_gi} we obtain 
$$\sum_{i=1}^{2k}h(i) \leq \sum_{i=1}^{2k}g(i) \leq k$$
Further, note that by definition we have that $h(i) \geq h(i+1)$ for $i \in [1,2k-1]$ and clearly $h(2k) \geq 0$. Combining the aforementioned observations we get that the vector $\{h(i)\}_{i=1}^{2k}$ is a feasible assignment of (\ref{linear_relaxation}). Therefore, by Lemma \ref{prob_sele1_adversary} we obtain that 
$$c^{*} \geq \sum_{i=1}^{2k}h(i) \frac{\binom{2k-1}{i-1}}{2^{2k-1}} = 
\Pr(\pi^{s} \text{ selects $e_1$ against $\mathcal{A}^{*}$}| e_1 \text{ is active})$$
By Lemma \ref{lemma_symmetric_c_slectable} we know that $\pi^{s}$ is $c$-selectable i.e. 
$$\Pr(\pi^{s} \text{ selects $e_1$ against $\mathcal{A}^{*}$}| e_1 \text{ is active}) \geq c$$
By combining the above two inequalities we obtain that $$c^{*} \geq c$$
which finishes the proof.
\end{proof}\\
We will next prove a claim for the optimal solution of the linear program (\ref{linear_relaxation}). 
\begin{lemma}\label{lp_solution_property}
The optimal solution of (\ref{linear_relaxation}) has the form $f(i) = x$  for $i =1 \ldots, k+a$, $f(k+a+1) = y$, and $f(i) = 0$ for $i > k+a+1$ for some $x \geq y \geq 0$ and $a \in [0,k]$. 
\end{lemma}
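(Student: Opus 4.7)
The plan is to reduce the proposition to a statement about the extreme points of the LP feasible polytope and then exploit the unimodality of the binomial weights $c_i := \binom{2k-1}{i-1}/2^{2k-1}$. The feasible region is bounded (combining $\sum_i f(i) \leq k$ with $f(i) \geq 0$ and monotonicity gives $0 \leq f(i) \leq k$), so the LP attains its optimum at a vertex; it therefore suffices to identify an optimal vertex and check that it has the claimed form (with $y = 0$).

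For the vertex characterization, I would count tight constraints. The LP has $2k$ variables and $2k+1$ inequality constraints: the $2k-1$ monotonicity inequalities $f(i) \geq f(i+1)$, the nonnegativity $f(2k) \geq 0$, and the budget $\sum_i f(i) \leq k$. A vertex requires $2k$ linearly independent tight constraints, hence at most one of these $2k+1$ can be slack. A short case analysis then yields exactly three vertex types: (i) if the budget is slack, then the monotonicities together with $f(2k) = 0$ force $f \equiv 0$; (ii) if $f(2k) \geq 0$ is slack, then all $2k-1$ monotonicity equalities force $f \equiv 1/2$; (iii) otherwise exactly one monotonicity $f(j) \geq f(j+1)$ is slack and the remaining tight constraints force the step vertex $v_j$ with $v_j(i) = k/j$ for $i \leq j$ and $v_j(i) = 0$ for $i > j$. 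In particular, any candidate with three or more distinct nonnegative levels yields only $2k-1$ tight constraints and is therefore never a vertex; this is the main obstacle, since one has to rule out all such configurations carefully.

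It remains to identify the optimal step $j^{\star} \in \{1, \ldots, 2k\}$. Writing $\phi(j) := \sum_{i=1}^{2k} v_j(i)\, c_i = \tfrac{k}{j}\sum_{i=1}^j c_i$, a direct computation gives
$$\phi(j+1) - \phi(j) = \frac{k}{j(j+1)}\Bigl(j\, c_{j+1} - \sum_{i=1}^j c_i\Bigr).$$
Since $c_{i+1}/c_i = (2k-i)/i > 1$ precisely when $i < k$, the sequence $c_1, \ldots, c_k$ is strictly increasing, so for $j+1 \leq k$ we have $c_{j+1} \geq c_i$ for all $i \leq j$, making the bracketed expression nonnegative. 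Hence $\phi$ is nondecreasing on $\{1, \ldots, k\}$ and $j^{\star} \geq k$. Setting $a := j^{\star} - k \in [0, k]$, $x := k/j^{\star}$, and $y := 0$ (with the understanding that when $a = k$ the ``$y$-entry'' lies outside the index range and the solution is the all-equal $f \equiv 1/2$) yields an optimal solution of the required form, completing the proof.
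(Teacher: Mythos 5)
Your proof is correct and reaches the conclusion by a genuinely different route than the paper's. The paper perturbs an arbitrary optimal $f^*$ locally: first it averages adjacent coordinates $i,i+1$ with $i<k$ (strictly increasing the objective because $b_i<b_{i+1}$) to force $f^*(1)=\cdots=f^*(k)$, and then it transfers $\epsilon$ of mass from the last positive coordinate $\ell$ to the first sub-$x$ coordinate $j$ (strictly increasing the objective because $b_j>b_\ell$ for $\ell>j>k$) to force the tail to be a single step. Your argument instead enumerates the vertices of the feasible polytope directly: with $2k$ variables and $2k+1$ inequalities (no point makes all $2k+1$ tight, as $f\equiv f(2k)=0$ contradicts $\sum_i f(i)=k$), exactly one constraint is slack at each vertex, giving $\mathbf{0}$, $f\equiv\tfrac12$, and the step functions $v_j=(k/j,\ldots,k/j,0,\ldots,0)$; unimodality of the binomial weights then shows $\phi(j)=\tfrac{k}{j}\sum_{i\le j}c_i$ is increasing for $j\le k$, so an optimal vertex has $j^\star\ge k$, i.e.\ the claimed form with $y=0$. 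Your route is arguably cleaner conceptually (one shot from LP extreme-point structure), and as a bonus it shows that the shoulder value $y$ can always be taken to be $0$, a slight strengthening, while the paper's perturbation argument is more elementary and avoids counting tight constraints. Both proofs establish the existence of an optimal solution of the claimed form, which is all the paper's subsequent bound $c^{*} \le \frac{k}{k+a}\Pr(Bin(2k-1,\tfrac12)\le k+a)$ requires. One small imprecision in your write-up: "three or more distinct nonnegative levels yields only $2k-1$ tight constraints" undercounts the obstructions — two distinct strictly positive levels also fails to be a vertex (only $2k-1$ tight there too) — but your exhaustive case split by which single constraint is slack already handles this correctly, so the conclusion stands.
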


\begin{proof}
First, note that the constraints of the linear program (\ref{linear_relaxation}) define a bounded convex polytope, so there exists a feasible assignment that achieves the optimum of (\ref{linear_relaxation}). Let $b_i = \frac{\binom{2k-1}{i-1}}{2^{2k-1}}$, we know that $b_i < b_{i+1}$ for $i <k$, $b_{k} = b_{k+1}$, and $b_{i} > b_{i+1}$ for $i > k$. Let $f^{*}$ be an optimal solution to (\ref{linear_relaxation}). Suppose that $f^{*}(i) > f^{*}(i+1)$ for some $i < k$. Then consider $f^{**}$ defined by $f^{**}(j) = f^{*}(j)$ for $j \not \in \{i,i+1\}$, and $f^{**}(i) = f^{**}(i+1) = \frac{f^{*}(i) + f^{*}(i+1)}{2}$. Note that $f^{**}$ still has decreasing non-negative entries and the sum of its entries is equal to that of $f^{*}$ i.e. it is feasible. The difference between the objective values of $f^{**}$ and $f^{*}$ is equal to 
$$\frac{f^{*}(i) + f^{*}(i+1)}{2}(b_i +b_{i+1}) - b_i f^{*}(i) -b_{i+1}f^{*}(i+1) = \frac{(f^{*}(i+1)-f^{*}(i))(b_i-b_{i+1})}{2} > 0$$
since $f^{*}(i+1) < f^{*}(i)$ and $b_i < b_{i+1}$. Thus, a contradiction with the optimality of $f^{*}$. Therefore, $f^{*}(i) = f^{*}(1)$ for all $i \leq k$. \\
\indent Let $f^{*}(i) = x$ for $i \leq k$. If $f^{*}(j) \in \{x,0\}$ for $j > k$ we are done. Otherwise let $j > k$ be the smallest index such that $x > f^{*}(j)> 0$, and let $f^{*}(j) = y$. This means $f^{*}(i) = x$ for $i < j$. Assume that $f^{*}(j+1)  > 0$. Then, let $l > j$ be the largest index such that $f^{*}(l) > 0$, and let $f^{*}(l)  = z$. Choose $\epsilon < \min(x-y,z)$ and consider $f^{**}$ defined by $f^{**}(j) = f^{*}(j) + \epsilon$, $f^{**}(l) = f^{*}(l)-\epsilon$, and $f^{**}(i) = f^{*}(i)$ for $i \not \in \{j, l\}$. Note that $f^{**}$ is feasible by the choice of $\epsilon$ since its entries are still decreasing and have the same sum as those of $f$.  The difference between the objective of $f^{**}$ and $f^{*}$ equals to
$$(f^{*}(j) + \epsilon)b_j + (f^{*}(l)-\epsilon)b_l - f^{*}(j) b_j - f^{*}(l)b_l = \epsilon(b_j - b_l) > 0$$ 
as $b_j > b_l$ because $l > j > k$, which contradicts the optimality of $f^{*}$. Thus, we showed that $f^{*}(j+1) = 0$, which finishes the proof of the Lemma.
\end{proof}\\

Note that by Lemma \ref{lp_solution_property} we know that the optimal value of (\ref{linear_relaxation}) has the following form 
\begin{equation}\label{c_star_eq}
c^{*} = x \Pr(Bin(2k-1, \frac{1}{2}) \leq k+a-1) + y \Pr(Bin(2k-1, \frac{1}{2}) = k+a)
\end{equation}
for some $x \geq y \geq 0$ satisfying $x(k+a) +y \leq k$, where $a \geq 0$. By using these constraints we easily obtain that $y \leq x \leq \frac{k}{k+a}$. By using this inequality in equation (\ref{c_star_eq}), we obtain that 
\begin{equation}\label{c_star_upper_bound}
    c^{*} \leq \frac{k}{k+a}\Pr(Bin(2k-1, \frac{1}{2}) \leq k+a)
\end{equation}

We now show the final Lemma of this section, which provides an upper bound for the RHS of (\ref{c_star_upper_bound}). 
\begin{lemma}\label{anti_concentration_ineq} Let $a \in [0,k]$, then 
$$\frac{k}{k+a} \Pr(Bin(2k-1,\frac{1}{2}) \leq k+a) \leq 1-\Omega(\sqrt{\frac{\log(k)}{k}})$$
\end{lemma}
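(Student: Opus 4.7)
The plan is to split on whether $a$ is above or below a threshold $a_0 := C\sqrt{k\log k}$ for a suitable absolute constant $C$ with $C^2 < 1/2$ (taking $C = 1/2$ works). The choice of $a_0$ is the unique scaling at which the two contributions to the slackness balance at $\Theta(\sqrt{\log k/k})$.

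If $a \geq a_0$, I would bound the tail probability trivially by $1$ and use only the prefactor: $\frac{k}{k+a} \leq \frac{k}{k+a_0} = 1 - \frac{a_0}{k+a_0} = 1 - (1+o(1))\,C\sqrt{\log k/k}$, since $a_0 = o(k)$. This gives the desired bound in the large-$a$ case.

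If $a < a_0$, I would bound the prefactor trivially by $1$ and use monotonicity of the cdf to pass to $\Pr(\mathrm{Bin}(2k-1, 1/2) \leq k+a_0)$; it then suffices to show $\Pr(\mathrm{Bin}(2k-1, 1/2) > k+a_0) = \Omega(\sqrt{\log k/k})$. My plan for this anti-concentration bound is a direct Stirling estimate: for $|j| = o(k^{2/3})$, one has $\binom{2k-1}{k+j}/2^{2k-1} = (1+o(1))(\pi k)^{-1/2}\exp(-j^2/k)$. Summing this lower bound over $j$ in a window $[a_0, a_0 + L]$ of length $L = \lceil\sqrt{k/\log k}\rceil$ (chosen so that $2a_0 L/k = O(1)$, whence $e^{-j^2/k}$ varies only by a constant factor across the window) yields a tail lower bound of order $\frac{L}{\sqrt{k}} \exp(-a_0^2/k) = \Theta\bigl((\log k)^{-1/2}\, k^{-C^2}\bigr)$. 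Since $C^2 < 1/2$, the quantity $k^{1/2 - C^2}/\log k$ tends to infinity, so this lower bound dominates $\sqrt{\log k/k}$ for all sufficiently large $k$.

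The main obstacle I expect is establishing the Stirling approximation uniformly across the window $[a_0, a_0 + L]$ with a constant-factor error; this is a standard calculation but requires some care because $a_0$ grows with $k$. A cleaner alternative is to invoke a local central limit theorem for the symmetric binomial, which immediately gives the needed asymptotic. Either route, combined with the two case bounds above, yields the claimed $1-\Omega(\sqrt{\log k/k})$ inequality.
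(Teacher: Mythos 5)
Your proposal is correct and its high-level structure mirrors the paper's, though the presentation and the anti-concentration ingredient differ. The paper argues by contradiction: assuming the product exceeds $1-\tfrac{1}{100}\sqrt{\log k/k}$, it extracts both $a < \tfrac{1}{50}\sqrt{k\log k}$ and an upper bound on the right binomial tail, then contradicts the latter using a cited anti-concentration inequality (Proposition~\ref{anti_conc_binomial}, from~\cite{mv08}, after passing from $2k-1$ to $2k-2$ trials to meet its even-$n$ hypothesis). Your direct case split on $a\gtrless C\sqrt{k\log k}$ is the contrapositive of the same dichotomy, so the two arguments are logically equivalent at the top level; the threshold $\sqrt{k\log k}$ is the same in both. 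Where you genuinely diverge is in how you certify the tail lower bound $\Pr(\mathrm{Bin}(2k-1,\tfrac12)>k+a_0)=\Omega(\sqrt{\log k/k})$: you propose a self-contained local-CLT/Stirling estimate, summing $\binom{2k-1}{k+j}2^{-(2k-1)} = (1+o(1))(\pi k)^{-1/2}e^{-j^2/k}$ over a window of length $L=\lceil\sqrt{k/\log k}\rceil$ past $a_0$. Your choices $a_0 = C\sqrt{k\log k}$ (any $C^2<1/2$) and $L$ are consistent: $a_0 L/k = O(1)$ so $e^{-j^2/k}$ varies by only a constant factor across the window, $a_0 + L = o(k^{2/3})$ so the local CLT error is uniformly $o(1)$, and the resulting bound $\Theta((\log k)^{-1/2}k^{-C^2})$ dominates $\sqrt{\log k/k}$ since $k^{1/2-C^2}/\log k\to\infty$. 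This buys you a self-contained proof (no external anti-concentration citation) at the cost of having to carry out the uniform Stirling estimate, which you correctly flag as the remaining work; the paper buys brevity at the cost of importing a black-box inequality and some constant-chasing ($1/100$, $1/50$, $32/625$) inside the contradiction.
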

\begin{proof}
It is enough to show that 
\begin{equation}\label{log_sqrt_upper}
\frac{k}{k+a} \Pr(Bin(2k-1,\frac{1}{2}) \leq k+a) \leq 1-\frac{1}{100}\sqrt{\frac{\log(k)}{k}}
\end{equation}
Let's assume, for the sake of contradiction, that inequality (\ref{log_sqrt_upper}) is not true. By this assumption we have the following chain of inequalities
\begin{align*}
    1-\frac{1}{100}\sqrt{\frac{\log(k)}{k}} &< \frac{k}{k+a} \Pr( Bin(2k-1, \frac{1}{2}) \leq k+a) \\
    & \leq \min\Bigg(\frac{k}{k+a}, \Pr(Bin(2k-1, \frac{1}{2}) \leq k+a)\Bigg) \\
    & \leq \min\Bigg(\frac{k}{k+a}, \Pr(Bin(2k-2, \frac{1}{2}) \leq k+a)\Bigg) \\
\end{align*}
where in the second line we used that each of the terms in the product is at most 1 and in the third line that $Bin(2k-1, \frac{1}{2})$ stochastically dominates $Bin(2k-2, \frac{1}{2})$. Thus,
\begin{equation}\label{useful_equation}
     1-\frac{1}{100}\sqrt{\frac{\log(k)}{k}} < \min\Bigg(\frac{k}{k+a}, \Pr(Bin(2k-2, \frac{1}{2}) \leq k+a)\Bigg)
\end{equation}
By (\ref{useful_equation}), we get 
\begin{align*}
    \frac{k}{k+a} &> 1-\frac{1}{100}\sqrt{\frac{\log(k)}{k}}\\
    k &> k+a -\frac{k+a}{100}\sqrt{\frac{\log(k)}{k}}\\
    a &< \frac{k+a}{100}\sqrt{\frac{\log(k)}{k}} \leq \frac{2k}{100}\sqrt{\frac{\log(k)}{k}} = \frac{1}{50}\sqrt{k \log(k)} \text{ (since $a \leq k$)}
\end{align*}
Combining the above inequality with (\ref{useful_equation}) again we obtain 
$$\Pr(Bin(2k-2, \frac{1}{2}) < k + \frac{1}{50} \sqrt{k \log(k)}) \geq \Pr(Bin(2k-2, \frac{1}{2}) \leq k+a) > 1-\frac{1}{100}\sqrt{\frac{\log(k)}{k}}$$
Subtracting one from both sides we get 
\begin{equation}\label{tail_upper_bd}
\Pr(Bin(2k-2, \frac{1}{2}) \geq k + \frac{1}{50} \sqrt{k \log(k)}) < \frac{1}{100}\sqrt{\frac{\log(k)}{k}}
\end{equation}
We will not use the following anti-concentration for binomial distribution given in Proposition \ref{anti_conc_binomial}. For $k' \in [\frac{n}{2}, \frac{5n}{8}]$ and even $n$ we have 
\begin{equation}\label{tail_lower_bd}
    \Pr(Bin(n, \frac{1}{2}) \geq k') \geq \frac{1}{15} \exp \Big(-16n(\frac{1}{2}-\frac{k'}{n})^2 \Big)
\end{equation}
Substituting $n = 2k-2$ and $k' = k + \frac{1}{50}\sqrt{k \log(k)}$ in (\ref{tail_lower_bd}) we obtain
\begin{align*}
    \Pr(Bin(2k-2,\frac{1}{2}) \geq k + \frac{1}{50}\sqrt{k \log(k)}) &\geq \frac{1}{15} \exp \Big(-16(2k-2)\Big(\frac{1}{2}-\frac{k + \frac{1}{50}\sqrt{k \log(k)}}{2k-2}\Big)^2 \Big)\\
    &= \frac{1}{15} \exp \Big(-32(k-1)\Big(\frac{1 + \frac{1}{50}\sqrt{k \log(k)}}{2k-2}\Big)^2 \Big) \\
    &= \frac{1}{15} \exp \Big(-8\frac{(1 + \frac{1}{50}\sqrt{k \log(k)})^2}{k-1} \Big) \\
    &\geq \frac{1}{15} \exp \Big(-8\frac{( \frac{2}{25}\sqrt{k \log(k)})^2}{k} \Big) \text{ (for big enough $k$)} \\
    &= \frac{1}{15} \exp \Big(-\frac{32}{625} \log(k)
\Big) = \frac{1}{15}\frac{1}{k^{\frac{32}{625}}}
\end{align*} 
Combining the last inequality with (\ref{tail_upper_bd}) we obtain
\begin{align*}
    \frac{1}{100} \sqrt{\frac{\log(k)}{k}} &> \frac{1}{15}\frac{1}{k^{\frac{32}{625}}}\\
    \iff \frac{15}{100}\sqrt{\log(k)} &\geq k^{\frac{1}{2}-\frac{32}{625}} = k^{\frac{561}{1250}}
\end{align*}
which fails to hold for large enough $k$. Thus, we obtain a contradiction. Therefore, (\ref{log_sqrt_upper}) is true, which proves the lemma. 
\end{proof}

\begin{remark}
    The optimal solution to the LP in Lemma \ref{linear_relax_upper} turns out not have the same values $f(i)$ as implied by the asymptotically optimal OCRS from \cite{hks07} (Theorem \ref{sqrt_lgk_ocrs}). This is because the values implied by this OCRS would have an asymptotically optimal performance as opposed to exactly instance optimal. 
\end{remark}

\subsection{Proof of Theorem \ref{thm_ocrs_almigthy}}
By Lemma \ref{linear_relax_upper}, we know that 
\begin{equation}\label{piece_one}
    c \leq c^{*} 
\end{equation}
Additionally,by combining (\ref{c_star_upper_bound}) and Lemma \ref{anti_concentration_ineq} we know that 
\begin{equation}\label{piece_two}
c^{*} \leq 1-\Omega(\sqrt{\frac{\log(k)}{k}})
\end{equation}
Combining (\ref{piece_one}) and (\ref{piece_two}) we obtain 
$$c \leq 1-\Omega(\sqrt{\frac{\log(k)}{k}})$$
finishing the proof of Theorem \ref{thm_ocrs_almigthy}.

\section{Conclusion}\label{conclusions_future_work_section}
We provide a new, simple, and optimal OCRS for $k$-uniform matroids against a fixed-order adversary. In particular, our algorithm has the advantage that it is extremely simple to implement and it does not require solving a mathematical program. Our analysis connects its performance to a random walk, and follows by concluding properties of this random walk. We expect that the tools we develop in analyzing our algorithm to be of independent interest and to have a broader applicability within online stochastic optimization. 

As our second main result, we show that no OCRS for $k$-uniform matroids can be $(1-\Omega(\sqrt{\frac{\log k}{k}}))$-selectable against an almighty adversary, establishing that the simple greedy OCRS implied by~\cite{hks07} is optimal.

\bibliographystyle{alpha}
\bibliography{References}

\appendix
\section{Omitted proofs}
\subsection{Omitted proofs from Section \ref{naive_algorithms}}\label{appendix_naive_algorithms}

\newenvironment{myproofT2}{\paragraph{Proof of Theorem \ref{no_bc}}}{\hfill $\square$}

\begin{myproofT2}
Assume the contrary, i.e. the naive greedy OCRS is $(b,c)$-selectable for some $b,c$ satisfying $bc = 1-O(\frac{1}{\sqrt{k}})$.  Suppose $n = 2k$ and $x_i = \frac{b}{2}$ for all $i$. Suppose the offline adversary reveals the elements in the order $e_1, \ldots, e_n$. Then in order for $e_n$ to be selected conditioned on it being active, it must be that the OCRS has selected at most $k-1$ elements from $e_1, \ldots, e_{n-1}$. By definition of the naive greedy OCRS this implies there must be at most $k-1$ active elements from $e_1, \ldots, e_{n-1}$. The number of active elements amongst these is distributed as $Bin(2k-1, \frac{b}{2})$. Thus, we have that 
$$\Pr(Bin(2k-1, \frac{b}{2}) \leq k-1) \geq \Pr(\text{$e_n$ is selected}|\text{$e_n$ is active}) \geq c$$
Thus by assumption, 
\begin{equation}\label{min_ineq}
\min(b,\Pr(Bin(2k-1, \frac{b}{2}) \leq k-1) \geq b\Pr(Bin(2k-1, \frac{b}{2}) \leq k-1) \geq bc \geq 1- \frac{C}{\sqrt{k}}
\end{equation}
So $b \geq 1-\frac{C}{\sqrt{k}}$. Using this we get 
$$\Pr(Bin(2k-1, \frac{1}{2}-\frac{C}{2\sqrt{k}}) \geq k) \leq \Pr(Bin(2k-1, \frac{b}{2}) \geq k)$$
because for $a \geq k$ the function  $\Pr(Bin(2k-1, x) = a) = \binom{2k-1}{a} x^{a} (1-x)^{2k-1-a}$ is increasing in $ x < \frac{1}{2}$. Therefore, 
\begin{align*}
\Pr(Bin(2k-1, \frac{1}{2}-\frac{C}{2\sqrt{k}})  \geq k) &\leq \Pr(Bin(2k-1, \frac{b}{2}) \geq k)\\
&= 1 - \Pr(Bin(2k-1, \frac{b}{2}) \leq k-1) \text{ (by (\ref{min_ineq}))}\\
&\leq 1-(1-\frac{C}{\sqrt{k}}) = \frac{C}{\sqrt{k}}
\end{align*}
However, by Lemma \ref{lemma_binomial} $$\lim_{k \to \infty} \Pr(Bin(2k-1, \frac{1}{2}-\frac{C}{2\sqrt{k}}) \geq k) \geq A(C) > 0 = \lim_{k \to \infty} \frac{C}{\sqrt{k}}$$
which contradicts the above. 
\end{myproofT2}

\begin{lemma}\label{lemma_binomial}
Let $Bin(n,p)$ denote the binomial distribution with $n$ trails and probability of success $p$. Let $C > 0$. There exists some $A(C) > 0$ such that
$$ \lim_{k \to \infty} \Pr(Bin(2k-1, \frac{1}{2}-\frac{C}{2\sqrt{k}}) \geq k) \geq A(C)$$
\end{lemma}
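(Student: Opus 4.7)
}

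The plan is to recognize this as a Central Limit Theorem statement: $Bin(2k-1, \frac{1}{2} - \frac{C}{2\sqrt{k}})$ has mean approximately $k - C\sqrt{k}$ and standard deviation approximately $\sqrt{k/2}$, so the event $\{X \ge k\}$ lies a bounded number of standard deviations above the mean, and its probability should converge to a positive constant given by the Gaussian tail. Concretely, let $X_k \sim Bin(2k-1, p_k)$ with $p_k := \frac{1}{2} - \frac{C}{2\sqrt{k}}$, and let $\mu_k := (2k-1) p_k$ and $\sigma_k^2 := (2k-1) p_k (1-p_k)$. I would first carry out the routine expansion
\[
\mu_k = k - C\sqrt{k} - \tfrac{1}{2} + O(k^{-1/2}), \qquad \sigma_k^2 = \tfrac{k}{2} - \tfrac{1}{4} - \tfrac{C^2(2k-1)}{4k} = \tfrac{k}{2} + O(1),
\]
which immediately gives
\[
\frac{k - \mu_k}{\sigma_k} = \frac{C\sqrt{k} + \tfrac{1}{2} + O(k^{-1/2})}{\sqrt{k/2 + O(1)}} \;\xrightarrow[k \to \infty]{}\; C\sqrt{2}.
\]

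The core step is then to apply a CLT to the standardized variable $Z_k := (X_k - \mu_k)/\sigma_k$. Because the success probability $p_k$ itself depends on $k$, the classical de Moivre--Laplace theorem does not literally apply; instead I would invoke the Berry--Esseen theorem for sums of independent (but non-identically-distributed across $k$) Bernoulli $p_k$ random variables. Berry--Esseen gives uniform convergence of the CDF of $Z_k$ to $\Phi$ with error $O(1/\sqrt{k})$ (since $p_k (1-p_k) \to 1/4$ keeps the third absolute moment and variance bounded away from $0$, making the Berry--Esseen constant $\rho_k / (\sigma_k^2 \sqrt{n})$ tend to $0$). Equivalently, the Lindeberg condition for the triangular array $\{(\mathbbm{1}[U_i \le p_k] - p_k)/\sigma_k\}_{i=1}^{2k-1}$ is immediate because the summands are bounded by $1/\sigma_k = O(k^{-1/2}) \to 0$. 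Hence $Z_k \Rightarrow N(0,1)$, and combining with the computation of the threshold yields
\[
\lim_{k\to\infty} \Pr(X_k \ge k) \;=\; \lim_{k\to\infty} \Pr\!\left(Z_k \ge \tfrac{k-\mu_k}{\sigma_k}\right) \;=\; 1 - \Phi(C\sqrt{2}).
\]
Since $C\sqrt{2}$ is finite, $1 - \Phi(C\sqrt{2}) > 0$, so we may set $A(C) := \tfrac{1}{2}(1 - \Phi(C\sqrt{2}))$, proving the lemma.

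The only mildly subtle point, which I regard as the main (minor) obstacle, is the fact that the parameter $p_k$ drifts with $k$, so one cannot cite the textbook CLT as a black box and must instead justify convergence via Berry--Esseen or Lindeberg--Feller for triangular arrays. Since $p_k \to 1/2$, however, all moment conditions are trivially uniform in $k$, so this justification is essentially mechanical.
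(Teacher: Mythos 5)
Your proof is correct, but it takes a genuinely different route from the paper. You standardize $X_k$ and invoke the central limit theorem for a triangular array (justified via Lindeberg or Berry--Esseen, since $p_k$ drifts with $k$), obtaining the exact limit $1-\Phi(C\sqrt{2}) > 0$. The paper instead argues elementarily: it bounds each individual probability $\binom{2k-1}{k+a}\,p_k^{\,k+a}(1-p_k)^{\,k-a}$ from below by $A(C)/\sqrt{k}$ uniformly over integers $a \in [0,\sqrt{k}]$, using a Stirling-type approximation for the binomial coefficient together with a Taylor expansion of the binary entropy function, and then sums over the $\approx\sqrt{k}$ values of $a$. The paper's approach is self-contained modulo the binomial-coefficient approximation and gives a uniform lower bound for all large $k$ rather than just a statement about the limit, whereas yours is cleaner and pins down the limiting constant exactly. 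You correctly flag the one subtlety in the CLT route (the $k$-dependence of $p_k$) and correctly observe that it is harmless since $p_k(1-p_k)$ stays bounded away from $0$; one could equally note that weak convergence $Z_k \Rightarrow N(0,1)$ together with $t_k \to C\sqrt{2}$ and continuity of $\Phi$ already gives $\Pr(Z_k \ge t_k) \to 1-\Phi(C\sqrt{2})$ without needing the Berry--Esseen rate. Either way your argument is sound (the extra factor of $\tfrac{1}{2}$ in your choice of $A(C)$ is unnecessary but harmless).
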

\begin{proof}
Let $a \in [0, \sqrt{k}]$ be an integer. We claim that 

$$\binom{2k-1}{k+a}(\frac{1}{2}-\frac{C}{2\sqrt{k}})^{k+a}(\frac{1}{2}+\frac{C}{2 \sqrt{k}})^{k-a} \geq \frac{A(C)}{\sqrt{k}}$$
for some constant $A(C) > 0$ independent of $a$. Note that 
$$(\frac{1}{2}-\frac{C}{2\sqrt{k}})^{k+a}(\frac{1}{2}+\frac{C}{2 \sqrt{k}})^{k-a} = (\frac{1}{4}-\frac{C^2}{4k})^k \frac{(1-\frac{C}{\sqrt{k}})^a}{(1+\frac{C}{\sqrt{k}})^a} \geq \frac{1}{4^k}(1-\frac{C^2}{k})^k \frac{(1-\frac{C}{\sqrt{k}})^{\sqrt{k}}}{(1+\frac{C}{\sqrt{k}})^{\sqrt{k}}}$$
Note that $(1-\frac{C^2}{k})^k \to \frac{1}{e^{C^2}}$ so it is bounded by a positive constant from below. Similarly  $\frac{(1-\frac{C}{\sqrt{k}})^{\sqrt{k}}}{(1+\frac{C}{\sqrt{k}})^{\sqrt{k}}} \to \frac{1}{e^{2C}}$ is bounded below by a positive constant (depending on $C$). \\
\indent It remains to lower bound $\binom{2k-1}{k+a}\frac{1}{4^k}$. Define $H(x) = x \log(\frac{1}{x}) +(1-x)\log(\frac{1}{1-x})$. We will use the following identity for binomial coefficients which holds for $m \in [1, \frac{n}{2}]$
\begin{equation}\label{binomial_approx}
\binom{n}{m} = \sqrt{\frac{n}{2 \pi m (n-m)}} \exp(n H(\frac{m}{n}))(1 + O(\frac{1}{n} + \frac{1}{m^3} + \frac{1}{(m-n)^3}))
\end{equation}
Substituting $n = 2k-1$ and $m = k-1-a$ in (\ref{binomial_approx}) we get
\begin{equation}\label{binom_approx_here}
\binom{2k-1}{k+a} = \binom{2k-1}{k-a-1} = \sqrt{\frac{2k-1}{2 \pi (k-a-1)(k+a)}} \exp(n H(\frac{1}{2} - \frac{a + \frac{1}{2}}{2k-1}))(1 + O(\frac{1}{k}))
\end{equation}
Note that since $a \in [0,\sqrt{k}]$ we have $\sqrt{\frac{2k-1}{2 \pi (k-a-1)(k+a)}} \geq \frac{D}{\sqrt{k}}$ for some constant $D > 0$ and $1+ O(\frac{1}{k}) > \frac{1}{2}$. By Taylor's expansion we have that 
\begin{equation}\label{taylor}
H(x + \epsilon) = H(x) + \epsilon \log(\frac{1-x}{x}) - \frac{\epsilon^2}{2x(1-x)} + O(\epsilon^3)
\end{equation}
Substituting $x = \frac{1}{2}$ and $\epsilon = -\frac{a+\frac{1}{2}}{2k-1}$ in (\ref{taylor}) we get
$$(2k-1)H(\frac{1}{2} - \frac{a + \frac{1}{2}}{2k-1}) = (2k-1)H(\frac{1}{2}) + \frac{2(a+ \frac{1}{2})^2}{2k-1} + O(\frac{(a + \frac{1}{2})^3}{(2k-1)^2})$$
Note that $H(\frac{1}{2}) = log(2)$ and since $a \leq \sqrt{k}$ the lower order terms are bounded from below by a (possibly negative) constant $F$. Thus, $$(2k-1)H(\frac{1}{2} - \frac{a + \frac{1}{2}}{2k-1}) \geq (2k-1)log(2) + F$$ for some constant $F$. Substituting the above inequalities in (\ref{binom_approx_here}) we get
$$\binom{2k-1}{k+a} \frac{1}{4^k} \geq \frac{D}{2\sqrt{k}}\exp((2k-1)log(2) + F)\frac{1}{4^k} = \frac{D'}{\sqrt{k}}$$
for some constant $D' > 0$. Combining this with the lower bounds in the beginning we get that for all $a \in [0, \sqrt{k}]$ we have
\begin{equation}\label{ineq_tool}
\binom{2k-1}{k+a}(\frac{1}{2}-\frac{C}{2\sqrt{k}})^{k+a}(\frac{1}{2}+\frac{C}{2 \sqrt{k}})^{k-a} \geq \frac{A(C)}{\sqrt{k}}
\end{equation}
for some $A(C)> 0$. Using this inequality we obtain 
\begin{align*}
\Pr(Bin(2k-1, \frac{1}{2}-\frac{C}{2\sqrt{k}}) \geq k) &\geq \sum_{a \in [0, \sqrt{k}]}\binom{2k-1}{k+a}(\frac{1}{2}-\frac{C}{2\sqrt{k}})^{k+a}(\frac{1}{2}+\frac{C}{2 \sqrt{k}})^{k-a} \text{ (by (\ref{ineq_tool}))}\\
&\geq \sqrt{k}\frac{A(C)}{\sqrt{k}} = A(C)
\end{align*}
as desired. This finishes the proof of the lemma.
\end{proof}

\newenvironment{myproofT3}{\paragraph{Proof of Theorem \ref{partition_thoerem}}}{\hfill $\square$}

\begin{myproofT3}
Assume the contrary, i.e. there exists such an OCRS based on a partition matroid that is $(b,c)$-selectable for some $b,c$ satisfying $bc = 1-O(\frac{1}{\sqrt{k}})$. Let the \textit{offline} adversary commit the natural order $e_1, \ldots, e_n$. Suppose that $n = 2k$ and consider the instance $x_i = \frac{b}{2}$ for all $i$. Let the partition matroid for this $x$ be given by $\{(n_i, k_i, S_i)\}_{i=1}^s$. The OCRS greedily selects active elements with only constraint that it selects at most $k_i$ elements in $S_i$. Since $\sum_{i=1}^{s}k_i = k$ and $\sum_{i=1} n_i = 2k$, there exists some $i$ such that $n_i \geq 2k_i$. Let $e_j$ be the last element in $S_i$ according to the order above (i.e. with largest index $j$). Then if $e_{j}$ is selected, it must be the case that at most $k_{i}-1$ of the other elements of $S_i$ have been selected. Since the OCRS is greedy there must be at most $k_i-1$ active elements amongst them, because otherwise the OCRS will select the first $k_i$ active elements in $S_i$ before $e_j$ is revealed. The number of these active elements is distributed as $Bin(n_i-1, \frac{b}{2})$. Thus, we have that 
$$\Pr(Bin(n_i-1, \frac{b}{2}) \leq k_i-1) \geq \Pr(\text{$e_{j}$ is selected}|\text{$e_{j}$ is active}) \geq c$$
By our assumption we have that
$$b\Pr(Bin(n_i-1, \frac{b}{2}) \leq k_i-1) \geq bc \geq 1-\frac{C}{\sqrt{k}}$$
which implies that $b \geq 1- \frac{C}{\sqrt{k}} \geq 1- \frac{C}{\sqrt{k_i}}$
and 
$$\Pr(Bin(n_i-1, \frac{b}{2}) \leq k_i-1) \geq 1- \frac{C}{\sqrt{k}}$$
Subtracting one from both sides we obtain
$$\Pr(Bin(n_i-1, \frac{b}{2}) \geq k_i) \leq \frac{C}{\sqrt{k}}$$
Note that since $b \geq 1-\frac{C}{\sqrt{k_i}}$ and $n_i \geq 2k_i$ we get the following chain of inequalities 
\begin{align*}
\Pr(Bin(2k_i-1, \frac{1}{2}-\frac{C}{2\sqrt{k_i}}) \geq k_i) &\leq \Pr(Bin(2k_i-1, \frac{b}{2}) \geq k_i) \\
&\leq \Pr(Bin(n_i-1, \frac{b}{2}) \geq k_i) \leq \frac{C}{\sqrt{k}}
\end{align*}
Since $\frac{C}{\sqrt{k}} \leq \frac{C}{\sqrt{k_i}}$, if $k_i$ is sufficiently large, the above inequality will stop being true due to Lemma \ref{lemma_binomial}. Thus, there exists some $M$ such that $k_i < M$. Then, we have that 
$$\frac{C}{\sqrt{k}} \geq \Pr(Bin(2k_i-1, \frac{b}{2}) \geq k_i) \geq (\frac{b}{2})^{2k_i-1} \geq (\frac{b}{2})^{2M} \geq (\frac{1}{2} - \frac{C}{2\sqrt{k}})^{2M}$$
which also fails for large $k$ because the LHS converges to $0$ while the RHS stays bounded from below by $\frac{1}{4^{2M}}$. Thus we obtained a contradiction.  
\end{myproofT3}

\newenvironment{myproofT4}{\paragraph{Proof of Theorem \ref{sqrt_lgk_ocrs}}}{\hfill $\square$}

\begin{myproofT4}
Consider the naive greedy OCRS, which greedily selects active elements until it has selected $k$ elements, and suppose the adversary is \textit{almighty}. Consider $b,c$, which will be chosen later. Let $x \in b \cdot P_k$ be the input and let's suppose that $e_i \in N$ is active. Notice that if there are at most $k-1$ active elements in $R(x) \setminus \{e_i\}$, then $e_i$ will certainly be selected. Thus, we have that 
$$\Pr(e_i \text{ is selected}| e_i \text{ is active}) \geq \Pr(|R(x) \setminus \{e_i\}| \leq k-1)$$
Note that 
$$\Pr(|R(x) \setminus \{e_i\}| \leq k-1) =1 -\Pr(|R(x) \setminus \{e_i\}| \geq k)$$
Observe that $|R(x) \setminus \{e_i\}|$ is a sum of independent $\{0,1\}$ random variable corresponding to the indicators of whether each of $e_j \neq e_i$ is active. Thus, $$\mu = \mathbb{E}[|R(x) \setminus \{e_i\}|] = \sum_{j \neq i} x_j \leq bk$$
where the last inequality is because $x \in b \cdot P_k$. Thus, by the multiplicative Chernoff bound (Theorem \ref{concentration_ineq}), 
$$\Pr(|R(x) \setminus \{e_i\}| \geq k) = \Pr(|R(x) \setminus \{e_i\}| \geq (1+\frac{k}{\mu}-1)\mu) \leq e^{-\frac{(\frac{k}{\mu}-1)^2 \mu}{2 + \frac{k}{\mu}-1}} = e^{-\frac{(k-\mu)^2}{\mu+k}}$$
Note that $\frac{(k-\mu)^2}{\mu+k}$ is decreasing for $\mu < k$, and since $\mu < bk < k$ by above, we get 
$$\Pr(|R(x) \setminus \{e_i\}| \geq k) \leq e^{-\frac{(k-\mu)^2}{\mu+k}} \leq e^{-\frac{(k-bk)^2}{bk+k}} = e^{-\frac{k(b-1)^2}{b+1}} \leq e^{-\frac{k(b-1)^2}{2}}$$
Thus combining the above inequalities we get 
$$\Pr(e_i \text{ is selected}| e_i \text{ is active}) \geq 1- e^{-\frac{(b-1)^2k}{2}} = c$$
and thus $bc = b(1- e^{-\frac{(b-1)^2k}{2}})$. Taking $b = 1-\sqrt{\frac{2 \log(k)}{k}}$ we obtain
$$bc = (1-\sqrt{\frac{2 \log(k)}{k}})(1-\frac{1}{k}) = 1-O(\sqrt{\frac{\log(k)}{k}})$$
as desired. 

\end{myproofT4}

\subsection{Lemma about martingales from Section \ref{simple_ocrs_section}}
\begin{lemma}\label{lemma_martingale_bounds}
Let $\{Q_i\}_{i \in \mathbb{N}_0}$ be discrete martingale with $Q_0 =0$ and such that there exists $K > 0$ such that  $|Q_{i+1}-Q_{i}| \leq K$ a.s. for all $i \geq 0$. Let $n \in \mathbb{N}$, $a > 0$, and $b < 0$. Also let 
$M_{n} = \max_{1 \leq i \leq n} Q_i$.
 Then, 
$$\Pr( M_n < a, Q_n \leq b) \leq \frac{a+K}{-b}$$
\end{lemma}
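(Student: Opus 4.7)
The plan is to apply the Optional Stopping Theorem to a stopping time that catches the martingale as soon as it would exceed $a$, and then case-split on the resulting terminal value.

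First, I would define the stopping time $\tau = \min\{i \in [0,n] : Q_i \geq a\}$, with the convention $\tau = n$ if no such $i$ exists. Since $\tau \leq n$ almost surely, $\tau$ is a bounded stopping time, and the Optional Stopping Theorem gives $\mathbb{E}[Q_\tau] = Q_0 = 0$. The next step is to control $Q_\tau$ on each of three disjoint events that partition the sample space: (i) $\{M_n \geq a\}$, (ii) $\{M_n < a,\ Q_n > b\}$, and (iii) $\{M_n < a,\ Q_n \leq b\}$.

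On event (i), $\tau$ is the first index where $Q_\tau \geq a$; since $Q_{\tau-1} < a$ (with $Q_0 = 0 < a$ if $\tau = 0$, though this is impossible as $Q_0 = 0 < a$), the bounded-increment hypothesis gives $Q_\tau < a + K$. On events (ii) and (iii), $M_n < a$ forces $\tau = n$, so $Q_\tau = Q_n$; this is bounded by $a$ on (ii) and by $b$ on (iii). Writing $p = \Pr(M_n < a,\ Q_n \leq b)$ and letting $r, q$ denote the probabilities of events (i), (ii) respectively (so $p + q + r = 1$), the OST identity yields
$$0 = \mathbb{E}[Q_\tau] \leq (a+K)\, r + a\, q + b\, p.$$

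The last step is pure algebra. Using $a \leq a+K$ and $p + q + r = 1$, the right-hand side is at most $(a+K)(1-p) + bp$, so $-bp \leq (a+K)(1-p) \leq a+K$. Since $b < 0$, dividing by $-b > 0$ gives $p \leq \frac{a+K}{-b}$, which is exactly the desired bound. There is no substantive obstacle here: the only delicate point is verifying that $Q_\tau \in [a,\, a+K)$ on the overshoot event, which uses only the a.s. bound $|Q_{i+1} - Q_i| \leq K$ together with the definition of $\tau$ as the first crossing.
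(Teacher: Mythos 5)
Your proof is correct, and it takes a genuinely different route from the paper's. The paper defines a \emph{two-sided} stopping time $T = \inf\{i \geq 0 : Q_i \geq a \text{ or } Q_i \leq b\}$, which can in principle be infinite, and applies the Optional Stopping Theorem under the boundedness-of-the-stopped-process condition; it then splits $\mathbb{E}[Q_T] = 0$ into contributions from $\{Q_T \geq a\}$ and $\{Q_T \leq b\}$ to get $(a+K)\Pr(Q_T \geq a) \geq -b\,\Pr(Q_T \leq b)$, and finally observes that $\{M_n < a,\ Q_n \leq b\}$ forces $T \leq n$ with $Q_T \leq b$. You instead use a \emph{one-sided} stopping time $\tau$ capped at $n$, which is bounded by construction, and partition the sample space into three events rather than two. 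Your version only needs the most elementary form of OST (bounded stopping time), and it cleanly sidesteps a subtlety in the paper's argument: if $\Pr(T = \infty) > 0$, then $Q_T$ can land strictly between $b$ and $a$ (via martingale convergence), so the paper's decomposition $\mathbb{E}[Q_T] = \mathbb{E}[Q_T\mathbbm{1}\{Q_T \geq a\}] + \mathbb{E}[Q_T\mathbbm{1}\{Q_T \leq b\}]$ is not literally an identity as written. (This is harmless in the paper's application, where the random-walk structure guarantees $T < \infty$ a.s., but your argument is correct for any martingale with bounded increments.) Both approaches yield the same constant $\frac{a+K}{-b}$.
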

\begin{proof}
Define the stopping time 
$$T = \inf\{i \geq 0: Q_i \geq a \text{ or } Q_i \leq b \}$$
Notice that $T$ is a stopping time adapted to the filtration of the martingale $Q$. Also note that since the martingale increments are bounded by $K$, we have $|Q_{\min(i,T)}| \leq \max(a+K, |b| + K) < \infty$. Therefore, by the Optional Stopping Theorem (Theorem \ref{ost}), $Q_{T}$ is an almost surely well-defined random variable and
$$\mathbb{E}[Q_{T}] = \mathbb{E}[Q_0] = 0$$
Therefore, we have that
\begin{align*}
    0 = \mathbb{E}[Q_{T}] &= \mathbb{E}[Q_{T} \mathbbm{1}\{Q_T \geq a\}+Q_{T} \mathbbm{1}\{Q_T \leq b\}]\\
    \implies \mathbb{E}[Q_{T} \mathbbm{1}\{Q_T \geq a\}] &= \mathbb{E}[-Q_{T} \mathbbm{1}\{Q_T \leq b\}] 
\end{align*}
Notice that since the martingale has bounded increments by $K$, we have 
\begin{equation}\label{1}
Q_{T} \mathbbm{1}\{Q_T \geq a\} \leq (a+K)\mathbbm{1}\{Q_T \geq a\}
\end{equation}
and
\begin{equation}\label{2}
-Q_{T} \mathbbm{1}\{Q_T \leq b\} \geq -b \mathbbm{1}\{Q_T \leq b\}
\end{equation}
By (\ref{1}) and (\ref{2}), we get 
$$(a+K)\Pr(Q_T \geq a) \geq E[Q_{T} \mathbbm{1}\{Q_T \geq a\}] =  E[-Q_{T} \mathbbm{1}\{Q_T \leq b\}]  \geq -b \Pr(Q_T \leq b)$$
Therefore, since $-b > 0$ and $\Pr(Q_T \geq a) \leq 1$, we get 
$$\Pr(Q_T \leq b) \leq \frac{a+K}{-b}\Pr(Q_T \geq a) \leq \frac{a+K}{-b}$$
Now, notice that if the event $\{ M_n < a, Q_n \leq b\}$ holds then it must the case that $Q$ has reached $b$ before $a$ and this has happened at most at step $n$ i.e. $S_{T} \leq b$ and $T \leq n$. In particular, this implies that $S_{T} \leq b$. Therefore, 
$$P( M_n< a, Q_n \leq b) \leq \Pr(Q_T \leq b) \leq \frac{a+K}{-b}$$
as desired. 
\end{proof}
\subsection{Omitted proofs from Section \ref{opt_guarantee_against_almigthy}}\label{appendix_opt_upper_bound}

\newenvironment{myproofPbij}{\paragraph{Proof of Proposition \ref{prop_bijection}}}{\hfill $\square$}

\begin{myproofPbij}
For the first statement, it is enough to show that $(\pi_{\sigma})_{\sigma^{-1}} = \pi$ for all $\pi, \sigma$. By Definition \ref{permutation_ocrs_def}, when $(\pi_{\sigma})_{\sigma^{-1}}$ needs to take an action based on a particular history $\overline{B}$, it takes the action $\pi_{\sigma}$ would take on the history $\overline{B'}$ obtained from $\overline{B}$ after replacing every element in $a$ by $\sigma(a)$ (this history could also contain the current revealed element which the OCRS hasn't selected / discarded yet). Further, the action $\pi_{\sigma}$ takes on history $\overline{B'}$ is obtained by the action $\pi$ would take on history $\overline{B''}$ obtained from $\overline{B'}$ by replacing every element $a'$ by $\sigma^{-1}(a')$. Since $\sigma^{-1}(\sigma(a)) = a$, we know that $\overline{B''}$ is the same history as $\overline{B}$. Thus, $(\pi_{\sigma})_{\sigma^{-1}}$ will take the same action as $\pi$ when using the same history. This implies the first statement. \\
\indent For the second statement, it is enough to show that $(\mathcal{A}_{\sigma})_{\sigma^{-1}} = \mathcal{A}$. By Definition \ref{permutation_adv_def}, on active elements $A$ and against OCRS $\pi$, $(\mathcal{A}_{\sigma})_{\sigma^{-1}}$ queries $\mathcal{A}_{\sigma}$ on active elements $\sigma(A)$ and OCRS $\pi_{\sigma}$. Upon this $\mathcal{A}_{\sigma}$ queries $\mathcal{A}$ on active elements $\sigma^{-1}(\sigma(A)) = A$ against $(\pi_{\sigma})_{\sigma^{-1}} = \pi$ (by earlier). If $a$ is the $i$-th element of the order $\mathcal{A}$ reveals against $\pi$, by above $\sigma^{-1}(\sigma(a)) = a$ is the $i$-th element $(\mathcal{A}_{\sigma})_{\sigma^{-1}}$ reveals against $\pi$. This finishes the proof. 
\end{myproofPbij}

\section{Tools from probability theory}
\begin{theorem}\label{concentration_ineq} (Multiplicative Chernoff bound)
Suppose $X_1, \ldots, X_n$ are independent random variable taking values in $\{0,1\}$. Let $X$ denote their sum, and let $\mu = E[X]$. Then for any $\delta > 0$. We have 
$$P[X \geq (1+\delta)\mu] \leq e^{-\frac{\delta^2\mu}{2+\delta}}$$
\end{theorem}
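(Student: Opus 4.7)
The plan is to follow the standard Chernoff--Cram\'er approach and then reduce to the stated form via an elementary analytic inequality. First, for any $t > 0$, apply Markov's inequality to $e^{tX}$:
\[
\Pr[X \geq (1+\delta)\mu] \;=\; \Pr\bigl[e^{tX} \geq e^{t(1+\delta)\mu}\bigr] \;\leq\; e^{-t(1+\delta)\mu}\,\mathbb{E}[e^{tX}].
\]
By independence, $\mathbb{E}[e^{tX}] = \prod_{i=1}^n \mathbb{E}[e^{tX_i}]$. For $X_i \in \{0,1\}$ with $p_i := \Pr[X_i = 1]$, direct computation gives $\mathbb{E}[e^{tX_i}] = 1 + p_i(e^t - 1) \leq \exp(p_i(e^t - 1))$, using $1+x \leq e^x$. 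Multiplying these bounds yields $\mathbb{E}[e^{tX}] \leq \exp(\mu(e^t-1))$, and hence
\[
\Pr[X \geq (1+\delta)\mu] \;\leq\; \exp\bigl(\mu(e^t - 1) - t(1+\delta)\mu\bigr).
\]

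Next, I would optimize over $t$ by setting $t = \ln(1+\delta) > 0$ (valid since $\delta > 0$), which yields the classical Chernoff bound
\[
\Pr[X \geq (1+\delta)\mu] \;\leq\; \left(\frac{e^{\delta}}{(1+\delta)^{1+\delta}}\right)^{\!\mu} \;=\; \exp\Bigl(\mu\bigl[\delta - (1+\delta)\ln(1+\delta)\bigr]\Bigr).
\]
It therefore suffices to establish the analytic inequality
\[
(1+\delta)\ln(1+\delta) - \delta \;\geq\; \frac{\delta^2}{2+\delta} \qquad \text{for all } \delta > 0,
\]
since combining this with the previous display gives precisely $\Pr[X \geq (1+\delta)\mu] \leq \exp(-\delta^2 \mu / (2+\delta))$.

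This last inequality is where the real work lies, and I expect it to be the main obstacle (purely calculus-flavored, no probability). I would define $f(\delta) := (1+\delta)\ln(1+\delta) - \delta - \delta^2/(2+\delta)$ and verify (i) $f(0) = 0$; (ii) $f'(0) = 0$; and (iii) $f''(\delta) \geq 0$ for all $\delta \geq 0$. Differentiating and simplifying (using the rewriting $\delta^2/(2+\delta) = \delta - 2 + 4/(2+\delta)$) gives $f'(\delta) = \ln(1+\delta) - \delta(\delta+4)/(2+\delta)^2$, and a second differentiation reduces $f''(\delta) \geq 0$ to an algebraic inequality in $\delta$ that clears denominators into a nonnegative polynomial. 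Conditions (i)--(iii) together imply $f' \geq 0$ on $[0,\infty)$, hence $f \geq 0$, completing the proof. An alternative is to substitute power series expansions $(1+\delta)\ln(1+\delta) - \delta = \sum_{k \geq 2} (-1)^k \delta^k / (k(k-1))$ and $\delta^2/(2+\delta) = \sum_{k \geq 2} (-1)^k \delta^k / 2^{k-1}$ and compare coefficient by coefficient, which avoids the second-derivative computation but requires a small grouping argument.
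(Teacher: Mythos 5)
The paper states this theorem in its appendix on ``Tools from probability theory'' without supplying a proof, so there is no in-paper argument to compare against; it is cited as standard.

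Your derivation is the classical Chernoff argument and it is correct. Markov's inequality on $e^{tX}$, independence plus $1+x\leq e^x$ to bound the moment generating function by $\exp(\mu(e^t-1))$, and the choice $t=\ln(1+\delta)$ give the standard bound $\exp\bigl(\mu[\delta-(1+\delta)\ln(1+\delta)]\bigr)$, after which everything reduces to the analytic inequality
$(1+\delta)\ln(1+\delta)-\delta\geq \delta^2/(2+\delta)$ for $\delta>0$. Your convexity plan for this step checks out: with $f(\delta)=(1+\delta)\ln(1+\delta)-\delta-\delta^2/(2+\delta)$ one gets $f'(\delta)=\ln(1+\delta)-\delta(\delta+4)/(2+\delta)^2$ and
\[
f''(\delta)=\frac{1}{1+\delta}-\frac{8}{(2+\delta)^3},
\]
and $(2+\delta)^3-8(1+\delta)=\delta(\delta^2+6\delta+4)\geq 0$, so $f''\geq 0$ on $[0,\infty)$; together with $f(0)=f'(0)=0$ this yields $f\geq 0$. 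One small caveat on the alternative you float at the end: the power-series comparison is not directly valid for all $\delta>0$, since the Taylor series of $\ln(1+\delta)$ converges only for $|\delta|\leq 1$ and the geometric expansion of $1/(2+\delta)$ only for $|\delta|<2$, so it would need a separate argument for large $\delta$. The derivative-based proof is the one to keep, and it establishes the claim in full.
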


\begin{theorem}\label{ost} (Optional Stopping Theorem)\\
Let $Q = \{Q_{i}\}_{i \in \mathbb{N}_0}$ be a discrete martingale and $T$ a stopping time. Assume that one of the following three conditions holds:\\
\indent (i) $T$ is almost surely bounded i.e. $T \leq C$ for some constant $C$.\\
\indent (ii) $\mathbb{E}[T] < \infty$ and there exists a constant $C$ such that $\mathbb{E}[|Q_{i+1}-Q_{i}|| \mathcal{F}_i] \leq C$ almost surely.\\
\indent (iii) There exists a constant $C$ such that $|Q_{\min(i,T)}| \leq C$ for all $i$.\\
Then $Q_{T}$ is an almost surely well defined random variable and $\mathbb{E}[Q_T] = E[Q_0]$
\end{theorem}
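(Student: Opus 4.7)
The plan is to reduce all three cases to the identity $\mathbb{E}[Q_{T \wedge n}] = \mathbb{E}[Q_0]$ for each finite $n$, and then pass to the limit $n \to \infty$ using the hypothesis appropriate to each case. The starting point, common to all three parts, is the \emph{stopped martingale lemma}: the process $Q^T_n := Q_{T \wedge n}$ is itself a martingale with respect to $\{\mathcal{F}_n\}$. This is because $Q^T_{n+1} - Q^T_n = (Q_{n+1} - Q_n)\mathbbm{1}\{T > n\}$, and since $\mathbbm{1}\{T > n\}$ is $\mathcal{F}_n$-measurable, the martingale property of $Q$ gives $\mathbb{E}[Q^T_{n+1} \mid \mathcal{F}_n] = Q^T_n$. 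Taking expectations yields $\mathbb{E}[Q_{T \wedge n}] = \mathbb{E}[Q_0]$ for every finite $n$.

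Case (i) is then immediate: taking $n = C$, we have $T \wedge C = T$ almost surely, so $\mathbb{E}[Q_T] = \mathbb{E}[Q_{T \wedge C}] = \mathbb{E}[Q_0]$. Case (iii) follows by the Dominated Convergence Theorem applied to the sequence $Q_{T \wedge n}$: it is uniformly bounded in absolute value by $C$ and, using $T < \infty$ almost surely (which is standard and implicit here), converges almost surely to $Q_T$. DCT then justifies exchanging limit and expectation, $\mathbb{E}[Q_T] = \lim_n \mathbb{E}[Q_{T \wedge n}] = \mathbb{E}[Q_0]$, and simultaneously shows $Q_T$ is a.s.\ well-defined and integrable.

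The main obstacle is case (ii), where there is no uniform bound on $|Q_{T \wedge n}|$ and one must produce an integrable dominator by hand. I would write
\[
Q_{T \wedge n} - Q_0 \;=\; \sum_{i=0}^{n-1} (Q_{i+1} - Q_i)\,\mathbbm{1}\{T > i\},
\]
which is dominated in absolute value by $W := \sum_{i=0}^{T-1} |Q_{i+1} - Q_i|$. Using monotone convergence to move the expectation inside the sum, the tower property, and the fact that $\{T > i\} \in \mathcal{F}_i$,
\[
\mathbb{E}[W] \;=\; \sum_{i=0}^{\infty} \mathbb{E}\!\left[\mathbbm{1}\{T > i\}\,\mathbb{E}\!\left[|Q_{i+1}-Q_i| \mid \mathcal{F}_i\right]\right] \;\leq\; C \sum_{i=0}^{\infty} \Pr(T > i) \;=\; C\cdot \mathbb{E}[T] \;<\; \infty.
\]
With $|Q_0| + W$ as an integrable dominator for the entire sequence $\{Q_{T\wedge n}\}$, Dominated Convergence applies to $Q_{T \wedge n} \to Q_T$ (a.s., since $\mathbb{E}[T] < \infty$ implies $T < \infty$ a.s.) and yields $\mathbb{E}[Q_T] = \mathbb{E}[Q_0]$, again giving integrability of $Q_T$ as a byproduct.

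I expect the integrability calculation in case (ii) to be the only delicate step; cases (i) and (iii) are essentially one-line consequences of the stopped-martingale identity, and the only conceptual ingredients beyond that identity are the standard convergence theorems. One small caveat worth flagging in the write-up is that case (iii) as stated does not literally include the assumption $T < \infty$ a.s., so I would either note this as a standing convention or remark that the conclusion should be interpreted on the event $\{T < \infty\}$.
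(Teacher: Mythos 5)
The paper lists this Optional Stopping Theorem in its ``Tools from probability theory'' appendix without proof, so there is no paper argument to compare against; what you have written is the standard textbook proof and it is correct. The stopped-process identity $\mathbb{E}[Q_{T\wedge n}]=\mathbb{E}[Q_0]$ obtained from the martingale transform $(Q_{n+1}-Q_n)\mathbbm{1}\{T>n\}$, the immediate finish for (i), the construction of the integrable dominator $|Q_0|+\sum_{i<T}|Q_{i+1}-Q_i|$ with $\mathbb{E}\big[\sum_{i<T}|Q_{i+1}-Q_i|\big]\le C\,\mathbb{E}[T]$ via the tower property for (ii), and dominated convergence for (iii) are all exactly right. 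Your caveat on case (iii) is also well-placed: as stated, $|Q_{T\wedge i}|\le C$ does not by itself force $T<\infty$ a.s., so one either adds that assumption or notes that the bounded martingale $Q_{T\wedge n}$ converges a.s.\ by the martingale convergence theorem and takes $Q_T$ to be that limit on $\{T=\infty\}$; under either convention the conclusion holds, and in the paper's one invocation (Lemma~\ref{lemma_martingale_bounds}) the relevant event $\{Q_T\le b\}$ only concerns paths on which $T$ is finite anyway.
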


\begin{proposition}
\label{anti_conc_binomial} (Anti-concentration for Binomial distribution, Prop 7.3.2, \cite{mv08})
For $\frac{n}{2} \leq k' \leq \frac{5n}{8}$ and even $n$ we have 
$$\Pr(Bin(n,\frac{1}{2}) \geq k') \geq \frac{1}{15} \exp \Big(-16n(\frac{1}{2}-\frac{k'}{n})^2 \Big)$$
\end{proposition}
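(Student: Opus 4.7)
}

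The plan is to re-parameterize by writing $k' = n/2 + m$ with $m \in [0, n/8]$, so that the goal becomes
$$ \Pr(S_n \geq n/2 + m) \geq \tfrac{1}{15}\exp(-16 m^2/n) $$
where $S_n \sim \mathrm{Bin}(n,1/2)$. A single-term lower bound $\binom{n}{n/2+m}2^{-n}$ via Stirling only recovers a $\Theta(1/\sqrt{n})$ prefactor, so the key idea is to sum pointwise Stirling bounds over a carefully chosen window of $\Theta(\sqrt n)$ consecutive pmf values, which upgrades the $1/\sqrt n$ prefactor into a constant.

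First, I would apply Stirling in the Robbins form $n! = \sqrt{2\pi n}(n/e)^n e^{\theta_n/12n}$ with $|\theta_n|\le 1$ to obtain, for every integer $r$ with $0 \leq r \leq n/8$, a lower bound of the form
$$ \binom{n}{n/2+r} 2^{-n} \geq \frac{c_0}{\sqrt{n}}\,\exp\!\bigl(-n\, D(\tfrac12 + r/n \,\|\, \tfrac12)\bigr), $$
where $D$ is the KL divergence in nats and $c_0$ is an explicit constant close to $\sqrt{2/\pi}$ (the exp error and the $k(n-k)$-terms contribute only $1+o(1)$ factors on the indicated range). A second-order Taylor expansion of $D(\cdot\,\|\,1/2)$ around $1/2$ with remainder controlled by $D''(\xi)=1/(\xi(1-\xi))\leq 64/15$ on $\xi\in[1/2,5/8]$ yields $n\, D(\tfrac12+r/n\,\|\,\tfrac12)\leq \tfrac{32}{15}\, r^2/n$.

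Second, I would sum these single-point bounds over $r = m, m+1,\dots, m+L-1$ for an integer $L = \Theta(\sqrt n)$ to be tuned. Using the decoupling inequality $(m+j)^2 \leq (1+\varepsilon)m^2 + (1+1/\varepsilon)j^2$ for a parameter $\varepsilon>0$, each term is at least
$$ \frac{c_0}{\sqrt n}\exp\!\Bigl(-\tfrac{32}{15}(1+\varepsilon)\,m^2/n\Bigr)\cdot\exp\!\Bigl(-\tfrac{32}{15}(1+1/\varepsilon)\,L^2/n\Bigr), $$
so summing $L$ of them gives
$$ \Pr(S_n \geq n/2+m) \ \geq\ \frac{c_0 L}{\sqrt n}\,\exp\!\Bigl(-\tfrac{32}{15}(1+\varepsilon)\,m^2/n\Bigr)\,\exp\!\Bigl(-\tfrac{32}{15}(1+1/\varepsilon)\,L^2/n\Bigr). $$
Finally I would choose $\varepsilon$ so that $\tfrac{32}{15}(1+\varepsilon) \leq 16$ (plenty of room, any $\varepsilon\le 6.5$ works) and then choose $L = \lfloor \alpha \sqrt n\rfloor$ with $\alpha$ small enough that $\tfrac{32}{15}(1+1/\varepsilon)\alpha^2$ is a manageable constant, but large enough that $(c_0 L/\sqrt n)\exp(-\tfrac{32}{15}(1+1/\varepsilon)\alpha^2) \geq 1/15$.

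The main obstacle is purely numerical bookkeeping: both constraints (the prefactor is at least $1/15$, and the exponent is at most $16m^2/n$) must be met simultaneously with a single pair $(\varepsilon,\alpha)$. Because the ``correct'' Gaussian constant in the exponent is $2$ while the statement allows $16$, there is ample slack to absorb the Stirling losses and the window-width penalty, so the trade-off goes through for all sufficiently large $n$; the finitely many small $n$ (which also need even $n$) are checked by hand, which is trivial since the tail is bounded below by a positive constant there anyway.
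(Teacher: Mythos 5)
The paper does not prove this proposition; it cites it from \cite{mv08}, so there is no in-paper argument to compare against. Your plan---Stirling pointwise bounds, a KL Taylor estimate, a sum over a $\Theta(\sqrt n)$ window, and a decoupling of $(m+j)^2$---is the standard route to constant-prefactor binomial anti-concentration, and the numerical slack you point to (the ``correct'' Gaussian constant in the exponent is $2$, the statement allows $16$) is real and makes the bookkeeping forgiving.

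There is, however, one genuine gap as stated. Your Taylor bound $nD(\tfrac12 + r/n\,\|\,\tfrac12) \leq \tfrac{32}{15}\,r^2/n$ relies on $D''(\xi) \le 64/15$ for $\xi \in [1/2, 5/8]$, and is therefore only valid for $r \le n/8$. But you then sum over $r = m, m+1, \dots, m+L-1$ with $L = \Theta(\sqrt n)$, so whenever $m > n/8 - L$ (equivalently, $k'$ within $O(\sqrt n)$ of $5n/8$) some window indices satisfy $r > n/8$ and the bound you invoked does not apply. This is not a small-$n$ issue that can be swept into the finite check you mention: for every $n$ the proposition's hypothesis permits $k'$ all the way up to $5n/8$. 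The fix is cheap and the slack absorbs it, but it must be made: either enlarge the Taylor interval to, say, $\xi \in [1/2, 11/16]$, where still $D''(\xi) \le 256/55 < 4.66$ and $\tfrac12 \cdot \tfrac{256}{55}(1+\varepsilon) \le 16$ has plenty of room, which covers all window indices once $L \le n/16$; or treat $m > n/8 - L$ separately with a single-term bound, noting that there the target $\tfrac{1}{15}\,e^{-16m^2/n}$ is exponentially small in $n$ while $\binom{n}{n/2+m}2^{-n} \ge \tfrac{c_0}{\sqrt n}\,e^{-\frac{32}{15}m^2/n}$ already beats it because $\tfrac{32}{15} < 16$. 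Either patch closes the argument; as written it does not.
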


\end{document}